\documentclass[11pt]{article}

\def\showauthornotes{1}
\def\showdraftbox{0}

\newcommand{\deeksha}{\Authornote{Deeksha}}



\usepackage{amstext,amsfonts, amsmath, amssymb,amsthm, bbm,xspace,nicefrac,graphicx, url,array}
\usepackage{color,enumerate,latexsym,bm,verbatim,textcomp}

\definecolor{ForestGreen}{rgb}{0.1333,0.5451,0.1333}
\definecolor{DarkRed}{rgb}{0.65,0,0}
\definecolor{Red}{rgb}{1,0,0}
 \usepackage[linktocpage=true,colorlinks,
 linkcolor=DarkRed,citecolor=ForestGreen,
 bookmarks,bookmarksopen,bookmarksnumbered]
 {hyperref}
\usepackage{cleveref}

\usepackage[small]{caption}
\usepackage{comment} 
\usepackage{algorithm} 
\usepackage[noend]{algpseudocode}
\usepackage{thmtools}
\usepackage{thm-restate}
\usepackage{fancybox}
\usepackage[shortlabels]{enumitem}
\usepackage{fullpage}
\usepackage{booktabs}
\usepackage{commath}
\usepackage{mdframed}
\usepackage{pdfsync}


\newcommand{\defeq}{\stackrel{\textup{def}}{=}}

\newtheorem{theorem}{Theorem}[section]
\newtheorem{problem}[theorem]{Problem}
\newtheorem{lemma}[theorem]{Lemma}
\newtheorem{corollary}[theorem]{Corollary}

\newtheorem{property}[theorem]{Property}
\newtheorem{proposition}[theorem]{Proposition}
\newtheorem{hypothesis}[theorem]{Hypothesis}

\newtheorem{claim}[theorem]{Claim}

\newtheorem{definition}[theorem]{Definition}


\DeclareMathOperator*{\av}{\mathbbm{E}}




\renewcommand{\norm}[1]{\ensuremath{\left\lVert #1 \right\rVert}}





\newcommand{\marginlabel}[1]%
{\mbox{}\marginpar{\it{\raggedleft\hspace{0pt}#1}}}

\newcommand\poly{\mathrm{poly}}  


\newcommand\calM{\mathcal{M}}

\DeclareMathOperator{\Tr}{Tr}

\definecolor{Mygray}{gray}{0.8}

 \ifcsname ifcommentflag\endcsname\else
  \expandafter\let\csname ifcommentflag\expandafter\endcsname
                  \csname iffalse\endcsname
\fi

\ifnum\showauthornotes=1
\newcommand{\Authornote}[2]{{\sf\small\color{red}{[#1: #2]}}}
\newcommand{\Authoredit}[2]{{\sf\small\color{red}{[#1]}\color{blue}{#2}}}
\newcommand{\Authorcomment}[2]{{\sf \small\color{gray}{[#1: #2]}}}
\newcommand{\Authorfnote}[2]{\footnote{\color{red}{#1: #2}}}
\newcommand{\Authorfixme}[1]{\Authornote{#1}{\textbf{??}}}
\newcommand{\Authormarginmark}[1]{\marginpar{\textcolor{red}{\fbox{
#1:!}}}}
\else
\newcommand{\Authornote}[2]{}
\newcommand{\Authoredit}[2]{}
\newcommand{\Authorcomment}[2]{}
\newcommand{\Authorfnote}[2]{}
\newcommand{\Authorfixme}[1]{}
\newcommand{\Authormarginmark}[1]{}
\fi





\newlength{\pgmtab}  
\setlength{\pgmtab}{1em}  

\let\originalleft\left
\let\originalright\right
\renewcommand{\left}{\mathopen{}\mathclose\bgroup\originalleft}
  \renewcommand{\right}{\aftergroup\egroup\originalright}

\def\defeq{\stackrel{\mathrm{def}}{=}}

\def\dim#1{\mathrm{dim} (#1)}
\def\dimm{\mathrm{dim}}

\newcommand\ff{\boldsymbol{\mathit{f}}}

\newcommand\uu{\boldsymbol{\mathit{u}}}
\newcommand\vv{\boldsymbol{\mathit{v}}}
\newcommand\ww{\boldsymbol{\mathit{w}}}
\newcommand\yy{\boldsymbol{\mathit{y}}}
\newcommand\zz{\boldsymbol{\mathit{z}}}
\newcommand\xx{\boldsymbol{\mathit{x}}}

\renewcommand\AA{\boldsymbol{\mathit{A}}}
\newcommand\BB{\boldsymbol{\mathit{B}}}
\newcommand\CC{\boldsymbol{\mathit{C}}}

\newcommand\II{\boldsymbol{\mathit{I}}}

\newcommand\LL{\boldsymbol{\mathit{L}}}
\newcommand\PP{\boldsymbol{\mathit{P}}}
\newcommand\QQ{\boldsymbol{\mathit{Q}}}

\renewcommand\SS{\boldsymbol{\mathit{S}}}
\newcommand\UU{\boldsymbol{\mathit{U}}}
\newcommand\WW{\boldsymbol{\mathit{W}}}
\newcommand\VV{\boldsymbol{\mathit{V}}}
\newcommand\XX{\boldsymbol{\mathit{X}}}
\newcommand\YY{\boldsymbol{\mathit{Y}}}

\newcommand\Ttil{{\tilde{\mathit{T}}}}
\newcommand\AAtil{\boldsymbol{\widetilde{\mathit{A}}}}
\global\long\def\Phitil{\tilde{\Phi}}

\newcommand\Otil{\widetilde{O}}

 {
	\begin{enumerate}}{\end{enumerate}}



%



\ifnum\showdraftbox=1

\else

\fi

\global\long\def\nnz{\mathrm{nnz}}

\global\long\def\eps{\epsilon}

\newcommand\Span{\mathrm{span}}
\newcommand\Var{\mathrm{Var}}

\usepackage[
backend=biber,
backref=true,
backrefstyle=none,
date=year,
doi=false,
giveninits=true,
hyperref=true,
maxbibnames=10,
sortcites=false,
style=alphabetic,
url=false, 
]{biblatex}
\addbibresource{papers.bib}
\addbibresource{adaptive.bib}

\begin{document}

	\title{Decremental $(1+\epsilon)$-Approximate Maximum Eigenvector:\\Dynamic Power Method}
	\author{Deeksha Adil\thanks{Supported by Dr. Max R\"ossler, the Walter Haefner Foundation and the ETH Z\"urich Foundation}\\Institute for Theoretical Studies\\ETH Zürich\\deeksha.adil@eth-its.ethz.ch \and Thatchaphol Saranurak\thanks{Supported by NSF grant CCF-2238138.}\\Department of Computer Science\\University of Michigan\\thsa@umich.edu}
	\maketitle

	\maketitle

	\pagenumbering{gobble}
	\begin{abstract}
		We present a dynamic algorithm for maintaining $(1+\epsilon)$-approximate maximum eigenvector and eigenvalue of a positive semi-definite matrix $A$ undergoing \emph{decreasing} updates, i.e., updates which may only decrease eigenvalues. Given a vector $v$ updating $A\gets A-vv^{\top}$, our algorithm takes $\tilde{O}(\mathrm{nnz}(v))$ amortized update time, i.e., polylogarithmic per non-zeros in the update vector.
		
		Our technique is based on a novel analysis of the influential power method in the dynamic setting. The two previous sets of techniques have the following drawbacks (1) algebraic techniques can maintain exact solutions but their update time is at least polynomial per non-zeros, and (2) sketching techniques admit polylogarithmic update time but suffer from a crude additive approximation.
		
		Our algorithm exploits an oblivious adversary. 
        Interestingly, we show that any algorithm with polylogarithmic update time per non-zeros that works against an adaptive adversary and satisfies an additional natural property would imply a breakthrough for checking psd-ness of matrices in $\tilde{O}(n^{2})$ time, instead of $O(n^{\omega})$ time.
	\end{abstract}
	
	\newpage
	\pagenumbering{gobble}
	\tableofcontents
	\newpage
	
	\pagenumbering{arabic}

\section{Introduction}

Computing eigenvalues and eigenvectors of a matrix is predominant in several applications, including principle component analysis, clustering in high-dimensional data, semidefinite programming, spectral graph partitioning, and algorithms such as Google's PageRank algorithm. In the era of massive and dynamic datasets, developing algorithms capable of efficiently updating spectral information with changing inputs has become indispensable.

The study of the change in the spectrum of a matrix undergoing updates spans over five decades, with \textcite{golub1973some} providing the first algebraic characterization of the change for positive semi-definite matrices via the {\it secular equation}. This result offered an explicit formula for exactly computing all new eigenvalues when a matrix undergoes a single rank-one update, assuming knowledge of the entire eigen-decomposition of the initial matrix. Subsequently, \textcite{bunch1978rank} showed how to additionally compute eigenvectors explicitly, and handle the case of repeated eigenvalues of the initial matrix. A decade later, \textcite{arbenz1988restricted} extended the works of \cite{golub1973some,bunch1978rank} to compute all new eigenvalues and eigenvectors of a positive semi-definite matrix undergoing an update of the form of a small-rank matrix, or a single batch update. Further works extend these techniques to computing singular values \cite{stange2008efficient}, as well as computing new eigenvalues and eigenvectors when only the partial eigen-decomposition of the initial matrix is known \cite{mitz2019symmetric}. However, all these works require a full eigen-decomposition of the initial matrix and only handle a single update to the matrix. 

Another independent line of work aims at finding a small and sparse matrix that has eigenvalues close to the original matrix. 
A recent work in this direction by \textcite{bhattacharjee2023universal} provides a universal sparsifier $\SS$ for positive semi-definite matrices $\AA$, such that $\SS$ has at most $n/\epsilon^4$ non-zero entries and $\|\AA-\AA\circ\SS\|\leq \epsilon n$. \textcite{swartworth2023optimal} give an alternate technique by using {\it gaussian sketches} with $O(1/\epsilon^2)$ rows to approximate all eigenvalues of $\AA$ to an additive error of $\epsilon\|\AA\|_F$. The algorithms that find sparse matrices approximating the eigenvalues in these works may be extended to handle updates in the initial matrix quickly. However, the approximation of the eigenvalues achieved is quite crude, i.e., at least $\epsilon n$ additive factor. This approximation is also shown to be tight for such techniques.

Now consider the simpler task of only maintaining the maximum eigenvalue and eigenvector of a matrix $\AA$ as it undergoes updates. As we have seen above, known methods based on algebraic techniques require full spectral information before computing the new eigenvalues, and maintaining the entire eigen-decomposition can be slow. Sparsification-based algorithms are fast but only achieve large additive approximations of at least $\epsilon n$, which is not quite satisfactory. Works on streaming PCA, which have a similar goal of maintaining large eigenvectors focus on optimizing the space complexity instead of runtimes \cite{allen2017first}.
Previous works based on dynamically maintaining the matrix inverse can maintain $(1+\eps)$-multiplicative approximations to the maximum eigenvalues of an $n \times n$ symmetric matrix undergoing single-entry updates with an update time of $O(n^{1.447}/\poly(\eps))$ \cite{sankowski2004dynamic}. This was further improved to $O(n^{1.407}/\poly(\eps))$\cite{van2019dynamic}. The update time is even slower for row, column, or rank-one updates. In any case, it takes polynomial time per non-zeros of the updates. This leads to the following natural question:

\begin{center}
    \emph{Is there a dynamic algorithm that can maintain a multiplicative approximation of the maximum eigenvalue of a matrix using polylogarithmic update time per non-zeros in the update?}\par
\end{center}

In this paper, we study the problem of maintaining a $(1+\epsilon)$-multiplicative approximation to the maximum eigenvalue and eigenvector of a positive semi-definite matrix as it undergoes a sequence of rank-one updates that may only decrease the eigenvalues. We note that this is equivalent to finding the minimum eigenvalue and eigenvector of a positive semi-definite matrix undergoing rank-one updates that may only increase the eigenvalues.
We now formally state our problem.
\begin{problem}
	[Decremental Approximate Maximum Eigenvalue and Eigenvector]\label{prob:dyn}We are given a size parameter $n$, an accuracy parameter $\epsilon\in(0,1)$, a psd matrix $\AA_{0}\succeq0$ of size $n\times n$, and an online sequence of vectors $\vv_{1},\vv_{2},\dots,\vv_{T}$ that update $\AA_{t}\gets \AA_{t-1}-\vv_{t}\vv_{t}^{\top}$ with a promise that $\AA_{t}\succeq0$ for all $t$. 
	The goal is to maintain an $\epsilon$-approximate eigenvalue $\lambda_{t}\in\mathbb{R}$ and $\epsilon$-approximate eigenvector $\ww_{t}\in\mathbb{R}^{n}$ of $\AA_{t}$ for all time $t$. That is $\lambda_t$ and unite vector $\ww_t$ satisfying,
	\begin{equation}\label{eq:epsEigvalue}
	\max_{\uu:\textrm{unit}}\uu^{\top}\AA_{t}\uu \ge \lambda_{t} \ge (1-\epsilon)\max_{\uu:\textrm{unit}}\uu^{\top}\AA_{t}\uu,
	\end{equation}
	and
	\begin{equation}\label{eq:epsEigvec}
	\ww_{t}^{\top}\AA_{t}\ww_{t}\ge(1-\epsilon)\max_{\uu:\textrm{unit}}\uu^{\top}\AA_{t}\uu.
	\end{equation}
\end{problem}

\subsection{Our Results}

We give an algorithm for the Decremental Approximate Maximum Eigenvalue and Eigenvector Problem that has an amortized update time of $\approx \Otil(nnz(\vv_t))\leq \Otil(n)$\footnote{$\Otil$ hides polynomials in $\log n$.}. In other words, the total time required by our algorithm over $T$ updates is at most $\Otil(nnz(\AA_0) + \sum_{i=1}^T nnz(\vv_i))$. Observe that our algorithm only requires time that is $\Otil(1) \times$ (time required to read the input) and can handle any number of decremental updates. Our algorithm works against an {\it oblivious} adversary, i.e., the update sequence is fixed from the beginning. This is the first algorithm that can handle a sequence of updates while providing a multiplicative approximation to the eigenvalues and eigenvectors in a total runtime of $\leq \Otil(n^2 + n\cdot T)$ and an amortized update time faster than previous algorithms by a factor of $n^{\Omega(1)}$. Formally, we prove the following:
\begin{theorem}
	\label{thm:upper}There is an algorithm for \Cref{prob:dyn} under a sequence of $T$ decreasing updates, that given parameters $n$, $\AA_0$, and $\epsilon>1/n$ as input, with probability at least $1-1/n$ works against an oblivious adversary in total time,
 \[
 O\left(\frac{\log^{3}n \log^6 \frac{n}{\epsilon}\log \frac{\lambda_{\max}(\AA_0)}{\lambda_{\max}(\AA_T)}}{\epsilon^{4}}\left(nnz(\AA_0) + \sum_{i=1}^T nnz(\vv_i)\right) \right).
 \]
\end{theorem}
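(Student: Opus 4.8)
The plan is to run a \emph{lazy} variant of the classical power method. We maintain one unit vector $\ww$ together with a running copy of its Rayleigh quotient $\mu=\ww^{\top}\AA_t\ww$, we output $(\lambda_t,\ww_t)=(\mu,\ww)$ at every time $t$, and on a typical update we do essentially nothing; only rarely do we spend an entire power method to recompute $\ww$ on the current matrix. Fix $\epsilon'\defeq\epsilon/2$ and let $K\defeq O(\epsilon^{-1}\log(n/\epsilon))$ be the number of power iterations that suffice, from a random Gaussian start, to return a \emph{unit} vector with Rayleigh quotient at least $(1-\epsilon')\lambda_{\max}$ with probability $\ge 1-1/\poly(n/\epsilon)$. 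At $t=0$ we run $K$ iterations on $\AA_0$ to get $\ww$ and set $\mu_{\mathrm{ref}}\gets\mu\gets\ww^{\top}\AA_0\ww$. On update $\vv_t$ (so $\AA_t=\AA_{t-1}-\vv_t\vv_t^{\top}$) we set $\mu\gets\mu-(\vv_t^{\top}\ww)^2$, which preserves the invariant $\mu=\ww^{\top}\AA_t\ww$ in time $O(\nnz(\vv_t))$, and append $\vv_t$ to a stored list; if now $\mu<(1-\epsilon')\mu_{\mathrm{ref}}$ we \emph{recompute}: run $K$ fresh power iterations on $\AA_t=\AA_0-\sum_{i\le t}\vv_i\vv_i^{\top}$ -- we never form $\AA_t$, so one matrix--vector product costs $O(\nnz(\AA_0)+\sum_{i\le t}\nnz(\vv_i))$ -- obtain a new unit $\ww$, and reset $\mu_{\mathrm{ref}}\gets\mu\gets\ww^{\top}\AA_t\ww$.

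Correctness would follow from the monotonicity $\AA_0\succeq\AA_1\succeq\cdots$. Let $t_j\le t$ be the last recompute time and $\ww$ the vector it produced. Since no recompute occurred in $(t_j,t]$ we have $\mu=\ww^{\top}\AA_t\ww\ge(1-\epsilon')\mu_{\mathrm{ref}}$, while the power-method guarantee gives $\mu_{\mathrm{ref}}=\ww^{\top}\AA_{t_j}\ww\ge(1-\epsilon')\lambda_{\max}(\AA_{t_j})\ge(1-\epsilon')\lambda_{\max}(\AA_t)$, using $\AA_t\preceq\AA_{t_j}$; hence $\lambda_{\max}(\AA_t)\ge\ww^{\top}\AA_t\ww=\mu\ge(1-\epsilon')^2\lambda_{\max}(\AA_t)\ge(1-\epsilon)\lambda_{\max}(\AA_t)$, which is exactly \eqref{eq:epsEigvalue}--\eqref{eq:epsEigvec}. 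For the randomized part, each power method draws independent fresh randomness, and because the adversary is oblivious the entire update sequence -- hence every matrix $\AA_{t_j}$ as well as the random time $t_j$, which is a function only of the sequence and of the randomness used \emph{before} the $j$-th recompute -- is independent of the randomness driving the $j$-th power method; conditioning on $t_j$, that power method runs on a fixed matrix and succeeds with probability $\ge1-1/\poly(n/\epsilon)$. Since the number of recomputes is bounded a priori by the argument below (once all power methods succeed), a union bound over the recompute events would make everything hold with probability $\ge1-1/n$.

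It then remains to bound the number $R$ of recomputes, as the total time is $O\big(RK\,(\nnz(\AA_0)+\sum_i\nnz(\vv_i))\big)$ plus $O(\sum_i\nnz(\vv_i))$ for bookkeeping, matching the claim once $R=\mathrm{polylog}(n)\cdot\poly(1/\epsilon)\cdot\big(1+\log\tfrac{\lambda_{\max}(\AA_0)}{\lambda_{\max}(\AA_T)}\big)$. First I would partition the recompute times greedily into maximal \emph{runs} inside which $\lambda_{\max}(\AA_t)$ stays within a factor $2$ (a new run begins the first time $\lambda_{\max}$ drops below half its value at the current run's start); there are at most $1+\log_2\tfrac{\lambda_{\max}(\AA_0)}{\lambda_{\max}(\AA_T)}$ runs. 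The heart of the proof is to show a single run contains only $\mathrm{polylog}(n)\cdot\poly(1/\epsilon)$ recomputes. Each recompute at $t_{j+1}$ is triggered by $\ww_j^{\top}(\AA_{t_j}-\AA_{t_{j+1}})\ww_j\ge\epsilon'\mu_j=\Omega(\epsilon'\lambda)$, where $\lambda$ is the top eigenvalue on the run, i.e.\ the updates in $(t_j,t_{j+1}]$ removed an $\Omega(\epsilon')$ fraction of $\ww_j$'s quadratic mass. I would analyze this by a dichotomy on the spectral profile of $\AA_{t_j}$ near $\lambda$: if there is a non-trivial gap just below $\lambda$, then $K=\Theta(\epsilon^{-1}\log n)$ iterations make $\ww_j$ concentrated on the near-top eigenspace, so removing $\Omega(\epsilon'\lambda)$ of its mass forces $\lambda_{\max}$ itself down by an $\Omega(\epsilon')$ fraction, which consumes the run's factor-$2$ budget and so can happen only $O(1/\epsilon')$ times; if instead the top of $\AA_{t_j}$ is near-degenerate over an $m$-dimensional subspace, then $\ww_j$ is near-uniform on it, each basis direction carries only an $O(1/m)$ share of the removed mass, and a geometric/trace accounting (each such recompute shrinks the effective near-top dimension by a constant factor, equivalently removes an $\Omega(\epsilon'\lambda/m)$ chunk from each of $\Omega(m)$ directions) caps the number of these recomputes per run at $O(\epsilon^{-1}\log n)$. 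I expect the main obstacle to be making this dichotomy robust and quantitative in the regime where the two cases meet -- the power method neither concentrates perfectly nor spreads perfectly -- since the naive bound ``recomputes per run $\le\Tr(\text{near-top part of }\AA)/\Omega(\epsilon'\lambda)$'' only gives $O(n/\epsilon)$ and must be replaced by one that genuinely tracks how the spread of the power-method output follows the spectral profile.
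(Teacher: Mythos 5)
Your algorithmic skeleton is essentially the paper's (a lazy power method whose output is reused until its Rayleigh quotient drops, with a full recompute otherwise), your correctness argument via monotonicity $\AA_t\preceq\AA_{t_j}$ and fresh randomness per recompute is fine, and you correctly isolate the crux: bounding the number of recomputes by $\poly(\log n,1/\epsilon)$ (per factor-$2$ run, or after normalization). But exactly that bound is not established, and the dichotomy you sketch does not survive scrutiny. In the ``gap'' case your conclusion is already wrong when the near-top eigenspace has dimension larger than one: the trigger only tells you that $\Omega(\epsilon\lambda)$ of quadratic mass was removed \emph{along the single direction} $\ww_j$, and an adversary can deplete precisely that direction while leaving the rest of the near-top eigenspace untouched, so $\lambda_{\max}$ need not decrease at all; this is why counting drops of $\lambda_{\max}$ cannot work. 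In the ``near-degenerate'' case, removing a total of $\Omega(\epsilon\lambda)$ spread as $O(\epsilon\lambda/m)$ per direction does not eject any direction from the near-top band, so the claim that ``the effective near-top dimension shrinks by a constant factor'' per recompute is unsupported; as you yourself note, the honest trace accounting gives only $O(n/\epsilon)$ recomputes, which is off by a factor of $n$ from what the theorem needs. Since the entire runtime bound rests on this count, the proposal has a genuine gap at the heart of the proof.

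For comparison, the paper closes this gap not by a gap/degenerate dichotomy but by a finer bookkeeping of \emph{dimensions of eigenvalue level sets}: eigenvalues near the top are partitioned into $O(\log\frac{n}{\epsilon})$ bands of width $\frac{\epsilon}{5\log(n/\epsilon)}\lambda_0$ (\Cref{def:SpaceA}), a band is ``important'' if its dimension is at least an $\frac{\epsilon}{\poly\log(n/\epsilon)}$ fraction of the cumulative dimension above it (\Cref{def:impNu}), and the Progress Lemma (\Cref{lem:EigenspaceChange}) shows that every recompute forces some important band to lose an $\Omega(\epsilon/\log\frac{n}{\epsilon})$ fraction of its dimension (or one dimension if it is small). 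Its proof is where obliviousness is really used: writing $\ww=\sum_i\lambda_i^K\alpha_i\uu_i/\|\cdot\|$ with Gaussian $\alpha_i$ independent of the future updates, one shows via Gaussian concentration (\Cref{lem:GaussianProjD}, \Cref{lem:NotImp}, \Cref{lem:boundLowEV}) that if no important band lost many dimensions then $\ww^{\top}\VV\ww\le 35\epsilon$ for every stored $\ww$, contradicting the recompute trigger. A potential-function argument on the monotone quantities $\Phi_j=\dim{T_{\le j}}$ (\Cref{lem:Monotone}, \Cref{claim:potential decrease}) then bounds the total number of recomputes by $O(\log n\log^5\frac{n}{\epsilon}/\epsilon^2)$ (\Cref{lem:BoundW}), independent of $n$ up to logs; the $\log\frac{\lambda_{\max}(\AA_0)}{\lambda_{\max}(\AA_T)}$ factor comes from a separate normalization/restart reduction (\Cref{lem:Decision}), playing the role of your factor-$2$ runs. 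If you want to salvage your write-up, the missing ingredient is precisely such a multi-scale dimension-based potential together with the concentration statements tying the power-method output's spread to the spectral profile; the place where your two cases ``meet'' is exactly what the important-level machinery is designed to handle.
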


 Our algorithm is a novel adaptation of the classical {\it power method} (see, e.g., \cite{trefethen2022numerical}) 
 to the dynamic setting, along with a new analysis that may be of independent interest. 

 Our work can also be viewed as the first step towards generalizing the dynamic algorithms for solving positive linear programs of \cite{bhattacharya2023dynamic} to solving dynamic positive semi-definite programs. We discuss this connection in detail in \Cref{sec:dyn psd}.

\subsection{Towards Separation between Oblivious and Adaptive adversaries}
 We also explore the possibility of removing the assumption of an oblivious adversary and working against {\it adaptive adversaries}. Recall that update sequences are given by adaptive adversaries when the update sequence can depend on the solution of the algorithm. 
 
 We show that if there is an algorithm for \Cref{prob:dyn} with a total running time of at most $\Otil(n^2)$ such that the output $\ww_t$'s satisfy an additional natural property (which is satisfied by the output of the power method, but \emph{not} by our algorithm), then it contradicts the hardness of a well-known barrier in numerical linear algebra, therefore ruling out such algorithms.
 
 We first state the barrier formally, and then state our result for adaptive adversaries.
Recall that, given a matrix $\AA$, the condition number of $\AA$ is $\frac{\max_{x:\textrm{unit}}\|\AA\xx\|}{\min_{x:\textrm{unit}}\|\AA\xx\|}$.

\begin{problem}[Checking psdness with certificate]\label{prob:factor}Given $\delta\in(0,1)$, parameter $\kappa$, and a symmetric matrix $\AA$ of size $n\times n$ with condition number at most $\kappa$, either 
	\begin{itemize}
		\item Compute a matrix $\XX$ where $\|\AA-\XX\XX^{T}\|\le\delta\min_{\|\xx\|=1}\|\AA\xx\|$, certifying that $\AA$ is a psd matrix, or
		\item Report that $\AA$ is not a psd matrix.
	\end{itemize}
\end{problem}

Recall that $A$ is a psd matrix iff there exists $\XX$ such that $\AA=\XX\XX^{\top}$. The matrix of $\XX$ is called a \emph{vector realization} of $\AA$, and note that $\XX$ is not unique. The problem above asks to compute a (highly accurate) vector realization of $\AA$. Both eigendecomposition and Cholesky decomposition of $\AA$ give a solution for \Cref{prob:factor}.\footnote{Given an eigendecomposition $\AA=\QQ\Lambda \QQ^{\top}$ where $\Lambda$ is a non-negative diagonal matrix and $\QQ$ is an orthogonal matrix, we set $\XX=\QQ\Lambda^{1/2}$. Given a Cholesky decomposition $\AA=\LL\LL^{\top}$ where $\LL$ is a lower triangular matrix, we set $\XX=\LL$.} Banks et al~\cite{banks2022pseudospectral} showed how to compute with high accuracy an eigendecomposition of a psd matrix in $O(n^{\omega+\eta})$ time for any constant $\eta>0$, where $\omega > 2.37$ is the matrix multiplication exponent. Hence, a vector realization of $\AA$ can be found in the same time. Observe that, when $\delta < \kappa$, \Cref{prob:factor} is \emph{at least as hard as certifying that a given matrix $\AA$ is a psd matrix}.
It is a notorious open problem whether certifying the psd-ness of a matrix can be done faster than $o(n^\omega)$ even when the condition number is polynomial in $n$.
Therefore, we view \Cref{prob:factor} as a significant barrier and formalize it as follows.%

\begin{hypothesis}
	[PSDness Checking Barrier]\label{conj:decomp is hard}There is a constant $\eta>0$ such that, every randomized algorithm for solving \Cref{prob:factor} for instances with $\frac{\kappa}{\delta}\leq \poly(n)$, correctly with probability at least $2/3$ requires $n^{2+\eta}$ time. 
\end{hypothesis}

Our negative result states that, assuming  Hypothesis~\ref{conj:decomp is hard}, there is no  algorithm for Problem~\ref{prob:dyn} against adaptive adversaries with sub-polynomial update time
that maintains $\ww_t$ satisfying an additional property stated below.

\begin{property}\label{def:super}\label{prop:assume} For every $t$ let $\uu_i(\AA_t)$ denote the eigenvectors of $\AA_t$ and $\lambda_i(\AA_t)$ denote the eigenvalues. For all $i$ such that $\lambda_i(\AA_t)\leq \lambda_1(\AA_t)/2$,
\[
\left(\ww_t^{\top}\uu_i\left(\AA_{t}\right)\right)^2 \leq \frac{1}{n^2}\cdot \frac{\lambda_i\left(\AA_{t}\right)}{\lambda_1(\AA_t)}.
\]
\end{property}

\begin{theorem}
	\label{thm:lower}Assuming \Cref{conj:decomp is hard}, there is no algorithm for \Cref{prob:dyn} that maintains $\ww_t$'s additionally satisfying \Cref{def:super}, and given parameters $n$ and $\epsilon=\min\left\{1-\frac{1}{n^{o(1)}},\frac{1-\delta}{1+\delta}\right\}$ as input, works against an adaptive adversary in time $n^{o(1)}\cdot \left(nnz(\AA_0) + \sum_{t=1}^T nnz(\vv_i)\right)$. 
\end{theorem}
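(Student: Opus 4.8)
The plan is to show that a dynamic algorithm as hypothesized in the theorem would solve \Cref{prob:factor} in $n^{2+o(1)}$ time on all instances with $\kappa/\delta\le\poly(n)$, contradicting \Cref{conj:decomp is hard}. Given a symmetric $\AA$ with condition number $\le\kappa$ (so $\AA$ is invertible and $\sigma_{\min}(\AA)\ge\|\AA\|_{\mathrm{op}}/\kappa$), the idea is to run the assumed algorithm on the decremental sequence obtained by \emph{peeling off} the approximate top eigenvectors it reports: set $\AA_0\gets\AA$, and for $t=0,1,2,\dots$ read the reported pair $(\lambda^{(t)},\ww^{(t)})$ for the current matrix $\AA_t$ and feed the update $\vv_{t+1}\gets\sqrt{\lambda^{(t)}/3}\;\ww^{(t)}$, so that $\AA_{t+1}=\AA_t-\tfrac{\lambda^{(t)}}{3}\ww^{(t)}\ww^{(t)\top}$. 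After $K$ rounds the columns $\vv_1,\dots,\vv_K$ form a candidate vector realization $\XX$ with $\AA-\XX\XX^\top=\AA_K$. Since the updates depend on the algorithm's outputs this is an \emph{adaptive} adversary, and the analysis will crucially use \Cref{prop:assume} (neither of which holds for the algorithm of \Cref{thm:upper}, so there is no conflict with our upper bound).

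The key structural claim is that, \emph{as long as $\AA$ is psd}, this process stays within the promise of \Cref{prob:dyn} and drives $\AA_K$ to nearly zero quickly. Writing $\ww^{(t)}=\sum_i\alpha_i\uu_i(\AA_t)$ in the eigenbasis of $\AA_t$, \Cref{prop:assume} forces $\alpha_i^2\le\tfrac1{n^2}\lambda_i(\AA_t)/\lambda_1(\AA_t)$ whenever $\lambda_i(\AA_t)\le\lambda_1(\AA_t)/2$; in particular $\ww^{(t)}\perp\ker\AA_t$, and splitting $\sum_{i:\lambda_i>0}\alpha_i^2/\lambda_i$ over ``large'' ($\lambda_i>\lambda_1/2$) and ``small'' eigenvalues gives $\ww^{(t)\top}\AA_t^{+}\ww^{(t)}\le 2/\lambda_1(\AA_t)+1/(n\lambda_1(\AA_t))\le 3/\lambda_1(\AA_t)$. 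Since $\lambda^{(t)}\le\lambda_1(\AA_t)$ by \eqref{eq:epsEigvalue}, the downdate amount satisfies $\tfrac{\lambda^{(t)}}{3}\,\ww^{(t)\top}\AA_t^{+}\ww^{(t)}\le1$, hence $\AA_{t+1}\succeq0$ and the promise is maintained by induction. For progress, $\Tr(\AA_{t+1})=\Tr(\AA_t)-\lambda^{(t)}/3\le\Tr(\AA_t)\bigl(1-\tfrac{1-\epsilon}{3n}\bigr)$ using $\lambda^{(t)}\ge(1-\epsilon)\lambda_1(\AA_t)\ge(1-\epsilon)\Tr(\AA_t)/n$. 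Therefore after $K=O\bigl(\tfrac{n}{1-\epsilon}\log\tfrac{n\kappa}{\delta}\bigr)$ rounds, $\|\AA_K\|_{\mathrm{op}}\le\Tr(\AA_K)\le\tfrac{\delta}{100\kappa}\|\AA\|_{\mathrm{op}}\le\delta\,\sigma_{\min}(\AA)$, so $\XX$ is a valid certificate. Because $\epsilon\le1-1/n^{o(1)}$ and $\kappa/\delta\le\poly(n)$, we have $\tfrac1{1-\epsilon}\le n^{o(1)}$ and $\log\tfrac{n\kappa}{\delta}=O(\log n)$, hence $K=n^{1+o(1)}$.

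For the non-psd case we do \emph{not} try to detect it in advance; instead we verify $\XX$ at the end. If $\AA$ is not psd then $\lambda_{\min}(\AA)<0$, so $|\lambda_{\min}(\AA)|\ge\sigma_{\min}(\AA)\ge\|\AA\|_{\mathrm{op}}/\kappa$, and since $\AA\succeq\XX\XX^\top-\|\AA-\XX\XX^\top\|_{\mathrm{op}}I\succeq-\|\AA-\XX\XX^\top\|_{\mathrm{op}}I$ we get $\|\AA-\XX\XX^\top\|_{\mathrm{op}}\ge\|\AA\|_{\mathrm{op}}/\kappa$ for \emph{every} $\XX$ (consistent with the fact that a valid certificate self-certifies psdness). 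Thus it suffices to estimate $\|\AA-\XX\XX^\top\|_{\mathrm{op}}$ to within a constant factor and compare it to $\approx\delta\|\AA\|_{\mathrm{op}}/\kappa$: the psd case gives $\le\tfrac{\delta}{100\kappa}\|\AA\|_{\mathrm{op}}$ while the non-psd case gives $\ge\tfrac1\kappa\|\AA\|_{\mathrm{op}}$, a gap of factor $\ge100$. This estimate, together with a constant-factor estimate of $\|\AA\|_{\mathrm{op}}$, is obtained by $O(\log n)$ power iterations on the operator $\ww\mapsto\AA\ww-\XX(\XX^\top\ww)$, never forming $\XX\XX^\top$ explicitly; if the test passes we output $\XX$, otherwise we report ``not psd'', and correctness holds (with high probability) in both cases.

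It remains to tally the running time. The assumed algorithm runs in $n^{o(1)}\bigl(\nnz(\AA)+\sum_{t=1}^{K}\nnz(\vv_t)\bigr)=n^{o(1)}\bigl(n^2+K\cdot n\bigr)=n^{2+o(1)}$; reading the $K$ reported vectors, forming the $K$ updates, assembling $\XX$, and the $O(\log n)$ norm-estimation matrix--vector products each cost $n^{2+o(1)}$, and all remaining bookkeeping is lower order. So the whole reduction takes $n^{2+o(1)}$ time, which contradicts \Cref{conj:decomp is hard}. The step I expect to be the crux --- and the precise reason \Cref{prop:assume} cannot be dropped --- is maintaining $\AA_{t+1}\succeq0$ throughout the peeling: a generic $(1+\epsilon)$-approximate top eigenvector may place a $\Theta(\epsilon)$-sized component along an eigenvector whose eigenvalue is arbitrarily close to $0$, which makes $\ww^{(t)\top}\AA_t^{+}\ww^{(t)}$ blow up, renders the downdate amount $\lambda^{(t)}/3$ unsafe (or forces a safe amount that is $o(\lambda_1(\AA_t)/n)$ and stalls progress), and breaks the reduction; \Cref{prop:assume} is exactly the quantitative guarantee that rules this out.
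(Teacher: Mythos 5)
Your reduction is correct and is essentially the paper's: you peel off a constant fraction of the reported approximate top eigenvector at each step (the paper uses $\mu_{t-1}/10$, you use $\lambda^{(t)}/3$), use \Cref{prop:assume} to argue the iterates stay psd so the promise of \Cref{prob:dyn} is never violated when $\AA\succeq 0$, drive $\Tr(\AA_t)$ down geometrically over $n^{1+o(1)}$ rounds, and certify the outcome with a final static power-method check on the residual that distinguishes $\|\AA_T\|\le \tfrac{\delta}{100\kappa}\|\AA\|$ from $\|\AA_T\|\ge\tfrac{1}{\kappa}\|\AA\|$ (the non-psd case, where a negative direction survives any downdate by a Gram matrix). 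The one place you genuinely diverge is the psd-preservation lemma (the paper's \Cref{lem:orthoUpdate}): the paper verifies $\yy^\top\AA_{t+1}\yy\ge 0$ directly for every $\yy$ via a Cauchy--Schwarz split over large and small eigenvalues, whereas you invoke the rank-one downdating criterion $\AA_t-c\,\ww\ww^\top\succeq 0\iff \ww\in\mathrm{range}(\AA_t)$ and $c\,\ww^\top\AA_t^{+}\ww\le 1$, and bound $\ww^\top\AA_t^{+}\ww\le 3/\lambda_1(\AA_t)$ from \Cref{prop:assume} (which in particular forces $\ww\perp\ker\AA_t$). Your version is cleaner and isolates exactly why the property is needed --- without it $\ww^\top\AA_t^{+}\ww$ can blow up --- but both arguments use the same eigenvalue split and yield the same conclusion, so the overall route to contradicting \Cref{conj:decomp is hard} is the same.
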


Let us motivate \Cref{def:super} from several aspects below. First, in the static setting, this property can be easily satisfied since the static power method strongly satisfies it (see \Cref{thm:StaticPowerMain}). 
Second, the statement of the property itself is a natural property that we might expect from an algorithm. Consider a ``bad'' eigenvector $\uu_i$ whose corresponding eigenvalue is very small, i.e., less than half of the maximum one.
It states that the projection of the output $\ww_t$ along $\uu_i$ should be very small, i.e., a polynomial factor smaller than the projection along the maximum eigenvector. This is intuitively useful because we do not want the output $\ww_t$ to direct to the ``bad'' direction. It should mainly direct along the approximately maximum eigenvector.
Third, we formalize this intuition and crucially exploit \Cref{def:super} to prove a reduction in \Cref{thm:lower}. 
Specifically, \Cref{def:super} allows us to decrease eigenvalues of a psd matrix while maintaining the psd-ness, which is crucial for us. See \Cref{sec:Adap} for more details. 
Lastly, our current algorithm from \Cref{thm:upper}, a dynamic version of the power method, actually maintains $\ww_t$'s that satisfy this property for certain snapshots, but not at every step. Unfortunately, we do not see how to strengthen our algorithm to satisfy \Cref{prop:assume} nor how to remove it from the requirement of \Cref{thm:lower}. We leave both possibilities as open problems.

Understanding the power and limitations of dynamic algorithms against an oblivious adversary vs.~an adaptive adversary has become one of the main research programs in the area of dynamic algorithms. However, finding a natural dynamic problem that separates oblivious and adaptive adversaries is still a wide-open problem.\footnote{
Beimel et al.~\cite{beimel2022dynamic} gave the first separations for artificial problems assuming a strong cryptographic assumption. Bateni et al.~\cite{bateni2023optimal} shows a separation between the adversaries for the {\it $k$-center clustering problem}, however, their results are based on the existence of a black box which the adaptive adversary can control.}
%
%
In this paper, we suggest the problem of maintaining approximate maximum eigenvectors as a natural candidate for the separation and 
give some preliminary evidence for this.

\paragraph{Organization.} In the following sections, we begin with some preliminaries required to prove our results in Section~\ref{sec:prelims}, followed by our algorithm and its analysis in Section~\ref{sec:Obl}, and our conditional lower bound in Section~\ref{sec:Adap}. In Section~\ref{sec:dyn psd} we show connections with positive semi-definite programs, and finally, in Section~\ref{sec:open}, we present some open problems.


\section{Preliminaries}
\label{sec:prelims}
Let $\AA_0$ denote the initial matrix. Let $\lambda_0 = \lambda_{\max}(\AA_{0}) = \|\AA_{0}\|$ denote the maximum eigenvalue of the initial matrix $\AA_0$.
The following are the key definitions we use in our analysis.

\begin{definition}[$\epsilon$-max span and dimension] We define $\Span(\epsilon,\AA)$ to denote the space spanned by all eigenvectors of $\AA$ corresponding to eigenvalues $\lambda$ satisfying $\lambda \geq (1-\epsilon)\lambda_0$. Let $\dim{\epsilon,\AA}$ to denote the dimension of the space $\Span(\epsilon,\AA)$.
\end{definition}

We emphasize that $\lambda_0$ depends only on $\AA_0$. So, it is a static value that does not change through time. 
We will use the following linear algebraic notations.

\begin{definition}\label{def:subspace}Let $S_1$ and $S_2$ be two subspaces of a vector space $S$. The sum $S_1+S_2$ is the space,
\[
S_1+S_2 = \{s=s_1+s_2: s_1\in S_1, s_2 \in S_2\}.
\]
The complement $\overline{S}$ of $S$ is the vector space such that $S+\overline{S} = \mathbb{R}^n$, and $S\cap \overline{S} = \{0\}$. The difference, $S_1-S_2$ is defined as,
\[
S_1-S_2 = S_1\cap \overline{S_2}.
\]
\end{definition}

Next, we list standard facts about high-dimensional probability needed in our analysis.

\begin{lemma}[Chernoff Bound]\label{lem:Bernstein} Let $x_1,\cdots x_m$ be independent random variables such that $a\leq x_i\leq b$ for all $i$. Let $x = \sum_i x_i$ and let $\mu = \av[x]$. Then for all $\delta>0$,
\[
\Pr[x\geq(1+ \delta) \mu] \leq \exp \left(- \frac{2\delta^2\mu^2 }{m(b-a)^2} \right)
\]
\[
\Pr[x \leq (1-\delta) \mu] \leq \exp \left(- \frac{\delta^2\mu^2 }{m(b-a)^2} \right)
\]
\end{lemma}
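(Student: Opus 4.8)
The plan is to prove this by the exponential-moment (Chernoff) method; the statement is essentially Hoeffding's inequality, so the only real content is Hoeffding's lemma. I would first reduce to the case $\mu=\av[x]\ge 0$, which holds in all our applications (there the $x_i$ are nonnegative); if $\mu<0$ the first inequality is vacuous and nothing needs to be proved.

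\emph{Upper tail.} Fix a parameter $s>0$, to be chosen at the end. Since $e^{sx}\ge 0$ and $t\mapsto e^{st}$ is increasing, Markov's inequality gives $\Pr[x\ge(1+\delta)\mu]=\Pr[e^{sx}\ge e^{s(1+\delta)\mu}]\le e^{-s(1+\delta)\mu}\av[e^{sx}]$, and by independence $\av[e^{sx}]=\prod_{i=1}^{m}\av[e^{sx_i}]$. To bound each factor I would invoke Hoeffding's lemma: for any random variable $y$ with $a\le y\le b$ and any real $t$,
\[
\av\!\left[e^{t(y-\av[y])}\right]\le\exp\!\left(\frac{t^2(b-a)^2}{8}\right).
\]
To prove this lemma I would set $p(t)=\log\av[e^{t(y-\av[y])}]$, check that $p(0)=p'(0)=0$, observe that $p''(t)$ equals the variance of $y$ under the exponentially tilted measure with density proportional to $e^{t(y-\av[y])}$, bound this variance by $(b-a)^2/4$ (any random variable supported in $[a,b]$ has variance at most $\av[(y-\tfrac{a+b}{2})^2]\le (b-a)^2/4$), and conclude by Taylor's theorem with remainder that $p(t)\le t^2(b-a)^2/8$. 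Multiplying the resulting per-term bounds gives $\av[e^{sx}]\le e^{s\mu}\exp(s^2 m(b-a)^2/8)$, hence
\[
\Pr[x\ge(1+\delta)\mu]\le\exp\!\left(-s\delta\mu+\frac{s^2 m(b-a)^2}{8}\right).
\]
Minimizing the exponent over $s>0$ at $s=4\delta\mu/(m(b-a)^2)$ yields the value $-2\delta^2\mu^2/(m(b-a)^2)$, which is exactly the first claimed bound.

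\emph{Lower tail.} I would rerun the same computation with a parameter $s<0$: now $t\mapsto e^{st}$ is decreasing, so $\Pr[x\le(1-\delta)\mu]=\Pr[e^{sx}\ge e^{s(1-\delta)\mu}]\le e^{-s(1-\delta)\mu}\av[e^{sx}]\le\exp(s\delta\mu+s^2 m(b-a)^2/8)$, and optimizing over $s<0$ again gives exponent $-2\delta^2\mu^2/(m(b-a)^2)$. This is stronger than the stated second bound, which carries a $1$ rather than a $2$ in the numerator, so it implies it immediately.

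\emph{Expected difficulty.} I do not expect a genuine obstacle here; the only nontrivial ingredient is Hoeffding's lemma, and inside it the only step that needs any thought is the variance bound $\Var(y)\le (b-a)^2/4$ for $y\in[a,b]$. If I wanted to avoid the tilted-measure argument, an equivalent route is to bound $e^{sy}$ on $[a,b]$ above by the chord through its endpoint values (using convexity of $y\mapsto e^{sy}$), take expectations, and then estimate the resulting explicit one-variable function by a direct Taylor expansion in $s$; this reproduces the $s^2(b-a)^2/8$ factor by elementary means. Either route completes the proof.
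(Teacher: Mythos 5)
The paper states \Cref{lem:Bernstein} without proof---it is invoked as a standard fact---so there is no in-paper argument to compare against. Your proof is the standard and correct one: reduce to Hoeffding's lemma via the exponential-moment / Markov step, multiply the per-coordinate bounds using independence, and optimize the free parameter $s$; the arithmetic (optimal $s = \pm 4\delta\mu/(m(b-a)^2)$, exponent $-2\delta^2\mu^2/(m(b-a)^2)$) is right, and your observation that the lower-tail direction also gives a factor of $2$ rather than the stated $1$ is correct. The one sentence that is wrong is "if $\mu<0$ the first inequality is \emph{vacuous}." It is not vacuous: when $\mu<0$ the point $(1+\delta)\mu$ lies strictly below $\mu$, so $\Pr[x\ge(1+\delta)\mu]$ is typically close to $1$, while the right-hand side $\exp(-2\delta^2\mu^2/(m(b-a)^2))$ is strictly less than $1$; e.g.\ $m=1$, $x_1$ uniform on $\{-1,0\}$ (so $a=-1$, $b=1$, $\mu=-1/2$) and $\delta=1$ gives left side $1$ and right side $e^{-1/8}$, a genuine counterexample. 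The accurate remark is that the lemma as written implicitly assumes $\mu\ge 0$---which holds in every application in the paper since the summands there are nonnegative ($\chi^2$) variables---not that the claim becomes vacuously true for $\mu<0$.
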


\begin{lemma}[Norm of Gaussian Vector]\label{lem:NormG}
A random vector $\vv \in \mathbb{R}^n$ with every coordinate chosen from a normal distribution, $N(0,1)$ satisfies,
\[
\Pr[|\|\vv\|^2- n| \leq 2(1+\delta)\delta\cdot n ] \geq 1- e^{-\delta^2  n}.
\]
\end{lemma}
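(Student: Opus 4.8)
The plan is to view $\|\vv\|^2=\sum_{i=1}^{n}v_i^2$ as a sum of $n$ independent copies of the square of a standard Gaussian (a $\chi^2$ variable with one degree of freedom), so that $\av[\|\vv\|^2]=n$, and to prove concentration around this mean by the moment-generating-function (Chernoff) method. Note that the bounded-variable Chernoff bound of \Cref{lem:Bernstein} cannot be applied off the shelf here, since $v_i^2$ is unbounded; instead I would use the exact identity $\av[e^{s v_i^2}]=(1-2s)^{-1/2}$ for $s<\tfrac12$, hence $\av\!\left[e^{s\|\vv\|^2}\right]=(1-2s)^{-n/2}$ for $s<\tfrac12$.

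Write $\eta\defeq 2(1+\delta)\delta=2\delta+2\delta^2$. For the upper tail, apply Markov's inequality to $e^{s\|\vv\|^2}$ and minimize over $s\in(0,\tfrac12)$ (the minimizer is $s=\tfrac{\eta}{2(1+\eta)}$), obtaining
\[
\pr\!\left[\,\|\vv\|^2\ge(1+\eta)n\,\right]\;\le\;\exp\!\left(-\tfrac{n}{2}\bigl(\eta-\ln(1+\eta)\bigr)\right).
\]
I would then verify the elementary inequality $\eta-\ln(1+\eta)\ge 2\delta^2$ for all $\delta\ge 0$: the function $\phi(\delta)\defeq 2\delta-\ln(1+2\delta+2\delta^2)$ satisfies $\phi(0)=0$ and $\phi'(\delta)=\tfrac{4\delta^2}{1+2\delta+2\delta^2}\ge 0$, so $\phi\ge 0$, which is exactly $\eta-\ln(1+\eta)\ge 2\delta^2$. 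Hence $\pr[\|\vv\|^2\ge(1+\eta)n]\le e^{-\delta^2 n}$.

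For the lower tail, symmetrically apply Markov to $e^{-s\|\vv\|^2}$ with $s>0$. If $\eta<1$, minimizing over $s$ yields
\[
\pr\!\left[\,\|\vv\|^2\le(1-\eta)n\,\right]\;\le\;\exp\!\left(-\tfrac{n}{2}\bigl(-\eta-\ln(1-\eta)\bigr)\right)\;\le\;e^{-n\eta^2/4}\;\le\;e^{-\delta^2 n},
\]
using $-\ln(1-\eta)\ge\eta+\eta^2/2$ and then $\eta\ge 2\delta$; if $\eta\ge 1$ then $(1-\eta)n\le 0$ and the event is empty since $\|\vv\|^2\ge 0$. A union bound over the two tails gives $\pr[\,|\|\vv\|^2-n|\le\eta n\,]\ge 1-e^{-\delta^2 n}$, the factor of two from the union bound being absorbed by the strict slack in the tail exponents (equivalently, in the regime $\delta^2 n=\Omega(1)$ in which the lemma is applied).

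The only non-mechanical points are (i) recognizing that the unboundedness of $v_i^2$ rules out the bounded-difference Chernoff bound quoted just above and forces the $\chi^2$ moment-generating-function computation, and (ii) choosing the radius correctly so that both tail exponents come out at least $\delta^2 n$ — which is essentially why the value $\eta=2(1+\delta)\delta$ is used: it adds just enough of a quadratic buffer over the naive $2\delta$ to make both $\eta-\ln(1+\eta)\ge 2\delta^2$ and $\eta^2/4\ge\delta^2$ hold for all $\delta\ge 0$.
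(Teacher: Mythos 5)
Your proof is correct, but it takes a genuinely different route from the paper: the paper disposes of this lemma in two lines by quoting Lemma~1 of \textcite{laurent2000adaptive} (the Laurent--Massart $\chi^2$ tail bound $\Pr[|\sum_i \vv_i^2-n|>2\sqrt{nx}+2x]\le e^{-x}$) and substituting $x=\delta^2 n$, whereas you rederive the bound from scratch via the exact $\chi^2$ moment generating function and a Chernoff optimization on each tail, including the elementary verification that $\eta-\ln(1+\eta)\ge 2\delta^2$ and $-\eta-\ln(1-\eta)\ge \eta^2/2\ge 2\delta^2$ for $\eta=2\delta(1+\delta)$. What your approach buys is self-containment (no external citation, and you correctly note that \Cref{lem:Bernstein} is inapplicable since $\vv_i^2$ is unbounded); what the paper's approach buys is brevity. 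The one soft spot in your write-up is the final union bound: your two one-sided bounds each give $e^{-\delta^2 n}$, so strictly you obtain $2e^{-\delta^2 n}$, and the claim that the factor $2$ is ``absorbed by the strict slack'' is not uniform in the parameters --- the slack in the exponent is only of order $\delta^3 n$, which vanishes when $\delta^3 n=o(1)$. However, this is exactly the same looseness hidden in the paper's own proof, since the cited Laurent--Massart lemma bounds each tail separately by $e^{-x}$ (so the two-sided form is really $2e^{-x}$), and in every application in the paper $\delta$ is a constant and only polynomially small failure probability is needed, so the factor of $2$ is immaterial; your proof is on equal footing with the paper's.
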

\begin{proof}
The vector $\vv$ has entries that are from $N(0,1)$. Now, every $\vv_i^2$ follows a $\chi^2$ distribution. From Lemma 1 of \cite{laurent2000adaptive} we have the following tail bound for a sum of $\chi^2$ random variables,
\[
\Pr[|\sum_i \vv^2_i - n| > 2\sqrt{nx} + 2x]\leq e^{-x}.
\]
Choosing $x = n\delta^2$ gives,
\[
\Pr[|\|\vv\|^2 - n| \leq 2\delta(1+\delta) n]\geq 1-e^{-\delta^2 n},
\] 
as required.
\end{proof}

\begin{lemma}[Distribution of $\chi^2$ Variable]\label{lem:chi}
Let $x\sim N(0,1)$ be a gaussian random variable. Then,
\[
\Pr\left[x^2\geq \frac{1}{n^4}\right] \geq 1- \frac{1}{n^{2}}.
\]
\end{lemma}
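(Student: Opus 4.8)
The plan is to reduce to a one-dimensional small-ball (anti-concentration) estimate. Note first that the event $x^2 \ge 1/n^4$ is exactly the event $|x| \ge 1/n^2$, so it suffices to show $\Pr[\,|x| < 1/n^2\,] \le 1/n^2$, and then take complements. I would write this probability as the integral of the standard Gaussian density over the symmetric interval $(-1/n^2,\, 1/n^2)$, which has Lebesgue measure $2/n^2$.

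The key step is a pointwise bound on the density: for every $t \in \mathbb{R}$ we have $\tfrac{1}{\sqrt{2\pi}} e^{-t^2/2} \le \tfrac{1}{\sqrt{2\pi}}$, with equality only at $t=0$. Hence $\Pr[\,|x| < 1/n^2\,] \le \tfrac{2}{n^2}\cdot \tfrac{1}{\sqrt{2\pi}}$. Since $\sqrt{2\pi} \approx 2.5066 > 2$, the right-hand side is at most $1/n^2$, so $\Pr[x^2 \ge 1/n^4] = 1 - \Pr[\,|x| < 1/n^2\,] \ge 1 - 1/n^2$, which is the claim. (If $n \ge 1$ the interval $(-1/n^2,1/n^2)$ is bounded so the integral bound is valid; for the lemma to be meaningful we of course take $n \ge 1$.)

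There is essentially no obstacle here: the only thing worth double-checking is the trivial numeric inequality $2/\sqrt{2\pi} < 1$. An alternative, if one prefers not to manipulate the explicit density, is to cite any standard Gaussian anti-concentration bound of the form $\Pr[\,|x| \le \varepsilon\,] \le \varepsilon$ for a standard normal $x$ and apply it with $\varepsilon = 1/n^2$; both routes are equally short, and I would use whichever is already available in the paper's toolbox.
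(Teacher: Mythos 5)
Your proof is correct and is essentially the same argument as the paper's: the paper integrates the $\chi^2$ density of $y=x^2$ over $[0,1/n^4]$ after bounding $e^{-y/2}\le 1$, which is exactly your computation after the change of variables $y=x^2$, and both reduce to the same numeric fact $\sqrt{2/\pi}<1$. No gap.
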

\begin{proof}
The probability distribution function for $y = x^2$ is given by,
\[
\ff(y) = \frac{1}{\sqrt{2}\Gamma(\frac{1}{2})}y^{-\frac{1}{2}}e^{-\frac{y}{2}}.
\]
It is known that $\Gamma(\frac{1}{2}) = \sqrt{\pi}$. Now, 
\begin{align*}
\Pr\left[x^2\leq \frac{1}{n^4}\right] =  \int_{0}^{1/n^4}\frac{1}{\sqrt{2}\Gamma(\frac{1}{2})}y^{-\frac{1}{2}}e^{-\frac{y}{2}}dy
\leq  \frac{e^0}{\sqrt{2\pi}} \int_{0}^{1/n^4} y^{-\frac{1}{2}}dy
=  \sqrt{\frac{2}{\pi}} \cdot \frac{1}{n^{2}}
\leq  \frac{1}{n^{2}}.
\end{align*}
\end{proof}


\section{Algorithms against an Oblivious Adversary}\label{sec:Obl}

To prove \Cref{thm:upper}, we first reduce Problem~\ref{prob:dyn} to solving a normalized threshold version of the problem where we assume that initially, the maximum eigenvalue is not much bigger than one. Then we want to maintain a certificate that 
the maximum eigenvalue is not much less than one until no such certificate exists. This is formalized below.
\begin{problem}[DecMaxEV($\epsilon,\AA_0,\vv_1,\cdots,\vv_T$)]\label{def:DecMaxEval} Let $\AA_0$ be an $n\times n$ symmetric PSD matrix such that $ \lambda_{\max}(\AA_0) \leq 1 + \frac{\epsilon}{\log n}$. The {\sc DecMaxEV}($\epsilon,\AA_0,\vv_1,\cdots,\vv_T$) problem asks to find for every $t$, a vector $\ww_t$ such that 
\[
\|\ww_t\| = 1 \quad \text{and} \quad \ww_t^{\top}\AA_t \ww_t \geq 1-40\epsilon,
\]
or return {\sc False} indicating that $\lambda_{\max}(\AA_t)\le 1- \frac{\eps}{\log n}$.
\end{problem} 
We defer the proof of the standard reduction stated below to the appendix.
\begin{restatable}{lemma}{Decision}\label{lem:Decision}
Given an algorithm $\mathcal{A}$ that solves the decision problem {\sc DecMaxEV}($\epsilon,\AA_0,\vv_1,\cdots,\vv_T$) (Definition~\ref{def:DecMaxEval}) for any $\epsilon>0$, $\AA_0 \succeq 0$ and vectors $\vv_1,\cdots,\vv_T$ in time $\mathcal{T}$, we can solve Problem~\ref{prob:dyn} in total time $O\left(\frac{\log^2 n\log\frac{n}{\epsilon}}{\epsilon}\cdot nnz(\AA_0) + \frac{\log n}{\epsilon}\log \frac{\lambda_{\max}(\AA_0)}{\lambda_{\max}(\AA_T)}\mathcal{T}\right)$.
\end{restatable}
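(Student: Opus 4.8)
The plan is to reduce the multiplicative approximation problem (Problem~\ref{prob:dyn}) to logarithmically many instances of the normalized threshold problem {\sc DecMaxEV}, run in a hierarchy of ``guesses'' for the current value of $\lambda_{\max}(\AA_t)$. First I would rescale: since $\lambda_{\max}(\AA_0)=\lambda_0$ can be computed once (e.g., by a static power method in time $\tilde O(nnz(\AA_0)/\eps)$), we may work with $\AA_0/\mu$ for a scale $\mu$ and note that rescaling the matrix rescales all eigenvalues by the same factor, so an $\eps$-approximate eigenpair of $\AA_t/\mu$ yields one of $\AA_t$. The key point is that {\sc DecMaxEV} only promises a useful answer while $\lambda_{\max}$ is within a $(1\pm\Theta(\eps/\log n))$ window of $1$; so to cover the whole range $[\lambda_{\max}(\AA_T),\lambda_{\max}(\AA_0)]$ we maintain a geometrically-spaced family of scales $\mu_j = (1-\eps/\log n)^{-j}\cdot c$ for $j=0,1,2,\dots$, and for each $j$ we run an instance $\mathcal{A}_j$ of {\sc DecMaxEV} on the rescaled stream $\AA_0/\mu_j$, $\vv_1/\sqrt{\mu_j},\dots$. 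There are at most $O\bigl(\tfrac{\log n}{\eps}\log\tfrac{\lambda_{\max}(\AA_0)}{\lambda_{\max}(\AA_T)}\bigr)$ such scales before every instance has returned {\sc False}.

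Next I would describe the bookkeeping. At each time $t$, we consult the instances in increasing order of $j$ (equivalently decreasing order of the ``guess'' $1/\mu_j$ normalized threshold): we want the largest $j$ — call it $j^\star(t)$ — whose instance $\mathcal{A}_{j^\star}$ still outputs a vector $\ww_t$ rather than {\sc False}. We output $\ww_t$ together with $\lambda_t := (1-40\eps)\mu_{j^\star}$ (after a constant rescaling of $\eps$ so that $40\eps$ becomes the desired approximation factor; this is why the lemma statement allows Problem~\ref{prob:dyn} to be solved with the stated polynomial overhead in $1/\eps$). The correctness is a sandwich argument: because $\mathcal{A}_{j^\star}$ did not fail, Definition~\ref{def:DecMaxEval} guarantees $\ww_t^\top(\AA_t/\mu_{j^\star})\ww_t \ge 1-40\eps$, i.e.\ $\ww_t^\top\AA_t\ww_t\ge(1-40\eps)\mu_{j^\star}$, giving \eqref{eq:epsEigvec}; and because $\mathcal{A}_{j^\star+1}$ \emph{did} fail (or $j^\star$ is maximal because the threshold exceeds $\lambda_0/\mu_{j^\star+1}$), we get $\lambda_{\max}(\AA_t) \le (1-\eps/\log n)\mu_{j^\star+1} = (1-\eps/\log n)(1-\eps/\log n)^{-1}\mu_{j^\star}\cdot(\text{const})$, so $\mu_{j^\star}$ is within a $(1\pm O(\eps))$ factor of $\lambda_{\max}(\AA_t)$, yielding \eqref{eq:epsEigvalue} after adjusting constants. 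One has to handle the boundary case $t$ small, where no instance has failed yet and the initial normalization $\lambda_{\max}(\AA_0)\le 1+\eps/\log n$ of the topmost active instance is what certifies correctness.

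For the running time, the crucial observation is that the instances are \emph{nested in a monotone way}: once $\mathcal{A}_j$ returns {\sc False} it never needs to be consulted again (eigenvalues only decrease under the updates, so {\sc False} is absorbing), and we activate a new instance $\mathcal{A}_{j+1}$ only when $\mathcal{A}_j$ first fails. Hence at any moment only $O(1)$ instances are ``live'' in the sense of receiving updates — in fact we can run just the current live one and its immediate successor. Each update $\vv_t$ is fed (rescaled) to $O(1)$ live instances, costing $O(nnz(\vv_t))$ preprocessing plus the internal cost charged to $\mathcal{T}$; summed over the at most $O(\tfrac{\log n}{\eps}\log\tfrac{\lambda_{\max}(\AA_0)}{\lambda_{\max}(\AA_T)})$ instances that are ever spawned, the total internal work is $O(\tfrac{\log n}{\eps}\log\tfrac{\lambda_{\max}(\AA_0)}{\lambda_{\max}(\AA_T)})\cdot\mathcal{T}$, and the one-time initialization (computing $\lambda_0$, rescaling $\AA_0$ for each spawned instance) contributes $O(\tfrac{\log^2 n\log(n/\eps)}{\eps}\cdot nnz(\AA_0))$ — matching the claimed bound.

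The step I expect to be the main (minor) obstacle is the constant-juggling between the $40\eps$ slack of {\sc DecMaxEV}, the $\eps/\log n$ width of the threshold window, and the final $(1-\eps)$ multiplicative guarantee of Problem~\ref{prob:dyn}: one must choose the internal accuracy parameter fed to $\mathcal{A}_j$ to be $\eps' = \Theta(\eps)$ and verify that a single geometric grid with ratio $1-\eps'/\log n$ simultaneously (i) never skips over the true $\lambda_{\max}$ by more than a $(1\pm\eps)$ factor and (ii) guarantees that the first failing instance pins down $\lambda_{\max}$ tightly enough. The other mild subtlety is re-initializing a freshly spawned instance $\mathcal{A}_{j+1}$ at the moment $\mathcal{A}_j$ fails: its ``initial matrix'' should be the \emph{current} $\AA_t$, not $\AA_0$, rescaled by $\mu_{j+1}$; since $\lambda_{\max}(\AA_t)$ is then provably at most $(1-\eps/\log n)\mu_j < \mu_{j+1}$, possibly after a single extra power-method estimate, the precondition $\lambda_{\max}(\AA_0^{(j+1)})\le 1+\eps/\log n$ of Definition~\ref{def:DecMaxEval} is met, and the remaining updates $\vv_{t+1},\dots,\vv_T$ are streamed in as usual. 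Everything else is routine.
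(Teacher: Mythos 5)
Your proposal is correct and takes essentially the same route as the paper: estimate $\lambda_{\max}(\AA_0)$ once with a static power method to normalize, feed the rescaled stream to {\sc DecMaxEV}, and whenever it returns {\sc False} restart it on the \emph{current} matrix at the next scale of a geometric grid with ratio $1-\eps/\log n$, so the number of restarts is $O\left(\frac{\log n}{\eps}\log\frac{\lambda_{\max}(\AA_0)}{\lambda_{\max}(\AA_T)}\right)$ and the sandwich argument plus a constant rescaling of $\eps$ gives \eqref{eq:epsEigvalue} and \eqref{eq:epsEigvec}; your ``family of instances'' with one live member is exactly this sequential restart. (One cosmetic slip: the scales $\mu_j=(1-\eps/\log n)^{-j}c$ should be decreasing, i.e., exponent $+j$, as your own later reasoning assumes.)
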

Next, we describe \Cref{alg:PowerMethod} which can be viewed as an algorithm for \Cref{def:DecMaxEval} when there are no updates. 
Our algorithm essentially applies the {\it power iteration}, which is a standard algorithm used to find an approximate maximum eigenvalue and eigenvector of a matrix. In the algorithm, we make $R = O(\log n)$ copies to boost the probability.
\begin{algorithm}
\caption{{\sc DecMaxEV} with no update}\label{alg:PowerMethod}
 \begin{algorithmic}[1]
\Procedure{PowerMethod}{$\epsilon, \AA$}
\State $R \leftarrow 10\log n, r_0 \leftarrow 1$
\State $K \leftarrow \frac{4\log \frac{n}{\epsilon}}{\epsilon}$
\For{$r = 1,\cdots , R$}
\State $\vv^{(0,r)} \leftarrow $ random vector with coordinates chosen from $N(0,1)$\label{algline:randomInit}
\For{$k = 1:K$}
\State $\vv^{(k,r)}\leftarrow \AA\vv^{(k-1,r)}$
\EndFor
\State $\ww^{(r)} \leftarrow \frac{\vv^{(K,r)}}{\|\vv^{(K,r)}\|}$ 
\EndFor
\State $\WW = [\ww^{(1)},\dots,\ww^{(R)}]$\label{line:before case}
\If{ $(\ww^{(r)})^{\top}\AA\ww^{(r)} < 1-\epsilon$ for all $r\le R$}\label{line: if all w 1-eps}
\State \Return {\sc False}
\Else
\State $r_0\leftarrow$ smallest $r$ such that $(\ww^{(r)})^{\top}\AA\ww^{(r)} \geq 1-5\epsilon$
\State \Return $[r_0,\WW]$
\EndIf
\EndProcedure 
 \end{algorithmic}
\end{algorithm}
Below, we state the guarantees of the power method.

\begin{restatable}{lemma}{PowerMethod}\label{thm:StaticPowerMain}
Let $\epsilon>0$ and $\AA \succeq 0$. Let $\WW$ be as defined in Line~\ref{line:before case} in the execution of {\sc PowerMethod}($\epsilon,\AA$). With probability at least $1-1/n^{10}$, for some $\ww \in \WW$, it holds that $\ww^{\top}\AA\ww \geq (1-\frac{\epsilon}{2})\lambda_{\max}(\AA)$. The total time taken by the algorithm is at most $O\left(\frac{nnz(\AA)\log n\log \frac{n}{\epsilon}}{\epsilon}\right)$.

Furthermore, let $\lambda_i$ and $\uu_i$ denote the eigenvalues and eigenvectors of $\AA$. For all $i$ such that $\lambda_i (\AA) \leq \frac{\lambda_{\max}(\AA)}{2}$, with probability at least $1-2/n^{10}$, $\left[\ww^{\top}\uu_i\right]^2\leq \frac{1}{n^8}\cdot \frac{\lambda_i}{\lambda_1}$.
\end{restatable}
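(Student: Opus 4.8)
The plan is to run the classical convergence analysis of power iteration while keeping the Gaussian initialization explicit, and then boost via a union bound over the $R=10\log n$ independent copies. I would first fix a copy $r$ and diagonalize $\AA=\sum_{i=1}^{n}\lambda_i\uu_i\uu_i^{\top}$ with $\lambda_1\ge\cdots\ge\lambda_n\ge 0$. Since $\vv^{(0,r)}$ has i.i.d.\ $N(0,1)$ coordinates and $\{\uu_i\}$ is orthonormal, the coefficients $c_i:=\uu_i^{\top}\vv^{(0,r)}$ are again i.i.d.\ $N(0,1)$, and $\vv^{(K,r)}=\AA^{K}\vv^{(0,r)}=\sum_i c_i\lambda_i^{K}\uu_i$, so after normalizing we get the exact identities
\[
(\ww^{(r)})^{\top}\AA\ww^{(r)}=\frac{\sum_i c_i^2\lambda_i^{2K+1}}{\sum_i c_i^2\lambda_i^{2K}},\qquad \bigl((\ww^{(r)})^{\top}\uu_i\bigr)^2=\frac{c_i^2\lambda_i^{2K}}{\sum_j c_j^2\lambda_j^{2K}}.
\]
I would then condition on the event $\calE_r$ that $c_1^2\ge 1/n^4$ (\Cref{lem:chi}) and $\|\vv^{(0,r)}\|^2\le 2n$ (\Cref{lem:NormG} with a small constant $\delta$), which holds with probability at least $1-2/n^2$.

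For the eigenvalue bound I would split the indices into $S=\{i:\lambda_i\ge(1-\epsilon/4)\lambda_1\}$ and its complement $T$. Dropping the $T$-terms from the numerator and using $\lambda_i\ge(1-\epsilon/4)\lambda_1$ on $S$ makes the numerator at least $(1-\epsilon/4)\lambda_1\sum_{i\in S}c_i^2\lambda_i^{2K}$, while $\sum_{i\in T}c_i^2\lambda_i^{2K}\le\bigl((1-\epsilon/4)\lambda_1\bigr)^{2K}\|\vv^{(0,r)}\|^2$ and $\sum_{i\in S}c_i^2\lambda_i^{2K}\ge c_1^2\lambda_1^{2K}$. Hence the ``leakage ratio'' $\sum_T/\sum_S$ is at most $2n^5(1-\epsilon/4)^{2K}$, which by the choice $K=\Theta(\log(n/\epsilon)/\epsilon)$ is at most, say, $\epsilon/8$; then $(\ww^{(r)})^{\top}\AA\ww^{(r)}\ge(1-\epsilon/4)\lambda_1/(1+\epsilon/8)\ge(1-\epsilon/2)\lambda_1$. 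Since the copies are independent and $\Pr[\neg\calE_r]\le 2/n^2$, with probability at least $1-(2/n^2)^{R}\ge 1-1/n^{10}$ some copy lies in its good event, which gives the first claim. (A refinement — observing that among $R$ restarts at least one has $c_1^2=\Omega(1)$ rather than merely $\ge 1/n^4$ — is what lets the constant hidden in $K$ stay small while keeping failure probability $\le 1/n^{10}$.)

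For the projection bound I would work on $\calE_r$ for that good copy: for every $i$ with $\lambda_i\le\lambda_1/2$,
\[
\bigl((\ww^{(r)})^{\top}\uu_i\bigr)^2\le\frac{c_i^2}{c_1^2}\Bigl(\frac{\lambda_i}{\lambda_1}\Bigr)^{2K}\le\frac{c_i^2}{c_1^2}\Bigl(\frac12\Bigr)^{2K-1}\cdot\frac{\lambda_i}{\lambda_1}\le 2n^5\Bigl(\frac12\Bigr)^{2K-1}\cdot\frac{\lambda_i}{\lambda_1},
\]
using $c_i^2\le\|\vv^{(0,r)}\|^2\le 2n$ and $c_1^2\ge 1/n^4$; since $K=\Omega(\log n)$ with a large enough constant, $2n^5(1/2)^{2K-1}\le 1/n^8$. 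A union bound over the at most $n$ bad directions and over the randomness controlling $c_1$ across the copies yields the stated probability $1-2/n^{10}$. The running time is $R=O(\log n)$ copies, each performing $K=O(\log(n/\epsilon)/\epsilon)$ matrix--vector multiplications plus one Rayleigh-quotient evaluation, each costing $O(\nnz(\AA))$, for a total of $O\bigl(\nnz(\AA)\log n\log(n/\epsilon)/\epsilon\bigr)$.

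I expect the only genuine obstacle to be quantitative rather than structural: the threshold defining $S$ and the exact value of $K$ must be chosen so that the polynomial blow-up factor $\|\vv^{(0,r)}\|^2/c_1^2=O(n^5)$ is simultaneously killed by $(1-\epsilon/4)^{2K}$ (first part) and by $2^{-2K}$ (second part), with $K$ still only $\Otil(1/\epsilon)$. Balancing these two requirements — in particular squeezing the required lower bound on $c_1^2$ from $1/n^4$ up to a constant by exploiting the $R$ independent restarts — is the one place where the argument needs care; everything else is the textbook power-method computation.
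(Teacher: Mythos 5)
Your proposal is correct and follows essentially the same route as the paper's proof: expand the Gaussian start in the eigenbasis, lower-bound the mass on the near-top eigenspace, beat the $\poly(n)$ blow-up with $(1-\epsilon/4)^{2K}$, and boost over the $R$ independent copies. The one caveat is that your parenthetical ``refinement'' is not optional but load-bearing: with only $c_1^2\ge 1/n^4$ the leakage ratio under the algorithm's actual $K=4\log(n/\epsilon)/\epsilon$ is $2n^5(\epsilon/n)^2=2\epsilon^2n^3$, not $\epsilon/8$, so one must use the constant lower bound on the top-eigenspace mass — which is exactly what the paper does via Lemma~\ref{lem:PMHighComp} (probability $3/4$ per copy, boosted to $1-(2/5)^R\ge 1-n^{-10}$ over the restarts).
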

 We note that the last line of the above lemma is saying that the vectors returned by the power method satisfy \Cref{def:super}, which we state for completeness but is not required by our algorithm. The following result is a direct consequence of Lemma~\ref{thm:StaticPowerMain}.

 \begin{corollary}\label{thm:StaticPower}
 Let $\epsilon>0, \AA \succeq 0$. Let $\WW$ be as defined in Line~\ref{line:before case} in the execution of {\sc PowerMethod}($\epsilon,\AA$). If $\lambda_{\max}(\AA) \ge 1-\epsilon$, then with probability at least $1-1/n^{10}$, $\ww^{\top}\AA\ww\geq 1-5\epsilon$ for some $\ww\in \WW$. Furthermore, if $\lambda_{\max}(\AA) \ge 1-\epsilon/\log n$, then with probability at least $1-1/n^{10}$, $\ww^{\top}\AA\ww\geq 1-\epsilon$ for some $\ww\in \WW$. The total time taken by the algorithm is at most $O\left(\frac{nnz(\AA)\log n\log \frac{n}{\epsilon}}{\epsilon}\right)$.
 \end{corollary}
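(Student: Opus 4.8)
The plan is to derive \Cref{thm:StaticPower} as an immediate consequence of the first part of \Cref{thm:StaticPowerMain} together with elementary arithmetic; there is no new probabilistic content. First I would invoke \Cref{thm:StaticPowerMain} applied to the same $\epsilon$ and $\AA$: it asserts that, with probability at least $1-1/n^{10}$, the matrix $\WW$ constructed on Line~\ref{line:before case} of {\sc PowerMethod}($\epsilon,\AA$) contains a column $\ww$ with $\ww^{\top}\AA\ww \ge (1-\tfrac{\epsilon}{2})\lambda_{\max}(\AA)$. Conditioning on this event, everything that follows is deterministic. The running time claim is literally the running time bound already proved in \Cref{thm:StaticPowerMain}, so nothing new is required there.

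For the first statement I would substitute the hypothesis $\lambda_{\max}(\AA)\ge 1-\epsilon$, giving $\ww^{\top}\AA\ww \ge (1-\tfrac{\epsilon}{2})(1-\epsilon)=1-\tfrac{3\epsilon}{2}+\tfrac{\epsilon^{2}}{2}\ge 1-\tfrac{3\epsilon}{2}\ge 1-5\epsilon$. For the second statement I would substitute $\lambda_{\max}(\AA)\ge 1-\tfrac{\epsilon}{\log n}$, giving $\ww^{\top}\AA\ww \ge (1-\tfrac{\epsilon}{2})(1-\tfrac{\epsilon}{\log n})\ge 1-\tfrac{\epsilon}{2}-\tfrac{\epsilon}{\log n}\ge 1-\epsilon$, where the last step uses $\log n\ge 2$ (which we may assume, since for $n$ below an absolute constant the statement is trivial). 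In both regimes the witnessing $\ww$ lies in $\WW$, which is exactly what the corollary claims, and the failure probability $1/n^{10}$ is inherited unchanged.

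I expect no real obstacle here; the only points worth stating explicitly are (i) that a single run of {\sc PowerMethod} with its accuracy parameter set to $\epsilon$ — with no re-parameterization or re-running — already yields one matrix $\WW$ that simultaneously handles the $1-5\epsilon$ regime and the $1-\epsilon$ regime, and (ii) the mild constraint $\log n\ge 2$ used in the second case. The write-up should be only a few lines long.
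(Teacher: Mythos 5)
Your proposal is correct and matches the paper's treatment: the paper states this corollary as a direct consequence of \Cref{thm:StaticPowerMain}, and your argument simply makes explicit the intended substitution of the lower bounds on $\lambda_{\max}(\AA)$ into the guarantee $\ww^{\top}\AA\ww \ge (1-\tfrac{\epsilon}{2})\lambda_{\max}(\AA)$, inheriting the probability and runtime unchanged. The arithmetic in both regimes (and the harmless $\log n \ge 2$ caveat, which the paper implicitly assumes) is fine.
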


Observe that, if the algorithm returns $[r_0,\WW]$, then $(\ww^{(r)})^{\top}\AA\ww^{(r)}\geq 1-5\epsilon$ for $r=r_0$, and $\ww^{(r_0)}$ is therefore a solution to Problem~\ref{def:DecMaxEval} when there is no update. The power method and its analysis are standard, and we thus defer the proof of \Cref{thm:StaticPowerMain} to the appendix.

Next, in \Cref{alg:Init,alg:DynamicMaxPM} we describe an algorithm for Problem~\ref{def:DecMaxEval} when we have an online sequence of updates $\vv_1,\cdots, \vv_T$. 
The algorithm starts by initializing $R = O(\log n)$ copies of the approximate maximum eigenvectors from the power method. Given a sequence of updates, as long as one of the copies is the witness that the current matrix $\AA_t$ still has a large eigenvalue, i.e., there exists $r$ where $(\ww^{(r)})_t^{\top}\AA_t\ww^{(r)}_t\geq 1-40\epsilon$, we can just return $\ww^{(r)}$ as the solution to Problem~\ref{def:DecMaxEval}. 
Otherwise, $(\ww^{(r)})_t^{\top}\AA_t\ww^{(r)}_t< 1-40\epsilon$ for all $r \le R$ and none of the vectors from the previous call to the power method are a witness of large eigenvalues anymore. In this case, we simply recompute these vectors by calling the power method again. If the power method returns that there is no large eigenvector, then we return {\sc False} from now. Otherwise, we continue in the same manner. 
Note that our algorithm is very simple, but as we will see, the analysis is not straightforward.
\begin{algorithm}
\caption{Initialization}\label{alg:Init}
 \begin{algorithmic}[1]
 \Procedure{Init}{$\epsilon, \AA_{0}$}
\State $\WW \leftarrow$ {\sc PowerMethod}($\epsilon,\AA_0$)\label{algline:PMInit}
\State \Return $\WW$
\EndProcedure
\end{algorithmic}
\end{algorithm}

\begin{algorithm}
\caption{Update algorithm at time $t$ ($A_{t-1},r_t,\WW_{t-1}= [w^{(r)}_{t-1}: r= 1,\cdots R], \eps$ are maintained)}\label{alg:DynamicMaxPM}
 \begin{algorithmic}[1]
\Procedure{Update}{$\vv_t$}
\State $\AA_t \leftarrow \AA_{t-1}-\vv_t\vv_t^{\top}$
\If{$ (\ww^{(r)}_{t-1})^{\top}\AA_t\ww^{(r)}_{t-1} <1-40\epsilon$ for all $r \le R$}\label{algline:Check}
\State $[r_t,\WW_t]\leftarrow$ {\sc PowerMethod}($\epsilon,\AA_t$)\label{algline:PM}
\If{{\sc PowerMethod}($\epsilon,\AA_t$) returns {\sc False}}
\State \Return {\sc False} for all further updates
\EndIf
\Else
\State $r_t\leftarrow $ smallest $r$ such that $(\ww^{(r)}_{t-1})^{\top}\AA_t\ww^{(r)}_{t-1} \geq 1-40\epsilon$
\State $\WW_t \leftarrow \WW_{t-1}$
\EndIf
\State \Return $[r_t,\WW_t]$
\EndProcedure 
 \end{algorithmic}
\end{algorithm}

\subsection{Proof Overview}
The overall proof of \Cref{thm:upper}, including the proof of correctness and the runtime depends on the number of executions in Line~\ref{algline:PM} in Algorithm~\ref{alg:DynamicMaxPM}. If the number of executions of Line~\ref{algline:PM} is bounded by $\poly(\log n/\epsilon)$, then the remaining analysis is straightforward. Therefore, the majority of our analysis is dedicated to proving this key lemma, i.e., $\poly(\log n/\epsilon)$ bound on the number of calls to the power method:
\begin{lemma}[Key Lemma]\label{lem:BoundW}
The number of executions of Line~\ref{algline:PM} over all updates is bounded by $O(\log n\log^5\frac{n}{\epsilon}/\epsilon^2)$ with probability at least $1-\frac{1}{n}$.
\end{lemma}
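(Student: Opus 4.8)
The plan is to bound the number of calls to the power method (Line~\ref{algline:PM}) by a potential‐function argument based on the $\epsilon$-max dimension $\dim{c\epsilon,\AA_t}$ for a suitable constant $c$. The intuition: each call to the power method is triggered because \emph{all} $R$ stored witnesses $\ww^{(r)}_{t-1}$ have dropped below $1-40\epsilon$ on the current matrix $\AA_t$. Since, at the time they were last computed (say at step $s<t$), these vectors were produced by {\sc PowerMethod}($\epsilon,\AA_s$) and, by \Cref{thm:StaticPowerMain}, each has an overwhelmingly large projection onto $\Span(O(\epsilon),\AA_s)$ and negligible projection ($\le n^{-8}\lambda_i/\lambda_1$) onto every eigenvector with eigenvalue $\le \lambda_{\max}/2$. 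If all of them have now fallen below $1-40\epsilon$ on $\AA_t$, then the updates $\vv_{s+1},\dots,\vv_t$ must have substantially shrunk the entire subspace $\Span(O(\epsilon),\AA_s)$ — intuitively, $\AA_t$ restricted to that subspace has operator norm noticeably below $1$. The key structural claim to establish is therefore: \emph{if a power-method call happens at step $t$ with previous recomputation at step $s$, then $\dim{c_1\epsilon,\AA_t} < \dim{c_1\epsilon,\AA_s}$ for an appropriate constant $c_1$}, or more precisely that a constant fraction of a carefully chosen "level set" of the spectrum is destroyed.

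**Key steps, in order.** First, I would fix the snapshots: let $0=t_0<t_1<t_2<\dots<t_L$ be the update steps at which Line~\ref{algline:PM} is executed, so $L$ is the quantity to bound. For each $j$, the stored matrix $\WW_{t_j}$ comes from a fresh call {\sc PowerMethod}($\epsilon,\AA_{t_j}$). Second, I would invoke \Cref{thm:StaticPowerMain}: with probability $1-2R/n^{10}$ over all $L\le\poly(n)$ calls (union bound — this is where the $1-1/n$ failure probability comes from), \emph{every} stored vector $\ww^{(r)}$ at snapshot $t_j$ satisfies both $\ww^{\top}\AA_{t_j}\ww\ge(1-\epsilon/2)\lambda_{\max}(\AA_{t_j})$ and the eigenvector-projection bound. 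Third — the crux — I would argue the \textbf{descent step}: because at time $t_{j+1}$ all $R$ vectors from snapshot $t_j$ have $\ww^{\top}\AA_{t_{j+1}}\ww<1-40\epsilon$ while at time $t_j$ they all had $\ww^{\top}\AA_{t_j}\ww\ge 1-5\epsilon$ (using \Cref{thm:StaticPower} since $\lambda_{\max}(\AA_{t_j})\ge 1-\epsilon$), and because each $\ww$ is almost entirely supported on $\Span(c\epsilon,\AA_{t_j})$ minus the tiny tail, the matrix $\AA_{t_{j+1}}$ must have small quadratic form on a large chunk of $\Span(c\epsilon,\AA_{t_j})$. Quantitatively I'd want to show that $\Span(c_1\epsilon,\AA_{t_{j+1}})$ cannot contain a subspace of $\Span(c\epsilon,\AA_{t_j})$ of more than, say, half its dimension — otherwise a random Gaussian in the power method at step $t_j$ would, with good probability, land enough mass there that its image under $\AA_{t_j}^K$ stays a valid witness (this uses \Cref{lem:chi}: a Gaussian has non-negligible projection, at least $1/n^4$, on any fixed direction). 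Fourth, I would set up the potential. The dimension $\dim{c_1\epsilon,\AA_t}$ is monotone non-increasing in $t$ (decreasing updates only shrink eigenvalues) and lies in $\{0,1,\dots,n\}$, but a naive "each call drops it by at least $1$" only gives $L\le n$, which is far too weak. So instead I'd use a \emph{weighted} potential, dyadically partitioning the eigenvalues in $[(1-c_1\epsilon),1]$ — actually the right move (matching the $\log^5(n/\epsilon)/\epsilon^2$ target) is to track, across $\Theta(\log(n/\epsilon)/\epsilon)$ geometric thresholds $\epsilon, 2\epsilon, 4\epsilon,\dots$, how the dimension at each threshold evolves, and argue each power-method call forces a $(1-\Omega(1))$-factor multiplicative drop in the dimension at \emph{some} threshold, or pushes mass down across a threshold; a drop by a constant factor can happen only $O(\log n)$ times per threshold, giving $O(\log n)\times\Theta(\log(n/\epsilon)/\epsilon)$ times the remaining $\poly(\log(n/\epsilon)/\epsilon)$ slack. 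I would account the bookkeeping so the final count is $O(\log n\log^5\frac{n}{\epsilon}/\epsilon^2)$.

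**Main obstacle.** The hard part is the descent step — making rigorous the claim "if all $R$ witnesses from snapshot $s$ simultaneously die by step $t$, then a constant fraction of the $c_1\epsilon$-max subspace has been destroyed." The subtlety is that the power method's output at step $s$ is not an exact eigenvector: it is a Gaussian-initialized vector amplified $K=\Theta(\log(n/\epsilon)/\epsilon)$ times, so it has a (controlled but nonzero) component outside $\Span(c\epsilon,\AA_s)$, and conversely within the top subspace it is \emph{not} uniformly spread — it is biased toward the very top eigenvalues. One must show that nonetheless the witnesses collectively "cover" the whole top subspace well enough that their simultaneous failure certifies a genuine rank/dimension loss, not merely an unlucky alignment. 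This is where \Cref{lem:chi} and the $n^{-8}$ tail bound in \Cref{thm:StaticPowerMain} are essential: the former guarantees that the Gaussian-initialized iterate has mass $\ge 1/n^4$ along any target direction in the top subspace (so a surviving top direction would keep at least one witness alive), and the latter guarantees no witness is accidentally pointing into a genuinely small-eigenvalue direction. Balancing these — i.e., choosing the constants $c,c_1$ and the threshold $1-40\epsilon$ vs. $1-5\epsilon$ vs. $1-\epsilon$ so the arithmetic of "witness quadratic form dropped from $\ge 1-5\epsilon$ to $<1-40\epsilon$" forces the desired spectral change — is the technical heart of the argument, and I expect it to occupy the bulk of the proof.
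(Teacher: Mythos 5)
Your high-level plan (snapshot the recomputation times, use the power method's guarantees plus obliviousness, and charge each recomputation to a monotone spectral potential over levels of the top window) is indeed the paper's strategy, but the step you yourself flag as the crux — the descent step — is not just unproven, it is false as you state it. You claim that if more than half of $\Span(c\epsilon,\AA_{t_j})$ survives inside the new top space then some stored witness stays alive. The power-method witness is exponentially biased toward the very top of the window: across a window of width $3\epsilon$ the amplification ratio is $\bigl(\tfrac{1}{1-3\epsilon}\bigr)^{2K}=\poly(n/\epsilon)$, so if $\AA_{t_j}$ has one eigenvalue at $1$ and many at $1-3\epsilon$, every witness is essentially aligned with the single top eigenvector. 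That eigenvector is determined by $\AA_{t_j}$, hence known to the \emph{oblivious} adversary, who can remove just that one direction: all $R$ witnesses die while all but one dimension of the old top space survives. So no constant-fraction (or half-dimension) loss is forced; the strongest true statement is the paper's \Cref{lem:EigenspaceChange}, which only guarantees an $\Omega(\epsilon/\log\frac{n}{\epsilon})$-fraction loss (or a single dimension) in \emph{one} thin level of width $\frac{\epsilon}{5\log\frac{n}{\epsilon}}$ that is ``important'' relative to the cumulative dimension above it (\Cref{def:impNu}). The thin slicing is essential so that $\bigl(\lambda_{a_\nu}/\lambda_{b_\nu}\bigr)^{2K}=O(1)$; your geometric thresholds $\epsilon,2\epsilon,4\epsilon,\dots$ would reintroduce $\poly(n/\epsilon)$ distortion inside a slice and break the argument that a witness's mass on a level is proportional to the level's dimension.

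Your proposed mechanism is also backwards: \Cref{lem:chi}'s $1/n^4$ Gaussian mass along a surviving direction contributes negligibly to the Rayleigh quotient and cannot keep a witness above $1-40\epsilon$. What is actually needed (and what the paper proves in \Cref{cl:progress} via \Cref{lem:boundLowEV,lem:GaussianProjD,lem:NotImp}) is the converse bound: if every important level loses only a tiny fraction, then the total loss $\ww^{\top}\VV\ww$ is at most $35\epsilon$ for every witness, using the decomposition of $\VV$ over $\Ttil$, $\overline{T}$ and the slices $T_\nu-\Ttil$, scaled-Gaussian domination, and norm concentration. Finally, your counting does not reach the stated bound: the achievable per-call drop is not a constant factor but only $\bigl(1-\Omega(\epsilon^2/\log^4\frac{n}{\epsilon})\bigr)$ at some level (because an important level's dimension can be as small as an $\epsilon/\log^3\frac{n}{\epsilon}$-fraction of the potential above it), and it is exactly this drop, applied per level over $15\log\frac{n}{\epsilon}$ levels with $\Phi_j\le n$, that yields $O(\log n\log^5\frac{n}{\epsilon}/\epsilon^2)$; invoking an unspecified ``poly slack'' is not a substitute for this accounting.
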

Given the key lemma, the correctness and runtime analyses are quite straightforward and are presented in \Cref{sec:correct}. We now give an overview of the proof of \Cref{lem:BoundW}.

Let us consider what happens between two consecutive calls to Line~\ref{algline:PM}, say at $\AA$ and $\AAtil = \AA - \sum_{i=1}^k\vv_i\vv_i^{\top}$. We first define the following subspaces of $\AA$ and $\AAtil$.
Recall \Cref{def:subspace}, which we use to define the following subspaces.

\begin{definition}[Subspaces of $\AA$ and $\AAtil$]\label{def:SpaceA} Given $\epsilon>0$, $\AA$, and $\AAtil$ define for $\nu = 0,1,\cdots, 15\log \frac{n}{\epsilon}-1$:
\[
T_{\nu} = \Span\left(\frac{(\nu+1)\epsilon}{5\log \frac{n}{\epsilon}},\AA\right) -\Span\left(\frac{\nu\epsilon}{5\log \frac{n}{\epsilon}},\AA\right),
\]
and,
\[
\tilde{T}_{\nu} = \Span\left(\frac{(\nu+1)\epsilon}{5\log \frac{n}{\epsilon}},\AAtil\right)-\Span\left(\frac{\nu\epsilon}{5\log \frac{n}{\epsilon}},\AAtil\right).
\]
That is, the space $T_{\nu}$ and $\Ttil_\nu$ are spanned by eigenvectors of $\AA$ and $\AAtil$, respectively, corresponding to eigenvalues between $\left(1-(\nu+1)\frac{\epsilon}{5\log \frac{n}{\epsilon}}\right)\lambda_0$ and $\left(1-\nu\frac{\epsilon}{5\log \frac{n}{\epsilon}}\right)\lambda_0$.

Let $d_{\nu} = \dim{T_{\nu}}$ and $\tilde{d}_{\nu} = \dim{\tilde{T}_{\nu}}$. Also define,
\[
\tilde{T} = \Span(3\epsilon,\AAtil),\quad  T = \Span(3\epsilon,\AA),
\]
and let $d = \dim{T}$, $\tilde{d} = \dim{\tilde{T}}$.
\end{definition}

Observe that $T = \sum_{\nu = 0}^{15\log\frac{n}{\eps}-1}T_\nu$ and similarly $\Ttil = \sum_{\nu = 0}^{15\log\frac{n}{\eps}-1}\Ttil_\nu$. We next define some indices/levels corresponding to large subspaces, which we call ``important levels''.
\begin{definition}[Important $\nu$]\label{def:impNu} We say a level $\nu$ is important if,
\[d_{\nu} \geq \frac{\epsilon}{600\log^3 \frac{n}{\epsilon}} \sum_{\nu'<\nu}d_{\nu'}.
\]
We will use $\mathcal{I}$ to denote the set of $\nu$ that are important.
\end{definition}

The main technical lemma that implies \Cref{lem:BoundW} is the following:

\begin{restatable}[Measure of Progress]{lemma}{progress}\label{lem:EigenspaceChange}
	Let $\epsilon>0$ and let $\WW=[\ww^{(1)},\dots,\ww^{(R)}]$ be as defined in Line~\ref{line:before case} in the execution of {\sc PowerMethod}($\epsilon,\AA$). Let $\vv_{1},\cdots,\vv_{k}$ be a sequence of updates generated by an oblivious adversary and define $\AAtil=\AA-\sum_{i=1}^{k}\vv_{i}\vv_{i}^{\top}$.
	
	Suppose that $\lambda_{\max}(\AA)\ge 1-\epsilon$ and $\ww^{\top}\AAtil\ww<1-40\epsilon$ for all $\ww\in\WW$. Then, with probability at least $1-\frac{50\log\frac{n}{\epsilon}}{n^{2}}$, for some $\nu\in\mathcal{I}$,
	\begin{itemize}
		\item $\dim{T_{\nu}-\tilde{T}}\geq\frac{\epsilon}{300\log\frac{n}{\epsilon}}d_{\nu}$ if $d_{\nu}\geq\frac{3000\log n\log\frac{n}{\epsilon}}{\epsilon}$, or
		\item $\dim{T_{\nu}-\tilde{T}}\geq1$ if $d_{\nu}<\frac{3000\log n\log\frac{n}{\epsilon}}{\epsilon}$.
	\end{itemize}
\end{restatable}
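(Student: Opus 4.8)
The plan is to argue by contradiction: suppose that for \emph{every} important level $\nu\in\mathcal{I}$ the conclusion fails, i.e.\ $\dim{T_\nu-\tilde T}$ is small (either $<\frac{\epsilon}{300\log\frac n\epsilon}d_\nu$ in the ``large'' regime, or $=0$ in the ``small'' regime). I would first show that this forces $T$ to be ``mostly contained'' in $\tilde T=\Span(3\epsilon,\AAtil)$ in a quantitative sense, namely that the part of $T$ lying outside $\tilde T$ has total dimension at most something like $\frac{\epsilon}{100\log\frac n\epsilon}\cdot d$. The summation bookkeeping here uses Definition~\ref{def:impNu}: non-important levels $\nu$ satisfy $d_\nu<\frac{\epsilon}{600\log^3\frac n\epsilon}\sum_{\nu'<\nu}d_{\nu'}$, so a standard telescoping/geometric-growth argument bounds $\sum_{\nu\notin\mathcal I}d_\nu$ by a $\poly(\epsilon/\log\tfrac n\epsilon)$ fraction of $d$; and for the important levels the failure hypothesis directly bounds $\dim{T_\nu-\tilde T}$. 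Combining, $\dim{T-\tilde T}\le \frac{\epsilon}{C\log\frac n\epsilon}d$ for a suitable constant $C$, so at least $(1-\frac{\epsilon}{C\log\frac n\epsilon})d$ dimensions of $T$ survive inside $\tilde T$.

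Next I would translate ``$T$ mostly inside $\tilde T$'' into a statement about the random vectors $\ww\in\WW$ produced by {\sc PowerMethod}($\epsilon,\AA$). By the (furthermore part of) Lemma~\ref{thm:StaticPowerMain} applied to $\AA$ — using $\lambda_{\max}(\AA)\ge 1-\epsilon$, and that $\lambda_0=\lambda_{\max}(\AA_0)$ is only slightly above $1$ — each $\ww^{(r)}$ has essentially all of its mass on eigenvectors of $\AA$ with eigenvalue $\ge(1-3\epsilon)\lambda_0$, i.e.\ on $T=\Span(3\epsilon,\AA)$, with the projection onto the complement being polynomially tiny (the $\frac{1}{n^8}\cdot\frac{\lambda_i}{\lambda_1}$ bound). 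Since we've shown $T$ is almost entirely inside $\tilde T=\Span(3\epsilon,\AAtil)$, it follows that $\ww^{(r)}$ lies almost entirely in $\Span(3\epsilon,\AAtil)$ — every eigenvector of $\AAtil$ it has nontrivial mass on has $\AAtil$-eigenvalue at least $(1-3\epsilon)\lambda_0\ge(1-3\epsilon)(1-\epsilon)$... wait, I need the bound going the right way: $\lambda_0\ge 1$ is false in general, but $\lambda_{\max}(\AA_0)\le 1+\frac{\epsilon}{\log n}$ and $\lambda_{\max}(\AA)\ge 1-\epsilon$ together pin $\lambda_0$ to $[1-\epsilon,\,1+\tfrac\epsilon{\log n}]$, so $(1-3\epsilon)\lambda_0\ge 1-5\epsilon$ say. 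Hence $\ww^{(r)\top}\AAtil\ww^{(r)}\ge (1-5\epsilon)\cdot(1-o(1)) - (\text{tiny}) \ge 1-40\epsilon$, for at least one (in fact every) $r$ for which the power method's guarantee held. But the hypothesis of the lemma says $\ww^\top\AAtil\ww<1-40\epsilon$ for \emph{all} $\ww\in\WW$ — contradiction. The probability loss $\frac{50\log\frac n\epsilon}{n^2}$ comes from a union bound over the $R=O(\log n)$ copies and over the $O(\log\frac n\epsilon)$ levels $\nu$, each costing the $O(1/n^{10})$ or $O(1/n^2)$-type failure probabilities of Lemma~\ref{thm:StaticPowerMain} and of the $\chi^2$ anti-concentration in Lemma~\ref{lem:chi}.

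The main obstacle I anticipate is making the ``$T$ mostly inside $\tilde T$ $\Rightarrow$ each $\ww$ has its Rayleigh quotient $\ge 1-40\epsilon$'' step quantitatively airtight, because the directions of $T$ that \emph{do} leak outside $\tilde T$ could, a priori, carry a disproportionate share of the power-method vector's mass, and on those directions $\AAtil$ could have eigenvalue as small as $0$. Controlling this requires a lower bound on how much of $\ww$'s mass sits on the surviving (inside-$\tilde T$) part of $T$, which is where Lemma~\ref{lem:chi} enters: with probability $\ge 1-1/n^2$ a Gaussian has squared projection $\ge 1/n^4$ on any fixed direction, and more to the point a random Gaussian spread over a $d'$-dimensional surviving subspace retains an $\Omega(d'/d)$ fraction of its total mass after $K$ power iterations (since eigenvalues within $T$ differ by at most a $(1-3\epsilon)$ factor, so $K=\Theta(\log(n/\epsilon)/\epsilon)$ iterations distort the mass distribution within $T$ by at most a $\poly(n/\epsilon)$ factor, which is absorbed). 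So the leaked part, having dimension at most a $\frac{\epsilon}{C\log\frac n\epsilon}$ fraction and contributing eigenvalue-deficit at most $\lambda_0$ on each such direction, drags the Rayleigh quotient down by at most $O(\epsilon/\log\frac n\epsilon)\cdot\poly\ll 39\epsilon$ once $C$ is chosen large enough — but getting the polynomial factors to actually cancel rather than accumulate is the delicate part, and I would be careful to set up the "surviving subspace" so that the power iteration's contraction within $T$ is bounded \emph{before} invoking the subspace-containment count.
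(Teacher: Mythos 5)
Your overall contradiction structure matches the paper's (assume every important level leaks little, deduce that every $\ww\in\WW$ still has a large Rayleigh quotient against $\AAtil$, contradict the hypothesis), but the intermediate step you rely on — a single \emph{global} dimension count $\dim{T-\Ttil}\le\frac{\epsilon}{C\log\frac n\epsilon}\,d$ followed by ``hence $\ww$ has little mass on the leaked part'' — has a genuine gap, and it is exactly the one you flag at the end. Knowing that the leak is a tiny \emph{fraction of the dimension} of $T$ does not bound the \emph{mass} of the power-method vector on the leak: within $T=\Span(3\epsilon,\AA)$ the eigenvalues range over $[(1-3\epsilon)\lambda_0,\lambda_0]$, so after $K=\frac{4\log\frac n\epsilon}{\epsilon}$ iterations the weights $\lambda_i^{2K}$ differ across $T$ by a factor of up to $(1-3\epsilon)^{-2K}=\poly(n/\epsilon)$. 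Your proposed fix asserts this $\poly(n/\epsilon)$ distortion ``is absorbed,'' but it is not: the dimension-fraction saving is only $\epsilon/\log\frac n\epsilon$, so in the bad configuration where the leaked directions sit at the top of $T$ (eigenvalues $\approx\lambda_0$) while the bulk of $d$ sits at the bottom (eigenvalues $\approx(1-3\epsilon)\lambda_0$), the leaked part carries mass $\approx\dim{T-\Ttil}\cdot\lambda_0^{2K}$ versus $\approx d\cdot\lambda_0^{2K}\cdot(\epsilon/n)^{O(1)}$ for the rest, and since $d\le n$ the leak can dominate the vector entirely even though its dimension fraction is $O(\epsilon/\log\frac n\epsilon)$. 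On those leaked directions $\AAtil$ can have eigenvalue $0$, so you cannot conclude $\ww^{\top}\AAtil\ww\ge1-40\epsilon$.

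This is precisely the difficulty the paper's level structure is designed to kill, and it is why the lemma is stated level-by-level rather than for $T-\Ttil$ in aggregate. The paper slices $T$ into the $15\log\frac n\epsilon$ bands $T_\nu$, inside which eigenvalues differ by at most a $\bigl(1+\frac{\epsilon}{2\log\frac n\epsilon}\bigr)$ factor, so $\lambda^{2K}$ ratios are $O(1)$; it then bounds $\|\Pi_\nu\ww\|^2$ by comparing $\dim{T_\nu-\Ttil}$ only against dimensions at the \emph{same or higher} eigenvalue levels — against $d_\nu$ itself when $\nu\in\mathcal{I}$ (Lemma~\ref{lem:GaussianProjD}), and against $\sum_{\nu'<\nu}d_{\nu'}$ when $\nu\notin\mathcal{I}$ via the importance threshold (Lemma~\ref{lem:NotImp}) — so that the eigenvalue ratio $(\lambda_{a_\nu}/\lambda_{b_{\nu-1}})^{2K}\le 1$ or $(\lambda_{a_\nu}/\lambda_{b_\nu})^{2K}=O(1)$ always points the right way. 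Summing the resulting per-level bounds $O(\epsilon/\log\frac n\epsilon)$ over the $O(\log\frac n\epsilon)$ levels, plus the $\Ttil$ and $\overline T$ terms, gives $\ww^{\top}\VV\ww\le35\epsilon$ (Claim~\ref{cl:progress}) and hence the contradiction. To repair your argument you would have to redo the mass accounting band-by-band in essentially this way; the global count $\dim{T-\Ttil}\le\frac{\epsilon}{C\log\frac n\epsilon}d$ by itself cannot yield the needed mass bound.
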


We prove this lemma in \Cref{sec:progress}. Intuitively speaking, it means that, whenever  Line~\ref{algline:PM} of \Cref{alg:DynamicMaxPM} is executed, there is some important level $\nu$ such that an $\Omega(\eps/\poly \log(n/\eps))$-fraction of eigenvalues of $\AA$ at level $\nu$ have decreased in value. This is the crucial place where we exploit an oblivious adversary.

Given \Cref{lem:EigenspaceChange}, the remaining proof of \Cref{lem:BoundW} follows a potential function analysis which is presented in detail in Section~\ref{sec:proof key}. We consider potentials $\Phi_j = \sum_{\nu=0}^jd_{\nu}$ for $j = 0,\cdots,15\log\frac{n}{\epsilon}-1$. The main observation is that $\Phi_j$ is non-increasing over time for all $j$, and whenever there exists $\nu_0\in \mathcal{I}$ that satisfies the condition of Lemma~\ref{lem:EigenspaceChange}, $\Phi_{\nu_0}$ decreases by $\dim{T_{\nu_0}-\tilde{T}}$. Since $\dim{T_{\nu_0}-\tilde{T}} \geq \Omega(\epsilon/\poly\log(n/\epsilon))d_{\nu_0}$ and $\nu_0\in \mathcal{I}$, i.e., $\Phi_{\nu_0} = \sum_{\nu<\nu_0}d_{\nu} + d_{\nu_0} \leq d_{\nu_0} \left(\frac{O(\log^3\frac{n}{\epsilon})}{\epsilon}+1\right) $,  we can prove that $\Phi_{\nu_0}$ decreases by a multiplicative factor of $\Omega(1-\epsilon^2/\poly\log(n/\epsilon))$. As a result, every time our algorithm executes Line~\ref{algline:PM}, $\Phi_{j}$ decreases by a multiplicative factor for some $j$, and since we have at most $15\log\frac{n}{\epsilon}$ values of $j$, we can only have $\poly(\log n/\epsilon)$ executions of Line~\ref{algline:PM}.

It remains to describe how we prove Lemma~\ref{lem:EigenspaceChange} at a high level. We can write $\ww^{\top}\AAtil\ww$ for any $\ww \in \WW$ as
\[
\ww^{\top}\AAtil\ww = \ww^{\top}\AA\ww - \ww^{\top}\VV\ww,
\]
for $\VV = \sum_{i=1}^k\vv_i\vv_i^{\top}$. 
Our strategy is to show that:
\begin{align*}\label{star}
        &\text{If } \dim{T_{\nu}-\tilde{T}}\text{ does not satisfies the inequalities in \Cref{lem:EigenspaceChange} for all }\nu\in \mathcal{I},\\ 
        &\text{then }\ww^{\top}\VV\ww \leq 35\epsilon \text{ for all } \ww \in \WW. \tag{$\star$}
\end{align*}    
Given \eqref{star} as formalized later in \Cref{cl:progress}, we can conclude \Cref{lem:EigenspaceChange} because, from the definition of $\AA$ and $\AAtil$, we have that for some $\ww\in \WW$, $\ww^{\top}\AA\ww \geq 1-5\epsilon$ by \Cref{thm:StaticPower} and $\ww^{\top}\AAtil\ww <1-40\epsilon$. As a result for this $\ww$, $\ww^{\top}\VV\ww >35\epsilon$. Now, by contra-position of \eqref{star}, we have that $\dim{T_{\nu}-\tilde{T}}$ is large for some $\nu \in \mathcal{I}$.

To prove \eqref{star}, we further decompose $\ww^{\top}\VV\ww$ as
\[
\ww^{\top}\VV\ww = \ww^{\top}\VV_{\tilde{T}}\ww+\sum_{\nu = 0}^{15\log\frac{n}{\epsilon}-1}\ww^{\top}\VV_{T_{\nu} -\tilde{T}}\ww+ \ww^{\top}\VV_{\overline{T}}\ww.
\]

In the above equation, $\VV_{\tilde{T}}= \Pi_{\tilde{T}}\VV\Pi_{\tilde{T}},\VV_{T_{\nu} -\tilde{T}}=\Pi_{\nu}\VV\Pi_{\nu}$, and $\VV_{\overline{T}} = \Pi_{\overline{T}}\VV\Pi_{\overline{T}}$ where $\Pi_{\tilde{T}},\Pi_{\nu},\Pi_{\overline{T}}$ denote the projections matrices that project any vector onto the spaces $\tilde{T}$, $T_{\nu}-\tilde{T}$, and $\overline{T}$ respectively\footnote{Suppose a subspace $S$ is spanned by vectors $\uu_1,\dots,\uu_k$. Let $\UU = [\uu_1,\dots,\uu_k]$. Recall that the projection matrix onto $S$ is $\UU(\UU^\top \UU)^{-1} \UU^\top$.}. Refer to Section~\ref{sec:progress} for proof of why such a split is possible. Our proof of \eqref{star} then bounds the terms on the right-hand side. Let us consider each term separately.
\begin{enumerate}
	\item $\ww^{\top}\VV_{\tilde{T}}\ww$: We prove that this is always at most $10\epsilon(1+\epsilon)$ (Equation~\eqref{eq:V2}). From the definition of $\VV,$ 
	\[
	\ww^{\top}\VV_{\tilde{T}}\ww = \ww^{\top}\Pi_{\tilde{T}}\AA\Pi_{\tilde{T}}\ww - \ww^{\top}\Pi_{\tilde{T}}\AAtil\Pi_{\tilde{T}}\ww.
	\] 
	Since $\Pi_{\tilde{T}}\ww$ is the projection of $\ww$ along the large eigenspace of $\AAtil$, the second term on the right-hand side above is large, i.e. $\geq (1- 10\epsilon)\lambda_0\|\Pi_{\tilde{T}}\ww\|^2$. The first term on the right-hand side can be bounded as, $\ww^{\top}\Pi_{\tilde{T}}\AA\Pi_{\tilde{T}}\ww \leq \|\AA\|\|\Pi_{\tilde{T}}\ww\|^2 \leq \lambda_0 \|\Pi_{\tilde{T}}\ww\|^2$.
 Therefore the difference on the right-hand side is at most $10\epsilon\lambda_0\|\Pi_{\tilde{T}}\ww\|^2 \leq 10 \epsilon\lambda_0\|\ww\|^2 = 10\epsilon \lambda_0 \leq 10\epsilon(1+\epsilon)$.
	\item $\ww^{\top}\VV_{\overline{T}}\ww$: Observe that this term corresponds to the projection of $\ww$ along the space spanned by the eigenvalues of $\AA$ of size at most $1-3\epsilon$. Let $\uu_i$ and $\lambda_i$ denote an eigenvector and eigenvalue pair with $\lambda_i<1-3\epsilon$. Since the power method can guarantee that $\ww^{\top}{\uu_i}\approx\lambda_i^{2K}$, we have  $\lambda_i^{2K} \leq (1-3\epsilon)^{2K}\leq \poly\left(\frac{\epsilon}{n}\right)$ is tiny. So we have that $\ww^{\top}\VV_{\overline{T}}\ww\leq \epsilon$ (Lemma~\ref{lem:boundLowEV}).

	\paragraph{}Before we look at the final case, we define a basis for the space $T_{\nu}$.
	\begin{definition}[Basis for $T_{\nu}$]\label{def:Basis} Let $T_{\nu}$ be as defined in Definition~\ref{def:SpaceA}.  Define indices $a_{\nu}$ and $b_{\nu}$ with $b_{\nu}-a_{\nu}+1 = d_{\nu}$ such that the basis of $T_{\nu}$ is given by $\uu_{a_{\nu}},\cdots, \uu_{b_{\nu}}$, where $\uu_1,\uu_2,\cdots,\uu_n$ are the eigenvectors of $\AA$ in decreasing order of eigenvalues.
	\end{definition}

    \item $\ww^{\top}\VV_{T_{\nu} -\tilde{T}}\ww$: For this discussion, we will ignore the constant factors and assume that the high probability events hold. Let $\Pi_{\nu}$ denote the projection matrix for the space $T_{\nu}-\Ttil$. Observe that $\ww^{\top}\VV_{T_{\nu}-\Ttil}\ww=\ww^{\top}\Pi_{\nu}\VV\Pi_{\nu}\ww\le\|\VV\|\|\Pi_{\nu}\ww\|^2 \le (1+\epsilon)\|\Pi_{\nu}\ww\|^2$, where the last inequality is because $\AAtil=\AA-\VV\succeq0$, and therefore, $\|\VV\|\le\|\AA\|\le(1+\epsilon)$. Hence, it suffices to bound $\|\Pi_{\nu}\ww\|^2 = O(\epsilon)$. 

We can write $\ensuremath{\ww=\frac{\sum_{i=1}^{n}\lambda_{i}^{K}\alpha_{i}\uu_{i}}{\sqrt{\sum_{i=1}^{n}\lambda_{i}^{2K}\alpha_{i}^{2}}}}$ where $\lambda_{i},\uu_{i}$'s are the eigenvalues and eigenvectors of $\AA$ and $\alpha_{i}\sim N(0,1)$. 
Define $\zz = \sum_{i=1}^n z_i \uu_i$ where $z_i = \lambda_i^K\alpha_i$. That is, 
$\ww=\frac{\zz}{\|\zz\|}$. Since $\|\Pi_{\nu}\ww\| = \|\Pi_{\nu}\zz\|/\|\zz\|$, it suffices to show that $\|\Pi_{\nu}\zz\|^2 \le O(\epsilon) \| \zz\|^2$. We show this in two separate cases. In both cases, we start with the following bound
\[
\|\Pi_{\nu}\zz\|^{2}\le\lambda_{a_{\nu}}^{2K}\cdot\dim{T_{\nu}-\tilde{T}},
\]
which holds with high probability. To see this, let $\boldsymbol{g}_{\nu}\sim N(0,1)$ be a gaussian vector in the space $T_{\nu}-\tilde{T}$. 
We can couple $\textbf{g}_\nu$ with $\Pi_{\nu}\zz$ so that $\Pi_{\nu}\zz$ is dominated by $\lambda_{a_{\nu}}^{K}\cdot \boldsymbol{g}_{\nu}$. So $\|\Pi_{\nu}\zz\|^{2}\le\lambda_{a_{\nu}}^{2K}\|\boldsymbol{g}_{\nu}\|^{2}$. By \Cref{lem:NormG}, the norm square of gaussian vector is concentrated to its dimension so $\|\boldsymbol{g}_{\nu}\|^{2}\le\dim{T_{\nu}-\tilde{T}}$ with high probability, thus proving the inequality. Next, we will bound $\dim{T_{\nu}-\tilde{T}}$ in terms of $\|\zz\|$ in two cases. 

\paragraph{When $\nu\protect\notin{\cal I}$ (\Cref{lem:NotImp}):}

From the definition of the important levels, we have 
\[
\dim{T_{\nu}-\tilde{T}}\leq d_{\nu}\leq\frac{O(\epsilon)}{\log^{3}\frac{n}{\epsilon}}\sum_{\nu'<\nu}d_{\nu'}.
\]
Now, we have $\sum_{\nu'<\nu}d_{\nu'}\approx\sum_{i=1}^{b_{\nu-1}}\alpha_{i}^{2}$ because $\alpha_{i}\sim N(0,1)$ is gaussian and the norm square of gaussian vector is concentrated to its dimension (\Cref{lem:NormG}). Since $\alpha_{i}=z_{i}/\lambda_{i}^{K}$, we have that
\[
\sum_{\nu'<\nu}d_{\nu'}\approx\sum_{i=1}^{b_{\nu-1}}\alpha_{i}^{2}=\sum_{i=1}^{b_{\nu-1}}\frac{z_{i}^{2}}{\lambda_{i}^{2K}}\le\|\zz\|^{2}/\lambda_{b_{\nu-1}}^{2K}.
\]
Therefore, we have 
\[
\|\Pi_{\nu}\zz\|^2 \le\lambda_{a_{\nu}}^{2K}\dim{T_{\nu}-\tilde{T}}\le\left(\frac{\lambda_{a_{\nu}}}{\lambda_{_{b_{\nu-1}}}}\right)^{2K}\frac{O(\epsilon)}{\log^{3}\frac{n}{\epsilon}}\|\zz\|^{2}\le O(\epsilon)\|\zz\|^{2}
\]
where the last inequality is trivial because $\lambda_{b_{\nu-1}}\ge\lambda_{a_{\nu}}$ by definition.

\paragraph{When $\nu\in{\cal I}$ (\Cref{lem:GaussianProjD}):}

In this case, according to \eqref{star}, we can assume $$\dim{T_{\nu}-\tilde{T}}\lesssim\epsilon d_{\nu}.$$
Again, by \Cref{lem:NormG}, we have that $d_{\nu}\approx\sum_{i=a_{\nu}}^{b_{\nu}}\alpha_{i}^{2}$ because $\alpha_{i}\sim N(0,1)$ is gaussian. Since $\alpha_{i}=z_{i}/\lambda_{i}^{K}$, we have
\[
d_{\nu}\approx\sum_{i=a_{\nu}}^{b_{\nu}}\alpha_{i}^{2}=\sum_{i=a_{\nu}}^{b_{\nu}}\frac{z_{i}^{2}}{\lambda_{i}^{2K}}\le\|\zz\|^{2}/\lambda_{b_{\nu}}^{2K}.
\]
Therefore,
\[
\|\Pi_{\nu}\zz\|^2\le\lambda_{a_{\nu}}^{2K}\dim{T_{\nu}-\tilde{T}}\le\left(\frac{\lambda_{a_{\nu}}}{\lambda_{_{b_{\nu}}}}\right)^{2K}\epsilon\|\zz\|^{2}\le O(\epsilon)\|z\|^{2}
\]
where the last inequality is because $\ensuremath{\left(\frac{\lambda_{a_{\nu}}}{\lambda_{b_{\nu}}}\right)^{2K}\leq\left(\frac{1-\frac{\nu\epsilon}{5\log\frac{n}{\epsilon}}}{1-\frac{(\nu+1)\epsilon}{5\log\frac{n}{\epsilon}}}\right)^{2K}\approx\left(1+\frac{\epsilon}{2\log\frac{n}{\epsilon}}\right)^{2K}\approx O(1)}.$

\end{enumerate}
From these three cases, we can conclude that if $\dim{T_{\nu}-\tilde{T}}$ is small for all $\nu\in \mathcal{I}$, then $\ww^{\top}\VV\ww \leq 35\epsilon$, proving our claim.

In the remaining sections, 
we give formal proofs of the claims made in this section. In \Cref{sec:correct}, we prove the main result, \Cref{thm:upper}, assuming the key lemma. In \Cref{sec:proof key}, we prove the key lemma, \Cref{lem:BoundW}, assuming the \Cref{lem:EigenspaceChange}. Finally, we prove \Cref{lem:EigenspaceChange} in \Cref{sec:progress}

\subsection{Proof of the Main \Cref{thm:upper} assuming the Key \Cref{lem:BoundW}}\label{sec:correct}

Here, we formally prove \Cref{thm:upper} assuming the key \Cref{lem:BoundW}. We will first prove the correctness and then bound the total runtime.

\paragraph{Correctness.}
The following formalizes the correctness guarantee of  Algorithm~\ref{alg:DynamicMaxPM}. 

\begin{lemma}\label{lem:DynCorrectAns}
Let $\epsilon>1/n$. With probability at least $1-1/n$, the following holds for all time step $t \ge 1$: 
if the maximum eigenvalue of $\AA_t$ is at least $1-\frac{\epsilon}{\log n}$, {\sc Update}($\vv_t$) returns $[r_t,\WW_t]$.
\end{lemma}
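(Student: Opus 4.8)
The plan is to prove \Cref{lem:DynCorrectAns} by a union bound over all the calls to {\sc PowerMethod} that occur during the execution, using the key \Cref{lem:BoundW} to control how many such calls there are, and then arguing that on the ``good event'' where every such call succeeds, Algorithm~\ref{alg:DynamicMaxPM} never erroneously returns {\sc False} while $\lambda_{\max}(\AA_t)\ge 1-\frac{\epsilon}{\log n}$.

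First I would condition on the event $\mathcal{E}$ of \Cref{lem:BoundW}, which holds with probability at least $1-\frac1n$, that Line~\ref{algline:PM} is executed at most $N \defeq O(\log n\log^5\frac{n}{\epsilon}/\epsilon^2)$ times. On $\mathcal{E}$, there are at most $N+1$ calls to {\sc PowerMethod} total (the one in {\sc Init} plus the $\le N$ in {\sc Update}). By \Cref{thm:StaticPowerMain}/\Cref{thm:StaticPower}, each such call, invoked on a matrix with $\lambda_{\max}\ge 1-\frac{\epsilon}{\log n}$, fails with probability at most $1/n^{10}$; since the update sequence is oblivious, the matrices $\AA_t$ at which these calls occur are determined independently of {\sc PowerMethod}'s internal randomness, so a union bound over the $\le N+1 = \poly(n)$ calls gives that \emph{all} of them succeed except with probability $\le (N+1)/n^{10} \le 1/n^{2}$ (using $\epsilon > 1/n$, so $N = \poly(n)$). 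Intersecting with $\mathcal{E}$, the overall good event has probability at least $1 - \frac1n - \frac{1}{n^2} \ge 1 - \frac{1}{n}$ after adjusting constants; I would absorb the lower-order term into the statement's $1-1/n$ (or track it explicitly, rescaling the $R=10\log n$ repetition count in {\sc PowerMethod} if needed to keep the failure probability comfortably below $1/n$).

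Next, on this good event I would argue correctness at a fixed time $t$ with $\lambda_{\max}(\AA_t)\ge 1-\frac{\epsilon}{\log n}$. There are two cases in {\sc Update}($\vv_t$). If the check on Line~\ref{algline:Check} fails, i.e.\ $(\ww^{(r)}_{t-1})^{\top}\AA_t\ww^{(r)}_{t-1}\ge 1-40\epsilon$ for some $r$, then the algorithm takes the \textbf{else} branch and returns $[r_t,\WW_t]$, as desired. If the check on Line~\ref{algline:Check} succeeds, the algorithm calls {\sc PowerMethod}($\epsilon,\AA_t$); since we are on the good event, this call succeeds, and because $\lambda_{\max}(\AA_t)\ge 1-\frac{\epsilon}{\log n}$, \Cref{thm:StaticPower} guarantees that it returns $[r_t,\WW_t]$ with $(\ww^{(r_t)})^{\top}\AA_t\ww^{(r_t)}\ge 1-\epsilon \ge 1-40\epsilon$, rather than {\sc False}. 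In neither case does the algorithm return {\sc False} at time $t$. The only remaining subtlety is the ``\textbf{return {\sc False}} for all further updates'' behavior: I must check that once the algorithm has returned {\sc False} at some earlier time $t' < t$, it cannot be that $\lambda_{\max}(\AA_t)\ge 1-\frac{\epsilon}{\log n}$. But the updates are decremental, so $\AA_t \preceq \AA_{t'}$ and hence $\lambda_{\max}(\AA_t)\le\lambda_{\max}(\AA_{t'})$; and {\sc False} was returned at $t'$ only because {\sc PowerMethod}($\epsilon,\AA_{t'}$) returned {\sc False}, which on the good event (by the contrapositive of \Cref{thm:StaticPower}) forces $\lambda_{\max}(\AA_{t'}) < 1-\frac{\epsilon}{\log n}$, so indeed $\lambda_{\max}(\AA_t) < 1-\frac{\epsilon}{\log n}$ and there is nothing to prove at time $t$.

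The main obstacle is the one genuinely delicate point: the union bound over {\sc PowerMethod} calls is valid only because the adversary is oblivious, so that the sequence of matrices $\AA_1,\dots,\AA_T$ — and in particular the (random, but adversary-independent) set of time steps at which Line~\ref{algline:PM} fires — is independent of the fresh Gaussian randomness drawn inside each {\sc PowerMethod} invocation; I would state this independence carefully, noting that {\sc PowerMethod} uses fresh randomness on each call and that the event $\mathcal{E}$ from \Cref{lem:BoundW} is itself established in \Cref{sec:proof key} using the same obliviousness. A secondary bookkeeping issue is making the two failure probabilities ($\frac1n$ from \Cref{lem:BoundW} and $\le 1/n^2$ from the union bound over calls) add up to at most $1/n$ as claimed; this is cosmetic and handled by a constant-factor bump in the repetition parameter $R$, or by stating the lemma with probability $1 - 2/n$ and rescaling downstream.
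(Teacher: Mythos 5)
Your proposal is correct and follows essentially the same route as the paper's proof: bound the number of executions of Line~\ref{algline:PM} via the key \Cref{lem:BoundW}, then union-bound the $1/n^{10}$ failure probability of \Cref{thm:StaticPower} over those $\poly(\log n/\epsilon)$ calls. The extra care you take (obliviousness for the union bound, the decremental-monotonicity argument for the ``{\sc False} forever'' behavior, and the probability bookkeeping) only spells out details the paper's shorter proof leaves implicit.
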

\begin{proof}
We upper bound the probability that Algorithm {\sc Update}($\vv_t$) returns \textsc{False}. This can happen only when  Line~\ref{algline:PM} is executed and {\sc PowerMethod}($\epsilon,\AA_t$) returns {\sc False}, which happens with probability at most $1/n^{10}$ by \Cref{thm:StaticPower}.
By Lemma~\ref{lem:BoundW}, we execute Line~\ref{algline:PM} at most $O(\log n\log^5\frac{n}{\epsilon}/\epsilon^2)$ times throughout the whole update sequence. Therefore, by union bounds, {\sc Update}($\vv_t$) may return \textsc{False} with probability at most $\poly(\log(n)/\eps)/n^{10} \le 1/n$.
\end{proof}

\paragraph{Runtime.}\label{sec:runtime}

Next, we bound the runtime of the various lines of Algorithm~\ref{alg:DynamicMaxPM}. 
\begin{lemma}\label{lem:SameW}
For a fixed $\ww$ and any $t$, we can update $\ww^{\top}\AA_{t-1}\ww$ to $\ww^{\top}\AA_t\ww$ in time $O(nnz(\vv_t))$.
\end{lemma}
\begin{proof}
Note that,
\[
\ww^{\top}\AA_t\ww = \ww^{\top}\AA_{t-1}\ww - (\vv_t^{\top}\ww)^2.
\]
The second term can be computed in time $O(nnz(\vv_t))$.
\end{proof}
\begin{lemma}\label{lem:DiffW}
Fix time $t$. Given $\ww$ as input, we have that $\ww^{\top}\AA_t\ww$ and $\AA_t\ww$ can be computed $O\left(nnz(\AA_0) +\sum_{i=1}^t nnz(\vv_i)\right)$ time.
\end{lemma}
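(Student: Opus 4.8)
The plan is to prove Lemma~\ref{lem:DiffW} directly by unfolding the definition $\AA_t = \AA_0 - \sum_{i=1}^{t}\vv_i\vv_i^\top$ and computing the matrix--vector product $\AA_t \ww$ term by term, never forming $\AA_t$ explicitly. First I would write
\[
\AA_t \ww = \AA_0 \ww - \sum_{i=1}^{t}\vv_i\left(\vv_i^\top \ww\right),
\]
and observe that the product $\AA_0 \ww$ costs $O(nnz(\AA_0))$ time, since a matrix stored in sparse form can be multiplied by a vector in time proportional to its number of nonzeros. Then, for each $i$, computing the scalar $c_i := \vv_i^\top \ww$ costs $O(nnz(\vv_i))$, and adding $c_i \vv_i$ into the running accumulator costs another $O(nnz(\vv_i))$. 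Summing over $i = 1,\dots,t$ gives a total of $O\!\left(nnz(\AA_0) + \sum_{i=1}^{t} nnz(\vv_i)\right)$ time to produce the vector $\yy := \AA_t \ww$.

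Next I would handle the scalar $\ww^\top \AA_t \ww$. The cleanest route is to note that once $\yy = \AA_t\ww$ has been computed, $\ww^\top \AA_t \ww = \ww^\top \yy$ is a single inner product of two $n$-dimensional vectors, costing $O(n) \le O(nnz(\AA_0) + \sum_{i=1}^t nnz(\vv_i))$ additional time (and in fact one may instead accumulate it incrementally as $\ww^\top \AA_0 \ww - \sum_{i=1}^t c_i^2$, which costs only $O(\sum_i nnz(\vv_i))$ on top of the $O(nnz(\AA_0))$ for $\ww^\top\AA_0\ww$, via \Cref{lem:SameW} applied $t$ times from the base case $\AA_0$). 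Either way the claimed bound holds, so I would present whichever bookkeeping is shortest.

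There is essentially no obstacle here; the only thing to be careful about is that we have $\vv_1,\dots,\vv_t$ available in memory (which they are, since the algorithm stores the update sequence implicitly through $\AA_t$, or equivalently we may simply retain them), and that the $n$-dimensional dense vector operations $\AA_0\ww$ and the final inner product are not themselves larger than the stated budget — which is immediate since $nnz(\AA_0) \ge n$ whenever $\AA_0$ has no zero rows, and otherwise the bound is stated with an additive $n$ absorbed into $nnz(\AA_0)$ as is conventional in this paper (cf.\ the total-time bound in \Cref{thm:upper}). I would therefore keep the proof to three or four lines, mirroring the style of the proof of \Cref{lem:SameW}.
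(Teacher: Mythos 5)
Your proposal is correct and follows essentially the same route as the paper: split $\AA_t\ww = \AA_0\ww - \sum_{i=1}^t \vv_i(\vv_i^\top\ww)$ and $\ww^\top\AA_t\ww = \ww^\top\AA_0\ww - \sum_{i=1}^t(\vv_i^\top\ww)^2$, and charge $O(nnz(\AA_0))$ plus $O(nnz(\vv_i))$ per update. The paper uses the accumulated-scalar form directly (so no $nnz(\AA_0)\ge n$ caveat is even needed), but your bookkeeping is equivalent.
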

\begin{proof}
We can split $\AA_t$, and get,
\[
\ww^{\top}\AA_t\ww = \ww^{\top}\AA_0\ww - \sum_{i=1}^t (\vv_i^{\top}\ww)^2,\quad \text{and,}\quad
\AA_t\ww = \AA_0\ww - \sum_{i=1}^t \vv_i(\vv_i^{\top}\ww).
\]
The first term in both expressions can be computed in time $O(nnz(\AA_0))$ and for every $i$ we can compute $\vv_i^{\top}\ww$ in time $O(nnz(\vv_i))$, thus concluding the proof.
\end{proof}
\begin{lemma}\label{lem:PMt}For any time $t$, we can implement {\sc PowerMethod}($\epsilon,\AA_t$) in time at most 
\[
O\left(\frac{\log n\log \frac{n}{\epsilon}}{\epsilon}\left(nnz(\AA_0) +\sum_{i=1}^t nnz(\vv_i)\right)\right).
\]
\end{lemma}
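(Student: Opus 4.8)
The plan is to prove \Cref{lem:PMt}, which bounds the runtime of {\sc PowerMethod}($\epsilon,\AA_t$) when it is invoked inside {\sc Update} at time $t$. Recall that the static bound from \Cref{thm:StaticPowerMain} is $O\paren{\frac{nnz(\AA)\log n\log\frac{n}{\epsilon}}{\epsilon}}$, where $nnz(\AA)$ accounts for the cost of each matrix–vector product $\vv \mapsto \AA\vv$. The only subtlety in the dynamic setting is that we never form $\AA_t$ explicitly: we store $\AA_0$ and the update vectors $\vv_1,\dots,\vv_t$, and represent $\AA_t = \AA_0 - \sum_{i=1}^t \vv_i\vv_i^\top$ implicitly. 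So the plan is simply to re-run the proof of \Cref{thm:StaticPowerMain} but replace the cost of a single matrix–vector product $\AA_t\vv$ by the cost of computing it through the implicit representation.

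First I would recall the structure of {\sc PowerMethod}: it runs $R = 10\log n$ independent copies, each performing $K = \frac{4\log\frac{n}{\epsilon}}{\epsilon}$ power iterations, where each iteration is one product $\vv^{(k,r)} \leftarrow \AA_t\vv^{(k-1,r)}$; at the end it computes $R$ Rayleigh quotients $(\ww^{(r)})^\top \AA_t \ww^{(r)}$. By \Cref{lem:DiffW}, for a fixed vector $\ww$ both $\AA_t\ww$ and $\ww^\top\AA_t\ww$ can be computed in time $O\paren{nnz(\AA_0) + \sum_{i=1}^t nnz(\vv_i)}$, using the splitting $\AA_t\ww = \AA_0\ww - \sum_{i=1}^t \vv_i(\vv_i^\top\ww)$. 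Thus each of the $RK$ power iterations costs $O\paren{nnz(\AA_0) + \sum_{i=1}^t nnz(\vv_i)}$, and the same holds for each of the $R$ final Rayleigh-quotient evaluations, and for drawing/handling the $R$ random initial vectors (which cost $O(n) \le O(nnz(\AA_0)+\sum_i nnz(\vv_i))$ each, noting $\AA_0 \succeq 0$ is nonzero so $nnz(\AA_0) \ge n$, or more simply bounding $n$ by the input size). Multiplying the per-iteration cost by the total number $RK + O(R) = O\paren{\frac{\log n \log\frac{n}{\epsilon}}{\epsilon}}$ of such operations gives exactly the claimed bound
\[
O\paren{\frac{\log n \log\frac{n}{\epsilon}}{\epsilon}\paren{nnz(\AA_0) + \sum_{i=1}^t nnz(\vv_i)}}.
\]

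There is essentially no real obstacle here — this is a bookkeeping lemma. The only thing to be slightly careful about is that the naive per-iteration cost could in principle be charged as $nnz(\AA_0) + t\cdot\max_i nnz(\vv_i)$ or similar; but the splitting in \Cref{lem:DiffW} shows the honest cost is additive, $nnz(\AA_0) + \sum_{i=1}^t nnz(\vv_i)$, since computing $\vv_i^\top\ww$ costs $O(nnz(\vv_i))$ and forming $\vv_i(\vv_i^\top\ww)$ costs another $O(nnz(\vv_i))$. One should also note that intermediate vectors $\vv^{(k,r)}$ and $\AA_0\ww$ are dense in general, so storing and adding them costs $O(n)$, which is absorbed into $nnz(\AA_0)$ (as $\AA_0$ has at least one nonzero per relevant row, and in any case $n$ is dominated by the total reading cost of the input). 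Hence the proof is: invoke \Cref{lem:DiffW} for the per-operation cost, count $O\paren{\frac{\log n \log\frac n\epsilon}{\epsilon}}$ operations as in \Cref{thm:StaticPowerMain}, and multiply.
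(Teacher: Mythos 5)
Your proposal is correct and follows essentially the same route as the paper: count the $R\cdot K$ products $\AA_t\ww$ plus the $R$ Rayleigh-quotient evaluations, bound each via the implicit splitting of \Cref{lem:DiffW}, and multiply by $R=O(\log n)$, $K=O(\log(\frac{n}{\epsilon})/\epsilon)$. The extra remarks about initialization and dense intermediate vectors are harmless bookkeeping that the paper leaves implicit.
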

\begin{proof}
When we call {\sc PowerMethod}($\epsilon,\AA_t$) from \Cref{alg:PowerMethod}, we can to compute $\AA_t\ww$ for some vector $\ww$ for $R\cdot K$ times and, at the end, compute $\ww \AA_t\ww$ for some vector $\ww$ for $R$ times. 
By \Cref{lem:DiffW}, this takes $O( (nnz(\AA_0) +\sum_{i=1}^t nnz(\vv_i) ) \cdot \log n\log (\frac{n}{\epsilon})/\epsilon)$ because $R = O(\log n)$ and $K = O(\log(\frac{n}{\epsilon})/\epsilon)$.

\end{proof}

 Given the above results, we can now prove Theorem~\ref{thm:upper}.
\subsubsection*{Proof of Theorem~\ref{thm:upper}}
\begin{proof}

We will prove that Algorithms~\ref{alg:Init} and \ref{alg:DynamicMaxPM} solve Problem~\ref{def:DecMaxEval}, which implies an algorithm for \Cref{prob:dyn} by \Cref{lem:Decision}.

From \Cref{thm:StaticPower}, Algorithm~\ref{alg:Init} solves Problem~\ref{def:DecMaxEval} for the initial matrix $\AA_0$ with probability at least $1-1/n$. From \Cref{lem:DynCorrectAns}, Algorithm~\ref{alg:DynamicMaxPM} returns $[r_t,\WW_t]$ for all $t\geq 1$, whenever the maximum eigenvalue of $\AA_t$ is at least $1-\epsilon/\log n$, with probability at least $1-1/n$. Note that, $\ww^{(r_t)}\in \WW_t$ is the required solution. Therefore, our algorithm returns the correct solution with a probability of at least $1-2/n$.

We now bound the total runtime of the algorithm over all time $t$. Define an indicator function $I_t$ as $I_t = 1$, if we execute Line~\ref{algline:PM} in {\sc Update}$(\vv_t)$ and $0$ otherwise. From Lemmas~\ref{lem:SameW} and \ref{lem:DiffW}, executing Line~\ref{algline:Check} at $t$ requires time $O(nnz(\vv_t)\log n)$ if $I_{t-1} = 0$ and time $O\left(\frac{\log n\log \frac{n}{\epsilon}}{\epsilon}\left(nnz(\AA_0) +\sum_{i=1}^t nnz(\vv_i)\right)\right)$ if $I_{t-1} = 1$. Summing these up, the total time $\mathcal{T}$ required for solving \Cref{def:DecMaxEval} is at most,
\begin{align*}
\mathcal{T} \leq &  O\left(\underbrace{\frac{\log n\log \frac{n}{\epsilon} \cdot nnz(\AA_0)}{\epsilon}}_{\text{Time for {\sc Init}($\epsilon,\AA_0$)}} + \sum_t I_t \underbrace{\frac{\log n\log \frac{n}{\epsilon}}{\epsilon}\left(nnz(\AA_0) +\sum_{i=1}^t nnz(\vv_i)\right)}_{\text{Time required at $t$ when $I_t = 1$}} + \sum_t (1-I_t)\underbrace{\log n \cdot nnz(\vv_t)}_{\text{Time when $I_t=0$}}\right)\\
\leq & O\left( \frac{\log n\log \frac{n}{\epsilon}\cdot nnz(\AA_0)}{\epsilon}+ \sum_t I_t \frac{\log n\log \frac{n}{\epsilon}}{\epsilon}\left(nnz(\AA_0) +\sum_{i=1}^t nnz(\vv_i)\right) + \sum_t\log n\cdot  nnz(\vv_t)\right)\\
\substack{(a)\\ \leq}& O\left( \frac{\log n\log^2 \frac{n}{\epsilon}\cdot nnz(\AA_0)}{\epsilon}+ \frac{\log n\log^5\frac{n}{\epsilon}}{\epsilon^2}\frac{\log n\log \frac{n}{\epsilon}}{\epsilon} \left(nnz(\AA_0) +\sum_{i=1}^T nnz(\vv_i)\right) + \sum_{t=1}^T\log n\cdot nnz(\vv_t)\right)\\
\substack{(b)\\ \leq} & O\left( \frac{\log^2 n\log^6\frac{n}{\epsilon}}{\epsilon^3} \left(nnz(\AA_0) +\sum_{i=1}^T nnz(\vv_i)\right) \right),
\end{align*}
with probability at least $1-\frac{1}{n}$. In $(a)$, we used Lemma~\ref{lem:BoundW} which states that the maximum number of $t$ for which $I_t=1$ is at most $O(\log n\log^5\frac{n}{\epsilon}/\epsilon^2)$ with probability at least $1-1/n$. In $(b)$, we combined all the terms. Therefore, by \Cref{lem:Decision}, our algorithms solve Problem~\ref{prob:dyn} with probability at least $1-3/n$ in total time
\begin{multline*}
O\left(\frac{\log^2 n\log\frac{n}{\epsilon}}{\epsilon}\cdot nnz(\AA_0) + \frac{\log n}{\epsilon}\log \frac{\lambda_{\max}(\AA_0)}{\lambda_{\max}(\AA_T)}\mathcal{T}\right) \leq \\ O\left(\frac{\log^3 n\log^6 \frac{n}{\epsilon}\log \frac{\lambda_{\max}(\AA_0)}{\lambda_{\max}(\AA_T)}}{\epsilon^4}\left(nnz(\AA_0) +\sum_{i=1}^T nnz(\vv_i)\right)\right).
\end{multline*}
\end{proof}


\subsection{Proof of the Key Lemma~\ref{lem:BoundW}: Few Executions of Line~\ref{algline:PM}}

\label{sec:proof key}

In this section, we prove Lemma~\ref{lem:BoundW} assuming the Progress Lemma~\ref{lem:EigenspaceChange}. 
Let us first recall the precise definition of $\AA$ and $\AAtil$. Suppose we execute Line~\ref{algline:PM} at update $t_{0}$. Now consider a sequence of updates, $\vv_{t_{0}+1},\cdots,\vv_{t_{0}+k}$, and let $\AA_{t_{0}+k}=\AA_{t_{0}}-\sum_{i=1}^{k}\vv_{t+i}\vv_{t+i}^{\top}.$ Suppose the next execution of Line~\ref{algline:PM} happens at $t_{0}+k$. For this to happen we must have that $\ww_{t_{0}}^{\top}\AA_{t_{0}+k}\ww_{t_{0}}<1-40\epsilon$ for all $\ww_{t_{0}}\in\WW_{t_{0}}$. We let $\AA=\AA_{t_{0}}$ and $\AAtil=\AA_{t_{0}+k}$. 

Next, recall Definition~\ref{def:SpaceA} and define $T_{\leq i}=\sum_{\nu=0}^{i}T_{\nu}$, $\tilde{T}_{\leq i}=\sum_{\nu=0}^{i}\tilde{T}_{\nu}$. The following observation will motivate our potential function analysis.
\begin{lemma}
	\label{lem:Monotone} For all $i\leq15\log\frac{n}{\epsilon}$, 
	\[
	\dim{T_{\le i}}\ge\dim{\Ttil_{\le i}}.
	\]
	Let $\nu_{0}$ be such that $\dim{T_{\nu_{0}}-\tilde{T}}>0$. For any $i\ge\nu_{0}$, we have
	\[
	\dim{T_{\le i}}\geq\dim{\Ttil_{\le i}}+\dim{T_{\nu_{0}}-\Ttil}.
	\]
	
\end{lemma}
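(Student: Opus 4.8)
The statement to prove is Lemma~\ref{lem:Monotone}, which asserts two monotonicity facts about the dimensions of the subspaces $T_{\le i} = \sum_{\nu=0}^i T_\nu$ and $\Ttil_{\le i} = \sum_{\nu=0}^i \Ttil_\nu$. The key observation is that, by construction, $T_{\le i}$ and $\Ttil_{\le i}$ are precisely the eigenspaces of $\AA$ and $\AAtil = \AA - \VV$ (with $\VV = \sum_i \vv_i\vv_i^\top \succeq 0$) corresponding to eigenvalues at least $\left(1 - \frac{(i+1)\epsilon}{5\log\frac{n}{\epsilon}}\right)\lambda_0$. So the plan is to invoke the eigenvalue interlacing / monotonicity principle: since $\AAtil \preceq \AA$, by the Courant--Fischer min-max theorem, $\lambda_j(\AAtil) \le \lambda_j(\AA)$ for every $j$. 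Consequently, for any threshold $\tau$, the number of eigenvalues of $\AAtil$ exceeding $\tau$ is at most the number of eigenvalues of $\AA$ exceeding $\tau$. Applying this with $\tau = \left(1-\frac{(i+1)\epsilon}{5\log\frac{n}{\epsilon}}\right)\lambda_0$ gives $\dim\{\Ttil_{\le i}\} \le \dim\{T_{\le i}\}$, which is the first claim.

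**The second, strengthened inequality.**

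For the second part, I would first unpack what $\dim\{T_{\nu_0} - \Ttil\} > 0$ means: writing $m := \dim\{T_{\nu_0} - \Ttil\} = \dim(T_{\nu_0} \cap \overline{\Ttil})$, this is a lower bound on how much "room" in $T_{\nu_0}$ lies outside the large-eigenspace $\Ttil = \Span(3\epsilon, \AAtil)$. The cleanest route is a dimension-counting argument. Since $i \ge \nu_0$, we have $T_{\nu_0} \subseteq T_{\le i}$, so $T_{\nu_0} \cap \overline{\Ttil} \subseteq T_{\le i} \cap \overline{\Ttil}$ — but I want to compare against $\Ttil_{\le i}$, not $\Ttil$. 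Note $\Ttil_{\le i} \subseteq \Ttil$ when $i+1 \le 15$, i.e. for the relevant range, since $\frac{(i+1)\epsilon}{5\log\frac{n}{\epsilon}} \le 3\epsilon$; hence $\overline{\Ttil} \subseteq \overline{\Ttil_{\le i}}$, so $T_{\nu_0} \cap \overline{\Ttil} \subseteq T_{\le i} \cap \overline{\Ttil_{\le i}}$, giving $\dim(T_{\le i} \cap \overline{\Ttil_{\le i}}) \ge m$. Now I would use the fact that $T_{\le i}$ contains a subspace of dimension $\dim\{\Ttil_{\le i}\}$ on which $\AA$ acts with all eigenvalues $\ge \tau$ (taking the top $\dim\{\Ttil_{\le i}\}$ eigenvectors of $\AA$, which lie in $T_{\le i}$ since $\dim\{T_{\le i}\} \ge \dim\{\Ttil_{\le i}\}$), plus an $m$-dimensional chunk of $T_{\le i}$ disjoint from (a space containing) $\Ttil_{\le i}$; adding dimensions of these two essentially-independent pieces yields $\dim\{T_{\le i}\} \ge \dim\{\Ttil_{\le i}\} + m$. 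I would make the independence rigorous via the modular law / the identity $\dim(U + W) = \dim U + \dim W - \dim(U \cap W)$ applied to $U = $ the $m$-dimensional piece in $\overline{\Ttil_{\le i}}$ and $W = \Ttil_{\le i}$: their intersection is $\{0\}$, so $\dim(U \oplus W) = \dim U + \dim W = m + \dim\{\Ttil_{\le i}\}$, and $U \oplus W \subseteq T_{\le i}$ — wait, $W = \Ttil_{\le i}$ need not sit inside $T_{\le i}$, so instead I take $W'$ to be the top-$\dim\{\Ttil_{\le i}\}$ eigenspace of $\AA$ inside $T_{\le i}$ and argue $U \cap W' = \{0\}$ using that $U \subseteq \overline{\Ttil_{\le i}}$ and a careful count of how $W'$ meets $\overline{\Ttil_{\le i}}$.

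**Anticipated main obstacle.**

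The routine direction (first inequality) is immediate from eigenvalue monotonicity under $\succeq$. The genuinely delicate step is the second inequality: turning "$T_{\nu_0}$ has an $m$-dimensional part outside $\Ttil$" into an $m$-unit \emph{additive} gap between $\dim\{T_{\le i}\}$ and $\dim\{\Ttil_{\le i}\}$ requires carefully setting up two subspaces of $T_{\le i}$ whose dimensions add up cleanly — i.e. exhibiting a direct sum inside $T_{\le i}$ of an $m$-dimensional "new" subspace and a $\dim\{\Ttil_{\le i}\}$-dimensional "already accounted for" subspace. The subtlety is that $\Ttil_{\le i}$ itself is not a subspace of $T_{\le i}$, so one must replace it by a surrogate subspace of $T_{\le i}$ of the same dimension (the top eigenvectors of $\AA$) and then verify this surrogate meets the "new" $m$-dimensional subspace trivially; getting the intersection counts exactly right (rather than off by a constant) is where I would spend the most care.
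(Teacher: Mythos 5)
Your proof of the first inequality is correct, and it is in fact more robust than the paper's own argument: you derive $\dim{T_{\le i}}\ge\dim{\Ttil_{\le i}}$ purely from eigenvalue monotonicity ($\AAtil\preceq\AA$ implies $\lambda_j(\AAtil)\le\lambda_j(\AA)$ for every $j$, so the count of eigenvalues above the threshold can only drop), whereas the paper asserts the stronger subspace containment $\Ttil_{\le i}\subseteq T_{\le i}$.

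The second inequality is where your proposal has a genuine gap. You correctly place $U:=T_{\nu_0}\cap\overline{\Ttil}$, of dimension $m:=\dim{T_{\nu_0}-\Ttil}$, inside $T_{\le i}$, and you then need a second subspace of $T_{\le i}$ of dimension $\dim{\Ttil_{\le i}}$ meeting $U$ trivially. Your surrogate $W'$, the span of the top $\dim{\Ttil_{\le i}}$ eigenvectors of $\AA$, does not deliver this: $W'$ has no reason to avoid $\overline{\Ttil_{\le i}}$, hence no reason to satisfy $W'\cap U=\{0\}$. Concretely, if $\nu_0=0$ then $U\subseteq T_0$ consists of top eigenvectors of $\AA$, so $U\subseteq W'$ whenever $\dim{\Ttil_{\le i}}\ge d_0$, and the direct sum collapses entirely. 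The ``careful count of how $W'$ meets $\overline{\Ttil_{\le i}}$'' that you defer is precisely the missing content, and with this choice of $W'$ it cannot be carried out. The paper takes a different route: it uses the containment $\Ttil_{\le i}\subseteq T_{\le i}$ established in its first part, so that $\Ttil_{\le i}$ itself (not a surrogate) and $U$ are both subspaces of $T_{\le i}$; since $\Ttil_{\le i}\subseteq\Ttil$ while $U\subseteq\overline{\Ttil}$, the two are orthogonal, and $\dim{T_{\le i}}\ge\dim{\Ttil_{\le i}}+m$ follows at once. Having (reasonably) declined to rely on that containment, you need some other mechanism that forces your two pieces apart, and none is supplied.
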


\begin{proof}
	We claim that $\tilde{T}_{\leq i}\subseteq T_{\leq i}$, which implies the first claim. This is because the updates are decreasing. So, if $\vv^{\top}\AAtil\vv\geq1-\frac{(i+1)\epsilon}{5\log\frac{n}{\epsilon}}$ then $\vv^{\top}\AA\vv\geq1-\frac{(i+1)\epsilon}{5\log\frac{n}{\epsilon}}$. That is, if $\vv\in\Ttil_{\le i}$, then $\vv\in T_{\leq i}$. For the second part, we have $T_{\leq i}=T_{\leq i}\cap\tilde{T}_{\leq i}+(T_{\leq i}-\tilde{T}_{\leq i})\supseteq\tilde{T}_{\leq i}+(T_{\nu_{0}}-\tilde{T})$ because $T_{\nu_{0}}\subseteq T_{\le i}$ and $\Ttil_{\le i}\subseteq\Ttil$. Since $\tilde{T}_{\leq i}\cap(T_{\nu_{0}}-\tilde{T})=\emptyset$, we can conclude the second part.
\end{proof}

\paragraph{The Potentials.}
The above lemma and \Cref{lem:EigenspaceChange} motivate the following potentials.
For every $j\le15\log\frac{n}{\epsilon}$, we define the potentials $\Phi_{j}=\dim{T_{\le j}}$ and $\tilde{\Phi}_{j}=\dim{\Ttil_{\le j}}$. For all $j$, the potential may only decrease, i.e., $\Phitil_{j}\le\Phi_{j}$ by \Cref{lem:Monotone}. Also, clearly, $\Phi_{j}\le n$. 

We will show that for each execution of Line~\ref{algline:PM}, $\Phitil_{j}$ decreases from $\Phi_j$ by a significant factor with high probability for some $j$. This will bound the number of executions.

Consider any important level $\nu_{0} \in \cal{I}$. We have two observations:
\begin{enumerate}
	\item \label{enu:phi 1} $\Phitil_{\nu_{0}}\leq\Phi_{\nu_{0}}-\dim{T_{\nu_{0}}-\tilde{T}}$, and 
	\item \label{enu:phi 2} $d_{\nu_{0}}\geq\Omega(1)\frac{\epsilon}{\log^{3}\frac{n}{\epsilon}}\Phi_{v_{0}}$. 
\end{enumerate}
The first point follows directly from the second part of Lemma~\ref{lem:Monotone}. Moreover, since $\nu_{0}\in\mathcal{I}$ is important, we have $d_{\nu_{0}}\geq\frac{\epsilon}{600\log^{3}\frac{n}{\epsilon}}\sum_{\nu'<\nu}d_{\nu'}.$ Therefore, 
\[
\Phi_{\nu_{0}}=\sum_{\nu=1}^{\nu_{0}}d_{\nu}=\sum_{\nu'<\nu_{0}}d_{\nu'}+d_{\nu_{0}}\leq\left(\frac{600\log^{3}\frac{n}{\epsilon}}{\epsilon}+1\right)d_{\nu_{0}},
\]
implying the second point. 
Combining these observations with the Progress \Cref{lem:EigenspaceChange}, we have:
\begin{lemma}
	\label{claim:potential decrease}Suppose that $\lambda_{\max}(\AA)\ge1-\epsilon$ and $\ww^{\top}\AAtil\ww<1-40\epsilon$ for all $\ww\in\WW$. With probability at least $1-50\log\frac{n}{\epsilon}/n^{2}$, there is a level $\nu_{0}$ such that either 
	\begin{itemize}
		\item $\Phitil_{\nu_{0}}\le\left(1-\frac{\Omega(\epsilon^{2})}{\log^{4}\frac{n}{\epsilon}}\right)\Phi_{\nu_{0}}$, or 
		\item $\Phitil_{\nu_{0}}\le\Phi_{\nu_{0}}-1$ and $\Phi_{\nu_{0}}\leq O(1)\frac{\log^{4}\frac{n}{\epsilon}\log n}{\epsilon^{2}}.$
	\end{itemize}
\end{lemma}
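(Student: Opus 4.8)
The plan is to combine the Progress Lemma~\ref{lem:EigenspaceChange} with the two bookkeeping facts recorded just above, namely \eqref{enu:phi 1} $\Phitil_{\nu_0}\le\Phi_{\nu_0}-\dim{T_{\nu_0}-\Ttil}$ and \eqref{enu:phi 2} $d_{\nu_0}\ge\Omega(1)\frac{\epsilon}{\log^3\frac{n}{\epsilon}}\Phi_{\nu_0}$, both of which hold for any important level $\nu_0\in\mathcal{I}$. First I would apply \Cref{lem:EigenspaceChange}: under the stated hypotheses ($\lambda_{\max}(\AA)\ge1-\epsilon$ and $\ww^\top\AAtil\ww<1-40\epsilon$ for all $\ww\in\WW$), with probability at least $1-50\log\frac{n}{\epsilon}/n^2$ there is some $\nu_0\in\mathcal{I}$ falling into one of the two cases of that lemma. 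I would then split the argument along those two cases, carrying the probability $1-50\log\frac{n}{\epsilon}/n^2$ through verbatim since no new randomness is introduced.

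\textbf{Case 1: $d_{\nu_0}\ge \frac{3000\log n\log\frac{n}{\epsilon}}{\epsilon}$.} Here \Cref{lem:EigenspaceChange} gives $\dim{T_{\nu_0}-\Ttil}\ge\frac{\epsilon}{300\log\frac{n}{\epsilon}}d_{\nu_0}$. Plugging this into \eqref{enu:phi 1} and then using \eqref{enu:phi 2} to replace $d_{\nu_0}$ by $\Omega(1)\frac{\epsilon}{\log^3\frac{n}{\epsilon}}\Phi_{\nu_0}$ yields
\[
\Phitil_{\nu_0}\le\Phi_{\nu_0}-\frac{\epsilon}{300\log\frac{n}{\epsilon}}\cdot\Omega(1)\frac{\epsilon}{\log^3\frac{n}{\epsilon}}\Phi_{\nu_0}=\left(1-\frac{\Omega(\epsilon^2)}{\log^4\frac{n}{\epsilon}}\right)\Phi_{\nu_0},
\]
which is exactly the first bullet. \textbf{Case 2: $d_{\nu_0}<\frac{3000\log n\log\frac{n}{\epsilon}}{\epsilon}$.} Here \Cref{lem:EigenspaceChange} gives only $\dim{T_{\nu_0}-\Ttil}\ge 1$, so \eqref{enu:phi 1} yields $\Phitil_{\nu_0}\le\Phi_{\nu_0}-1$. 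For the size bound on $\Phi_{\nu_0}$, I would use \eqref{enu:phi 2} in the opposite direction: $\Phi_{\nu_0}\le O(1)\frac{\log^3\frac{n}{\epsilon}}{\epsilon}d_{\nu_0}<O(1)\frac{\log^3\frac{n}{\epsilon}}{\epsilon}\cdot\frac{3000\log n\log\frac{n}{\epsilon}}{\epsilon}=O(1)\frac{\log^4\frac{n}{\epsilon}\log n}{\epsilon^2}$, which is the second bullet. Taking the disjunction of the two cases completes the proof, with the claimed failure probability.

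\textbf{Main obstacle.} There is no real obstacle here — this lemma is a short, purely deterministic consequence of \Cref{lem:EigenspaceChange} plus the already-derived estimates \eqref{enu:phi 1}--\eqref{enu:phi 2}, and all the probabilistic content is inherited from the Progress Lemma. The only point requiring a modicum of care is tracking the constant factors (the $300$, $600$, $3000$, etc.) to confirm that the two exponents on $\log\frac{n}{\epsilon}$ land correctly: the $\epsilon^2/\log^4\frac{n}{\epsilon}$ factor in the first bullet comes from multiplying the $\epsilon/\log\frac{n}{\epsilon}$ from the dimension drop by the $\epsilon/\log^3\frac{n}{\epsilon}$ from importance, and the $\log^4\frac{n}{\epsilon}\log n/\epsilon^2$ in the second bullet comes from multiplying the $\log^3\frac{n}{\epsilon}/\epsilon$ from importance by the $\log n\log\frac{n}{\epsilon}/\epsilon$ threshold of Case~2. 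Since the lemma statement hides these inside $\Omega(\cdot)$ and $O(\cdot)$, even this is routine.
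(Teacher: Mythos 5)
Your proof is correct and follows exactly the same route as the paper's: invoke the Progress Lemma to obtain an important level $\nu_0$, then apply the two recorded observations (the dimension-drop bound on $\Phitil_{\nu_0}$ and the importance bound relating $d_{\nu_0}$ to $\Phi_{\nu_0}$) in each of the two cases, inheriting the failure probability unchanged. The constant-tracking you describe matches the paper's calculation, so nothing is missing.
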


\begin{proof}
	Given the assumption, it follows from \Cref{lem:EigenspaceChange} that with probability at least $1-50\log\frac{n}{\epsilon}/n^{2}$, there is an important level $\nu_{0}\in\mathcal{I}$ be such that either $d_{\nu_{0}}\geq\frac{3000\log n\log\frac{n}{\epsilon}}{\epsilon}$ and $\dimm(T_{\nu_{0}}-\tilde{T})\geq\frac{\epsilon}{300\log\frac{n}{\epsilon}}d_{\nu_{0}}$ or $d_{\nu_{0}}<\frac{3000\log n\log\frac{n}{\epsilon}}{\epsilon}$ and $\dimm(T_{\nu_{0}}-\tilde{T})\geq1$. So, by calculation, we have the following.
	\begin{itemize}
		\item If $d_{\nu_{0}}\geq\frac{3000\log n\log\frac{n}{\epsilon}}{\epsilon}$, we have $\Phitil_{\nu_{0}}\leq\Phi_{\nu_{0}}-\frac{\epsilon}{300\log\frac{n}{\epsilon}}d_{\nu_{0}}\le\Phi_{\nu_{0}}(1-\frac{\Omega(\epsilon^{2})}{\log^{4}\frac{n}{\epsilon}})$ where the first inequality is by (\ref{enu:phi 1}) and \Cref{lem:EigenspaceChange}. The second is by (\ref{enu:phi 2}). 
		\item If $d_{\nu_{0}}<\frac{3000\log n\log\frac{n}{\epsilon}}{\epsilon}$, we have $\Phitil_{\nu_{0}}\le\Phi_{\nu_{0}}-1$ by (\ref{enu:phi 1}) and \Cref{lem:EigenspaceChange}. In this case, we also have $\Phi_{\nu_{0}}\leq O(1)\frac{\log^{4}\frac{n}{\epsilon}\log n}{\epsilon^{2}}$ by (\ref{enu:phi 2}). 
	\end{itemize}
\end{proof}
We are now ready to prove Lemma~\ref{lem:BoundW}. 

\subsubsection*{Proof of Lemma~\ref{lem:BoundW}.}

First, observe that whenever $\lambda_{\max}(\AA)<1-\epsilon$, by Line~\ref{line: if all w 1-eps} of \Cref{alg:PowerMethod}, we will always return $\textsc{False}$ at the next execution of Line \ref{algline:PM} of \Cref{alg:DynamicMaxPM}.

Therefore, it suffices to bound the number of executions while $\lambda_{\max}(\AA)\ge1-\epsilon$. We execute Line~\ref{algline:PM} only if $\ww^{\top}\AAtil\ww<1-40\epsilon$ for all $\ww\in\WW$. When this happens, there exists a level $j$ where the potential $\Phi_{j}$ significantly decreases according to \Cref{claim:potential decrease} with probability at least $1-50\log\frac{n}{\epsilon}/n^{2}$. For each level $j$, this can happen at most $L=O(\frac{\log n\log^{4}\frac{n}{\epsilon}}{\epsilon^{2}})$ times because for every $j$, $\Phi_{j}$ is an integer that may only decrease and is bounded by $n$. Suppose for contradiction that there are more than $L\times15\log\frac{n}{\epsilon}$ executions of Line~\ref{algline:PM}. So, with probability at least $1-L\cdot50\log\frac{n}{\epsilon}/n^{2}\ge1/n$, there exists a level $j$ where $\Phi_{j}$ decreases according to \Cref{claim:potential decrease} strictly more than $L$ times. This is a contradiction.

\subsection{Proof of the Progress Lemma~\ref{lem:EigenspaceChange}}\label{sec:progress}
It remains to prove the Progress Lemma. We first restate the lemma here.

\progress*
Recall the definitions of subspaces $T,T_\nu,\Ttil$ and $\overline{T}$ in \Cref{def:SpaceA}.
We will first state the following claim and show that this is sufficient to prove \Cref{lem:EigenspaceChange}. After concluding the proof of \Cref{lem:EigenspaceChange}, we would prove the claim.

\begin{claim}\label{cl:progress} Suppose that $\lambda_{\max}(\AA) \geq 1-\epsilon$. If for every $\nu \in \mathcal{I}$, 
\begin{itemize}
\item $\dim{T_{\nu} -\tilde{T}} < \frac{\epsilon}{300\log\frac{n}{\epsilon}} d_{\nu}$ if $d_{\nu} \geq \frac{3000\log n\log\frac{n}{\epsilon}}{\epsilon}$, and 
\item $\dim{T_{\nu} -\tilde{T}} < 1$ if $d_{\nu} < \frac{3000\log n\log\frac{n}{\epsilon}}{\epsilon}$.
\end{itemize}
Then, with probability at least $1-\frac{40\log\frac{n}{\epsilon}}{n^2}$, $\ww^{\top}\VV\ww\leq 35 \epsilon$ for all $\ww\in \WW$.
\end{claim}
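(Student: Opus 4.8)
The goal is to prove \Cref{cl:progress}: assuming that $\dim\{T_\nu - \tilde T\}$ is small for every important level $\nu$, we want to conclude $\ww^\top \VV \ww \le 35\epsilon$ for all $\ww \in \WW$.

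\textbf{Overall plan.} Following the proof overview, the strategy is to write $\VV = \AA - \AAtil$, decompose the quadratic form $\ww^\top \VV \ww$ over the orthogonal-in-the-relevant-sense pieces coming from the spaces $\tilde T$, the $T_\nu - \tilde T$ for $\nu = 0,\dots,15\log\frac n\epsilon - 1$, and $\overline T$, and bound each piece. Concretely, I would first establish that such a decomposition is legitimate, i.e. that $\rea^n = \tilde T \oplus \bigoplus_\nu (T_\nu - \tilde T) \oplus \overline T$ and that $\ww^\top \VV \ww = \ww^\top \VV_{\tilde T}\ww + \sum_\nu \ww^\top\VV_{T_\nu - \tilde T}\ww + \ww^\top \VV_{\overline T}\ww$ where the subscripted matrices are the projections $\Pi_\bullet \VV \Pi_\bullet$. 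The cleanest way is to note $T = \bigoplus_\nu T_\nu$ is the span of eigenvectors of $\AA$ with eigenvalue $\ge (1-3\epsilon)\lambda_0$, $\tilde T \subseteq T$ by the monotonicity argument already used in \Cref{lem:Monotone} (decreasing updates can only shrink eigenspaces, so $\Span(3\epsilon,\AAtil)\subseteq\Span(3\epsilon,\AA)$), and hence $T = \tilde T \oplus (T - \tilde T)$; then refine $T - \tilde T$ using the $T_\nu$. I'd want to be a little careful here because $\tilde T$ need not be a coordinate subspace in the eigenbasis of $\AA$, so "$T_\nu - \tilde T$" is $T_\nu \cap \overline{\tilde T}$ and the sum $\sum_\nu (T_\nu - \tilde T)$ spans $T - \tilde T$ but the pieces need not be mutually orthogonal; I'd reconcile this by bounding $\ww^\top \VV \ww$ by a sum of the three blockwise terms via $\VV \succeq 0$, or by choosing the complements consistently. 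This bookkeeping is the first place to be meticulous.

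\textbf{The three bounds.} (1) For the $\tilde T$ term: $\ww^\top \VV_{\tilde T}\ww = \ww^\top \Pi_{\tilde T}\AA\Pi_{\tilde T}\ww - \ww^\top\Pi_{\tilde T}\AAtil\Pi_{\tilde T}\ww \le \lambda_0\|\Pi_{\tilde T}\ww\|^2 - (1-3\epsilon)\lambda_0\|\Pi_{\tilde T}\ww\|^2 \le 10\epsilon(1+\epsilon)$, using $\|\AA\| = \lambda_0 \le 1+\frac{\epsilon}{\log n}$ and that $\Pi_{\tilde T}\ww$ lies in the large-eigenvalue space of $\AAtil$; this is \eqref{eq:V2} referenced in the overview. (2) For the $\overline T$ term: this needs \Cref{lem:boundLowEV} (cited in the overview) — since $\ww$ comes out of the power method, $\ww \propto \zz = \sum_i \lambda_i^K\alpha_i \uu_i$, and the mass on eigenvectors with $\lambda_i < 1-3\epsilon$ is suppressed by $(1-3\epsilon)^{2K} \le (\epsilon/n)^{\mathrm{poly}}$ relative to $\|\zz\|^2 \ge \lambda_1^{2K}\alpha_1^2$ with $\alpha_1^2 \ge 1/n^4$ whp by \Cref{lem:chi}; hence $\ww^\top\VV_{\overline T}\ww \le \|\VV\|\cdot\|\Pi_{\overline T}\ww\|^2 \le (1+\epsilon)\cdot \epsilon \le 2\epsilon$. (3) For each $T_\nu - \tilde T$ term: bound $\ww^\top\VV_{T_\nu-\tilde T}\ww \le \|\VV\|\,\|\Pi_\nu\ww\|^2 \le (1+\epsilon)\|\Pi_\nu\zz\|^2/\|\zz\|^2$, then the coupling argument with a Gaussian $\gg_\nu \sim N(0,I)$ in $T_\nu - \tilde T$ gives $\|\Pi_\nu\zz\|^2 \le \lambda_{a_\nu}^{2K}\|\gg_\nu\|^2 \le \lambda_{a_\nu}^{2K}\cdot 2\dim\{T_\nu-\tilde T\}$ whp by \Cref{lem:NormG}. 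Now split on importance: if $\nu \notin \mathcal I$, use the definition of important level to get $\dim\{T_\nu-\tilde T\} \le d_\nu \le \frac{\epsilon}{600\log^3\frac n\epsilon}\sum_{\nu'<\nu}d_{\nu'}$ and relate $\sum_{\nu'<\nu}d_{\nu'} \approx \sum_{i\le b_{\nu-1}}\alpha_i^2 \le \|\zz\|^2/\lambda_{b_{\nu-1}}^{2K}$ whp by \Cref{lem:NormG}, and since $\lambda_{a_\nu}\le\lambda_{b_{\nu-1}}$ the ratio $(\lambda_{a_\nu}/\lambda_{b_{\nu-1}})^{2K}\le 1$, yielding $\|\Pi_\nu\zz\|^2 \le O(\epsilon/\log^3\frac n\epsilon)\|\zz\|^2$; if $\nu\in\mathcal I$, use the hypothesis of the claim ($\dim\{T_\nu-\tilde T\} < \frac{\epsilon}{300\log\frac n\epsilon}d_\nu$ when $d_\nu$ is large, or $=0$ when small — note "$<1$" means $0$ since it's an integer) together with $d_\nu \approx \sum_{i=a_\nu}^{b_\nu}\alpha_i^2 \le \|\zz\|^2/\lambda_{b_\nu}^{2K}$ and the key estimate $(\lambda_{a_\nu}/\lambda_{b_\nu})^{2K} = O(1)$ because $a_\nu,b_\nu$ are at adjacent levels spaced $\frac{\epsilon}{5\log\frac n\epsilon}\lambda_0$ apart and $K = \frac{4\log\frac n\epsilon}{\epsilon}$, so the ratio is $(1+O(\frac{\epsilon}{\log\frac n\epsilon}))^{2K} = O(1)$; this gives $\|\Pi_\nu\zz\|^2 \le O(\epsilon)\|\zz\|^2$ for each $\nu\in\mathcal I$ with $d_\nu$ large, and exactly $0$ otherwise.

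\textbf{Summing and the obstacle.} Summing the $\overline T$ term ($O(\epsilon)$), the $\tilde T$ term ($O(\epsilon)$), and the $15\log\frac n\epsilon$ terms $\ww^\top\VV_{T_\nu-\tilde T}\ww$ — each $O(\epsilon/\log\frac n\epsilon)$ or $O(\epsilon/\log^3\frac n\epsilon)$ — gives a total of $O(\epsilon)$, which with the explicit constants tracked should come in under $35\epsilon$; the failure probability is a union bound over the $O(\log\frac n\epsilon)$ Gaussian-norm concentration events (\Cref{lem:NormG}) plus the $\chi^2$ event (\Cref{lem:chi}), giving the claimed $1 - 40\log\frac n\epsilon / n^2$. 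I expect the main obstacle to be making the per-level bound on $\|\Pi_\nu\zz\|^2$ fully rigorous: the coupling of $\Pi_\nu\zz$ with $\lambda_{a_\nu}^K\gg_\nu$ requires that $\Pi_\nu\zz$, restricted to $T_\nu - \tilde T$, is a Gaussian vector with per-direction variance at most $\lambda_{a_\nu}^{2K}$ — but $\zz$'s coordinates are Gaussian in the $\uu_i$-eigenbasis of $\AA$ with variances $\lambda_i^{2K}$, and $T_\nu - \tilde T$ is not aligned with that eigenbasis (it's $T_\nu$ intersected with $\overline{\tilde T}$). One needs that projecting a Gaussian with covariance $\Sigma \preceq \lambda_{a_\nu}^{2K} I$ restricted to $T_\nu$ onto any subspace of $T_\nu$ still has covariance $\preceq \lambda_{a_\nu}^{2K}I$ on that subspace, which is true (orthogonal projection of a Gaussian is Gaussian with compressed covariance) but must be stated carefully, as must the independence needed to invoke \Cref{lem:NormG} on $\|\gg_\nu\|^2$ in dimension $\dim\{T_\nu-\tilde T\}$. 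Getting the constants to land below $35\epsilon$ (with room for the $(1+\epsilon)$ and $\lambda_0 \le 1+\frac{\epsilon}{\log n}$ slack) is then a routine but careful accounting. The rest of the proof is bookkeeping once these concentration and coupling statements are pinned down.
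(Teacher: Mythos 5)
Your proposal follows essentially the same route as the paper's proof: the same block decomposition of $\ww^{\top}\VV\ww$ over $\tilde{T}$, the spaces $T_{\nu}-\tilde{T}$, and $\overline{T}$, the same three bounds (power-method suppression of $\overline{T}$ as in \Cref{lem:boundLowEV}, the Gaussian-coupling bound $\|\Pi_{\nu}\zz\|^{2}\le\lambda_{a_{\nu}}^{2K}\cdot 2q_{\nu}$ as in \Cref{lem:projZ}, and the important/non-important split using the claim's hypothesis and \Cref{def:impNu} exactly as in \Cref{lem:GaussianProjD,lem:NotImp}), followed by the same union bound. The two bookkeeping points you flag — that the decomposition of $\mathbb{R}^{n}$ must be justified and that the coupling must be stated as covariance compression of a Gaussian onto a subspace of $T_{\nu}$ — are precisely what the paper handles in \Cref{prop:decomp space} and \Cref{lem:projZ}, so no new idea is needed.
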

\subsubsection*{Proof of \Cref{lem:EigenspaceChange} using \Cref{cl:progress}}
Suppose for contradiction that \Cref{lem:EigenspaceChange} does not hold, i.e., the conditions on $\dim{T_\nu-\Ttil}$ of \Cref{cl:progress} hold for all $\nu \in \cal{I}$.
 On one hand, since $\lambda_{\max}(\AA)\geq 1-\epsilon$, \Cref{cl:progress} says that $\ww^{\top}\VV\ww\leq 35 \epsilon$ for all $\ww \in \WW$ with probability at least $1-\frac{40\log\frac{n}{\epsilon}}{n^2}$.
 From the assumption of \Cref{lem:EigenspaceChange}, we have $\ww^{\top}\AAtil\ww < 1- 40\epsilon$ for all $\ww\in \WW$ as well, this implies that, for all $\ww \in \WW$,
\[
\ww^{\top}\AA\ww < \ww^{\top}\AAtil\ww +\ww^{\top}\VV\ww < 1-5\epsilon.
\]
On the other hand, since $\lambda_{\max}(\AA)\geq 1- \epsilon$, we must have  $\ww^{\top}\AA\ww \geq 1-5\epsilon$ for some $\ww\in \WW$ with probability at least $1-1/n^2$ by  \Cref{thm:StaticPower}.
This gives a contradiction. 

Therefore, we conclude that, with probability at least $1-\frac{40\log\frac{n}{\epsilon}+1}{n^2}$, the conditions of \Cref{cl:progress} must be false for some $\nu \in \mathcal{I}$. That is, we have
\begin{itemize}
\item $\dim{T_{\nu} -\tilde{T}} \geq \frac{\epsilon}{300\log\frac{n}{\epsilon}} d_{\nu}$ if $d_{\nu} \geq \frac{3000\log n\log\frac{n}{\epsilon}}{\epsilon}$, and 
\item $\dim{T_{\nu} -\tilde{T}} \geq 1$ if $d_{\nu} < \frac{3000\log n\log\frac{n}{\epsilon}}{\epsilon}$.
\end{itemize}
This concludes the proof of \Cref{lem:EigenspaceChange}.

\subsubsection*{Setting Up for the Proof of \Cref{cl:progress}}

Recall the definitions of subspaces $T,T_\nu,\Ttil$ and $\overline{T}$ in \Cref{def:SpaceA}.

\begin{proposition}\label{prop:decomp space}
We can cover the entire space with the following subspaces
\begin{equation}\label{eq:SplitSpace}
 \mathbb{R}^n = T + \overline{T} = \tilde{T} + \sum_{\nu = 0}^{15\log \frac{n}{\epsilon}-1}(T_{\nu} -\tilde{T}) + \overline{T}
\end{equation}
where all subspaces in the sum are mutually orthogonal.
\end{proposition}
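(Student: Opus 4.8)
\textbf{Proof proposal for Proposition~\ref{prop:decomp space}.}

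The plan is to establish the two equalities in \eqref{eq:SplitSpace} from right to left, and then verify the mutual orthogonality claim. The starting point is the spectral decomposition of $\AA$: since $\AA$ is symmetric PSD, there is an orthonormal eigenbasis $\uu_1,\dots,\uu_n$ with eigenvalues $\lambda_1 \ge \lambda_2 \ge \cdots \ge \lambda_n \ge 0$. All of the spaces $T$, $\overline T$, $T_\nu$, $\tilde T$ will be identified with the spans of contiguous (or, for $T_\nu - \tilde T$, carefully chosen) blocks of these eigenvectors, at which point orthogonality becomes immediate and the sum decompositions become statements about partitioning an index set.

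First I would unwind the definitions. By \Cref{def:SpaceA}, $\Span(\epsilon',\AA)$ is the span of eigenvectors of $\AA$ with eigenvalue $\ge (1-\epsilon')\lambda_0$; since the $\Span(\cdot,\AA)$ are nested increasing in the threshold parameter, $T_\nu = \Span\paren{\tfrac{(\nu+1)\epsilon}{5\log(n/\epsilon)},\AA} - \Span\paren{\tfrac{\nu\epsilon}{5\log(n/\epsilon)},\AA}$ is, by the difference operation in \Cref{def:subspace}, exactly the span of the eigenvectors whose eigenvalues lie in the half-open band $\bigl(\,(1-\tfrac{(\nu+1)\epsilon}{5\log(n/\epsilon)})\lambda_0,\ (1-\tfrac{\nu\epsilon}{5\log(n/\epsilon)})\lambda_0\,\bigr]$. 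Hence the $T_\nu$ for $\nu = 0,\dots,15\log\tfrac n\epsilon - 1$ are mutually orthogonal, and their sum is $T = \Span(3\epsilon,\AA)$ — this is precisely the observation ``$T = \sum_\nu T_\nu$'' already noted after \Cref{def:SpaceA}, using that $15\log\tfrac n\epsilon \cdot \tfrac{\epsilon}{5\log(n/\epsilon)} = 3\epsilon$. Then $\mathbb R^n = T + \overline T$ with $T \cap \overline T = \{0\}$ is just the definition of the complement in \Cref{def:subspace}; and since $T$ is a span of eigenvectors, we may and do take $\overline T$ to be the span of the complementary eigenvectors (those with eigenvalue $< (1-3\epsilon)\lambda_0$), making $T \perp \overline T$.

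The one genuinely delicate point — and the step I expect to be the main obstacle — is the middle equality $T = \tilde T + \sum_\nu (T_\nu - \tilde T)$ together with the assertion that the summands $\tilde T$ and the $T_\nu - \tilde T$ are mutually orthogonal. Orthogonality of $T_\nu - \tilde T$ to $T_{\nu'} - \tilde T$ for $\nu \ne \nu'$ does \emph{not} follow from the eigenbasis of $\AA$ alone, because $\tilde T$ is defined via $\AAtil$, not $\AA$; the difference $T_\nu - \tilde T = T_\nu \cap \overline{\tilde T}$ depends on the choice of complement $\overline{\tilde T}$. The fix is to \emph{choose the complements compatibly}: take $\overline{\tilde T}$ to be $\Span(3\epsilon,\AAtil)^{\perp}$ (the span of the small-eigenvalue eigenvectors of $\AAtil$), and then for each $\nu$ define the projection $\Pi_\nu$ and the space $T_\nu - \tilde T$ as the orthogonal projection of $T_\nu$ onto $\overline{\tilde T}$. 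Crucially, by \Cref{lem:Monotone} we have $\tilde T \subseteq T$ (since updates are decreasing, $\Span(3\epsilon,\AAtil) \subseteq \Span(3\epsilon,\AA)$), so $\overline{\tilde T} \cap T$ together with $\tilde T$ spans $T$; and since all the $T_\nu$ already live inside $T$ and are orthogonal there, projecting them onto the single subspace $\overline{\tilde T} \cap T$ and intersecting with $T$ keeps their images orthogonal. Concretely: pick an orthonormal basis of $T$ adapted to the orthogonal splitting $T = \tilde T \oplus (\overline{\tilde T}\cap T)$; the $T_\nu$ partition one natural basis of $T$, and pushing that partition through the orthogonal projection $\Pi_{\overline{\tilde T}\cap T}$ (which is what $\sum_\nu \Pi_\nu + \Pi_{\tilde T}$ amounts to) yields $\tilde T + \sum_\nu (T_\nu - \tilde T) = \tilde T \oplus (\overline{\tilde T}\cap T) = T$ with all pieces orthogonal. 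Adding back $\overline T \perp T$ gives $\mathbb R^n$. I would write this out carefully, since it is exactly the decomposition that licenses the term-by-term bound $\ww^\top \VV \ww = \ww^\top \VV_{\tilde T}\ww + \sum_\nu \ww^\top \VV_{T_\nu - \tilde T}\ww + \ww^\top \VV_{\overline T}\ww$ used later in the proof of \Cref{cl:progress}.
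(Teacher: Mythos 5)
Your first half (identifying each $T_\nu$ with an eigenvalue band of $\AA$, so that $T=\sum_\nu T_\nu$, and taking $\overline{T}$ to be the span of the remaining eigenvectors so that $T\perp\overline{T}$) matches the paper. The problem is in your treatment of the middle equality: you locate the difficulty in the wrong place, and your repair introduces a false step. With the paper's actual definition, $T_\nu-\tilde{T}=T_\nu\cap\overline{\tilde{T}}$ with $\overline{\tilde{T}}=\tilde{T}^{\perp}$, mutual orthogonality is \emph{not} the delicate point: each $T_\nu-\tilde{T}$ is a subspace of $T_\nu$, the $T_\nu$ are spans of disjoint eigenbands of $\AA$ and hence mutually orthogonal, and each $T_\nu-\tilde{T}$ lies inside $\tilde{T}^{\perp}$, so all summands (including $\tilde{T}$ and $\overline{T}$) are pairwise orthogonal for free. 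What actually requires an argument is the \emph{covering} claim. The paper gets $T=\tilde{T}\oplus(T\cap\tilde{T}^{\perp})$ from $\tilde{T}\subseteq T$ (\Cref{lem:Monotone}) and then passes from $T=\sum_\nu T_\nu$ to $T-\tilde{T}=\sum_\nu(T_\nu-\tilde{T})$; since intersection does not distribute over sums of subspaces in general, that inclusion is the step deserving your care — not the orthogonality.

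Your proposed fix — redefining $T_\nu-\tilde{T}$ as the orthogonal projection of $T_\nu$ onto $\overline{\tilde{T}}\cap T$ and asserting that this ``keeps their images orthogonal'' — is false. Orthogonal projection does not preserve orthogonality of subspaces: in $\mathbb{R}^2$ take $T_0=\mathrm{span}(e_1)$, $T_1=\mathrm{span}(e_2)$ and project onto $\mathrm{span}(e_1-e_2)$; the images are $\tfrac12(e_1-e_2)$ and $-\tfrac12(e_1-e_2)$, which are not orthogonal. (In general, for a projection $P$ and orthogonal $u,v$ one has $\langle Pu,Pv\rangle=-\langle(I-P)u,(I-P)v\rangle$, which is generically nonzero.) Under your redefinition the sum identity becomes easy, but the orthogonality claim — precisely what is needed to license the later term-by-term expansion of $\ww^{\top}\VV\ww$ in \Cref{cl:progress} — breaks. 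Keep the intersection definition, observe that orthogonality is then immediate as above, and direct your effort at justifying $T\cap\tilde{T}^{\perp}=\sum_\nu\left(T_\nu\cap\tilde{T}^{\perp}\right)$, which is the content of the paper's one-line derivation.
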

\begin{proof}
    It suffices to show that $T = \tilde{T} + \sum_{\nu = 0}^{15\log \frac{n}{\epsilon}-1}(T_{\nu} -\tilde{T})$.
    To see this, note that $\tilde{T}\subseteq T$. Therefore, we have $T = (T-\tilde{T}) + T\cap \tilde{T} = (T - \tilde{T}) + \tilde{T}$. We also know that $ T = \sum_{\nu=0}^{15\log \frac{n}{\epsilon}-1}T_{\nu}$ and this gives
$T - \tilde{T} = \sum_{\nu=0}^{15\log \frac{n}{\epsilon}-1} (T_{\nu}-\tilde{T})$, which concludes the proof.
\end{proof}

\paragraph{Notation.}The goal of \Cref{cl:progress} is to bound $\ww^{\top}\VV\ww\leq 35 \epsilon$ for all $\ww\in \WW$. We will use the following notations.
\begin{itemize}
    \item Let $\Pi_{\tilde{T}},\Pi_{\nu},\Pi_{\overline{T}}$ denote projection matrices to the subspaces $\tilde{T}$, $T_{\nu}-\tilde{T}$, and $\overline{T}$ respectively.
    \item  Define $\VV_{\tilde{T}} = \Pi_{\tilde{T}}\VV\Pi_{\tilde{T}}$, $\VV_{T_{\nu}-\tilde{T}} = \Pi_{\nu}\VV\Pi_{\nu}$, and $\VV_{\overline{T}} = \Pi_{\overline{T}}\VV\Pi_{\overline{T}}$.
\end{itemize}
By \Cref{prop:decomp space}, for any $\ww\in \WW$, we can decompose $\ww^{\top}\VV\ww$  as 
\[
\ww^{\top}\VV\ww = \ww^{\top}\VV_{\tilde{T}}\ww+\sum_{\nu = 0}^{15\log\frac{n}{\epsilon}-1}\ww^{\top}\VV_{T_{\nu} -\tilde{T}}\ww+ \ww^{\top}\VV_{\overline{T}}\ww.
\]
Our strategy is to upper bound each term one by one. Bounding $\ww^{\top}\VV_{\tilde{T}}\ww$ is straightforward, but bounding other terms requires technical helper lemmas.
\Cref{lem:boundLowEV} is needed for bounding $\ww^{\top}\VV_{\overline{T}}\ww$.
\Cref{lem:GaussianProjD,lem:NotImp} are helpful for bounding $\ww^{\top}\VV_{T_{\nu} -\tilde{T}}\ww$ when $\nu \in \cal{I}$ and when $\nu \notin \cal{I}$, respectively.

\subsubsection*{Helper Lemmas for \Cref{cl:progress}}
In all the statements of the helper lemmas below. Let $\WW$ be as defined in Line~\ref{line:before case} in the execution of \textsc{PowerMethod}($\epsilon,\AA$). Consider any fixed $\ww\in\WW$. Observe that we can write 
\begin{equation}\label{eq:rewrite w}  
\ww=\sum_{i=1}^{n}\frac{\lambda_{i}^{K}\alpha_{i}\uu_{i}}{\sqrt{\sum_{j}\lambda_{j}^{2K}\alpha_{j}^{2}}}
\end{equation}
where $K=\frac{4\log\frac{n}{\epsilon}}{\epsilon}$, $\alpha_{i}\sim N(0,1)$ are gaussian random variables, and $\lambda_i$ and $\uu_{i}$ are the $i$-th eigenvalue and eigenvector of $\AA$, respectively.

The following lemma shows that the projection of $\ww$ on $\overline{T}$ is always small. At a high level, 
since $\overline{T}$ is spanned by the eigenvectors with the small eigenvalues, the power method guarantees with high probability that the direction of $\ww$ along these eigenvectors will be exponentially small in the number of iterations $K$.
Recall that $\Pi_{\overline{T}}$ is the projection matrix to the space $\overline{T}$.
\begin{lemma}\label{lem:boundLowEV}
If $\lambda_{\max}(\AA)\geq 1-\epsilon$, then
\[
\Pr\left[\|\Pi_{\overline{T}}\ww\|^2 \leq \frac{\epsilon^2}{4}\right]\geq 1 - \frac{3}{n^2}.
\]
\end{lemma}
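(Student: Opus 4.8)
The plan is to directly estimate $\|\Pi_{\overline{T}}\ww\|^2$ using the explicit expansion \eqref{eq:rewrite w} of $\ww$ in the eigenbasis of $\AA$. Since $\overline{T}$ is spanned by the eigenvectors $\uu_i$ of $\AA$ with eigenvalue $\lambda_i < (1-3\epsilon)\lambda_0$, we have $\|\Pi_{\overline{T}}\ww\|^2 = \frac{\sum_{i : \lambda_i < (1-3\epsilon)\lambda_0} \lambda_i^{2K}\alpha_i^2}{\sum_{j=1}^n \lambda_j^{2K}\alpha_j^2}$. First I would lower-bound the denominator: since $\lambda_{\max}(\AA) \geq 1-\epsilon$ (and $\lambda_0 \ge \lambda_{\max}(\AA)$, so actually I should be careful and use $\lambda_1 = \lambda_{\max}(\AA)$), the denominator is at least $\lambda_1^{2K}\alpha_1^2$, and by \Cref{lem:chi}, $\alpha_1^2 \geq 1/n^4$ with probability at least $1 - 1/n^2$. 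Hence the denominator is at least $\lambda_1^{2K}/n^4 \ge (1-\epsilon)^{2K}/n^4$ with high probability.

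Next I would upper-bound the numerator. Each term has $\lambda_i < (1-3\epsilon)\lambda_1 \cdot \frac{\lambda_0}{\lambda_1} $ — here I need to track the relation between $\lambda_0$ and $\lambda_1$; since $\lambda_0 = \lambda_{\max}(\AA_0)$ is the \emph{static} reference value and $\AA \preceq \AA_0$, we have $\lambda_1 = \lambda_{\max}(\AA) \le \lambda_0$, and combined with $\lambda_1 \ge 1-\epsilon$ and $\lambda_0 \le 1 + \epsilon/\log n$ we get $\lambda_i < (1-3\epsilon)\lambda_0 \le (1-3\epsilon)(1+\epsilon/\log n) \le (1-2\epsilon)$ say. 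So $\lambda_i^{2K} \le (1-2\epsilon)^{2K}$. Summing, the numerator is at most $(1-2\epsilon)^{2K}\sum_i \alpha_i^2 \le (1-2\epsilon)^{2K}\cdot 2n$ with probability at least $1 - e^{-\Omega(n)}$ by \Cref{lem:NormG} (norm concentration of the full Gaussian vector, whose dimension is at most $n$). Putting the two bounds together:
\[
\|\Pi_{\overline{T}}\ww\|^2 \le \frac{2n \cdot (1-2\epsilon)^{2K}}{(1-\epsilon)^{2K}/n^4} = 2n^5 \left(\frac{1-2\epsilon}{1-\epsilon}\right)^{2K}.
\]
Since $\frac{1-2\epsilon}{1-\epsilon} \le 1 - \epsilon \le e^{-\epsilon}$, and $K = \frac{4\log(n/\epsilon)}{\epsilon}$, we get $\left(\frac{1-2\epsilon}{1-\epsilon}\right)^{2K} \le e^{-8\log(n/\epsilon)} = (\epsilon/n)^8$, so the bound is at most $2n^5 (\epsilon/n)^8 = 2\epsilon^8/n^3 \le \epsilon^2/4$ for $n$ large enough (using $\epsilon \le 1$). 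A union bound over the two bad events gives failure probability at most $1/n^2 + e^{-\Omega(n)} \le 3/n^2$, as claimed (being generous with the constant to absorb lower-order terms).

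The main obstacle I anticipate is \textbf{getting the constants in the exponent to line up cleanly}, in particular carefully tracking the gap between $1-3\epsilon$ (the threshold defining $\overline{T}$, relative to $\lambda_0$) and $1-\epsilon$ (the lower bound on $\lambda_1$), and making sure the ratio $\frac{1-3\epsilon}{1-\epsilon} \approx 1 - 2\epsilon$ beats $\frac{1}{(1-\epsilon)}$ in a way that still gives a genuine exponential decay after $K$ steps — this requires $3\epsilon > \epsilon$ with enough room, which holds comfortably, but the $\lambda_0$ vs.\ $\lambda_1$ normalization (with $\lambda_0$ possibly as large as $1 + \epsilon/\log n$) needs a short argument. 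A secondary, routine point is invoking \Cref{lem:chi} with the correct coordinate: strictly we want the Gaussian coefficient $\alpha_1$ along the top eigenvector $\uu_1$ of $\AA$ to be non-negligible, which is immediate since the $\alpha_i$ are i.i.d.\ $N(0,1)$ regardless of the (data-dependent) eigenbasis, as the standard Gaussian is rotationally invariant and the eigenbasis is chosen independently of the random initialization.
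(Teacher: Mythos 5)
Your proposal is correct and follows essentially the same route as the paper's proof: write $\|\Pi_{\overline{T}}\ww\|^2$ as the ratio $\sum_{i>d}\lambda_i^{2K}\alpha_i^2 / \sum_i \lambda_i^{2K}\alpha_i^2$, lower-bound the denominator by $\lambda_1^{2K}\alpha_1^2$ with $\alpha_1^2\ge 1/n^4$ via \Cref{lem:chi}, upper-bound the numerator by (largest eigenvalue below the $(1-3\epsilon)\lambda_0$ threshold)$^{2K}\cdot\sum_i\alpha_i^2\le 2n$ via \Cref{lem:NormG}, and let the $2K$-th power of the eigenvalue gap kill the $\poly(n)$ factor. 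The only cosmetic difference is that the paper bounds the ratio $\lambda_{d+1}/\lambda_1\le(1-3\epsilon)/(1-2\epsilon)$ relative to $\lambda_0$, while you pass to absolute bounds using $\lambda_1\ge 1-\epsilon$ and $\lambda_0\le 1+\epsilon/\log n$; both yield the same conclusion with the same failure probability.
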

\begin{proof}
From the definition of $\overline{T}$ and $d$ defined in \Cref{def:SpaceA}, we have $\Pi_{\overline{T}}\ww = \frac{\sum_{i>d} \lambda_i^{K}\alpha_i\uu_i}{\sqrt{\sum_{j=1}^n \lambda_j^{2K}\alpha_i^2}}$ by \Cref{eq:rewrite w}. Hence,  
\[
\|\Pi_{\overline{T}}\ww\|^2 = \frac{\sum_{i>d}\lambda_i^{2K}\alpha_i^2}{\sum_{i=1}^n\lambda_i^{2K}\alpha_i^2}.
\]
First, we give a crude lower bound for the denominator. We have $$\sum_{i=1}^n\lambda_i^{2K}\alpha_i^2 \geq \frac{\lambda_1^{2K}}{n^4}$$ with probability at least $1-1/n^{2}$. This is because $\sum_{i=1}^n\lambda_i^{2K}\alpha_i^2 \geq \lambda_1^{2K}\alpha_1^2$ and,  since $\alpha_1^2 \sim \chi^2_1$,  we have $\alpha_1^2 \geq 1/n^4$ with probability at least $1-1/n^{2}$ by \Cref{lem:chi}.

Next, we upper bound the numerator as 
\[
\sum_{i>d}\lambda_i^{2K}\alpha_i^2 \leq \lambda_{d+1}^{2K}\sum_{i=1}^n\alpha_i^2.
\]
From \Cref{lem:NormG}, $\sum_{i=1}^n\alpha_i^2 \leq 2 n$ with probability at least $1-1/n^2$. Also note that, since $\lambda_{\max}(\AA)\geq 1-\epsilon \geq \lambda_0 (1-2\epsilon)$. We now have with probability $1-3/n^2$,
\[
\|\Pi\ww\|^2 \leq \left(\frac{\lambda_{d+1}}{\lambda_1}\right)^{2K}\cdot 2n^5 \leq  2n^5\left(\frac{\lambda_{d+1}}{\lambda_0 (1-2\epsilon)}\right)^{2K} \leq 2 n^5 \left(\frac{1-3\epsilon}{1-2\epsilon}\right)^{2K} \leq 2 n^5 \frac{\epsilon^6}{n^6}\leq \frac{\epsilon^2}{4}.\qedhere
\]
\end{proof}

The next two helper lemmas are to show that the projection of $\ww$ to $T_{\nu}- \tilde{T}$ is small. To do this, we introduce some more notations and one proposition. 
For any level $\nu$, we will use $q_{\nu}$ to denote,
\begin{equation}
    q_{\nu} \defeq \dim{T_{\nu}-\tilde{T}}.
\end{equation}
Let 
\begin{equation}\label{def:z}
\zz=\sum_{i}z_{i}\uu_{i}\text{ where }z_{i}=\lambda_{i}^{K}\alpha_{i}.
\end{equation} 
Therefore, $\ww=\zz/\|\zz\|$.

We now bound the norm of the projection of $\zz$ to $T_\nu - \Ttil$. The proof is based on the fact that $\zz$ is a {\it scaled} gaussian random vector, and the projection of a gaussian on a $q_{\nu}$-dimensional subspace should have norm proportional to $q_{\nu}$. 
\begin{proposition}\label{lem:projZ}
For $\zz$ as defined in~\eqref{def:z}, we have
\[
\|\Pi_{\nu}\zz\|^2 \leq \lambda_{a_{\nu}}^{2K} \cdot \sum_{j=a_{\nu}}^{b_{\nu}}\alpha_j^2.
\] 
Furthermore, if $q_{\nu}\geq 10 \log n$, then with probability at least $1-1/n^2$,
\[
\|\Pi_{\nu}\zz\|^2 \leq 2q_{\nu}\cdot \lambda_{a_{\nu}}^{2K}.
\]
\end{proposition}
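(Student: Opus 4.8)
The plan is to reduce everything to the subspace $T_\nu$ and then view $\zz$ restricted to $T_\nu$ as a Gaussian vector whose covariance is dominated by $\lambda_{a_\nu}^{2K}$ times the identity. First, by \Cref{def:subspace} and \Cref{def:Basis} we have $T_\nu-\Ttil = T_\nu\cap\overline{\Ttil}\subseteq T_\nu = \Span(\uu_{a_\nu},\dots,\uu_{b_\nu})$, so the orthogonal projection $\Pi_\nu$ annihilates every $\uu_i$ with $i<a_\nu$ or $i>b_\nu$. Writing $\zz_\nu:=\sum_{j=a_\nu}^{b_\nu}z_j\uu_j$, this gives $\Pi_\nu\zz=\Pi_\nu\zz_\nu$, and since an orthogonal projection cannot increase the norm, $\|\Pi_\nu\zz\|^2\le\|\zz_\nu\|^2=\sum_{j=a_\nu}^{b_\nu}\lambda_j^{2K}\alpha_j^2$. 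As the eigenvalues are indexed in decreasing order, $\lambda_j\le\lambda_{a_\nu}$ for every $j\ge a_\nu$, and the first inequality $\|\Pi_\nu\zz\|^2\le\lambda_{a_\nu}^{2K}\sum_{j=a_\nu}^{b_\nu}\alpha_j^2$ follows.

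For the high-probability bound I would avoid bounding $\sum_{j=a_\nu}^{b_\nu}\alpha_j^2$ directly, since that concentrates around $d_\nu$ rather than $q_\nu$; instead I would pass to a $q_\nu$-dimensional Gaussian only \emph{after} projecting. Assuming $\lambda_{a_\nu}>0$ (otherwise $\zz_\nu=0$ and there is nothing to prove), write $\zz_\nu=\lambda_{a_\nu}^K D\ggamma$, where $\ggamma=\sum_{j=a_\nu}^{b_\nu}\alpha_j\uu_j$ is a standard Gaussian on $T_\nu$ and $D$ is the self-adjoint operator on $T_\nu$ with $D\uu_j=(\lambda_j/\lambda_{a_\nu})^K\uu_j$, so $0\preceq D\preceq I$. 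Then $\Pi_\nu\zz=\lambda_{a_\nu}^K(\Pi_\nu D)\ggamma$, and $(\Pi_\nu D)\ggamma$ is a centered Gaussian supported on $T_\nu-\Ttil$ with covariance $\Pi_\nu D^2\Pi_\nu\preceq\Pi_\nu$; restricted to the $q_\nu$-dimensional space $T_\nu-\Ttil$ this covariance is at most the identity $I_{q_\nu}$. Diagonalizing it as $U\Sigma U^\top$ with $\Sigma\preceq I_{q_\nu}$ and taking $\gg_\nu$ a standard Gaussian on $T_\nu-\Ttil$, the random vector $\lambda_{a_\nu}^K U\Sigma^{1/2}\gg_\nu$ has the same law as $\Pi_\nu\zz$, which gives a coupling of $\Pi_\nu\zz$ with $\gg_\nu$ in which $\|\Pi_\nu\zz\|=\lambda_{a_\nu}^K\|\Sigma^{1/2}\gg_\nu\|\le\lambda_{a_\nu}^K\|\gg_\nu\|$ pointwise, so $\Pr\big[\|\Pi_\nu\zz\|^2\le 2q_\nu\lambda_{a_\nu}^{2K}\big]\ge\Pr\big[\|\gg_\nu\|^2\le 2q_\nu\big]$.

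To finish I would apply \Cref{lem:NormG} to the $q_\nu$-dimensional standard Gaussian $\gg_\nu$ with a suitable constant $\delta$ (any $\delta$ with $2\delta(1+\delta)\le1$), which yields $\|\gg_\nu\|^2\le 2q_\nu$ with probability at least $1-e^{-\delta^2 q_\nu}\ge 1-1/n^2$ once $q_\nu\ge 10\log n$ (tuning $\delta$ and the leading constant as needed). The only step that is not pure bookkeeping is passing from $\zz_\nu$ to $\gg_\nu$: because $T_\nu-\Ttil$ is a subspace of $T_\nu$ determined by $\AAtil$ and is in general \emph{not} spanned by any sub-collection of the eigenvectors $\uu_j$ of $\AA$, one cannot argue coordinatewise and must route the stochastic domination through the operator inequality $\Pi_\nu D^2\Pi_\nu\preceq\Pi_\nu$. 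Everything else is standard projection geometry and Gaussian concentration.
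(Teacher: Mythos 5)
Your proposal is correct, and at the top level it follows the same route as the paper: the first inequality is proved identically (project onto $T_\nu\supseteq T_\nu-\Ttil$, use $\lambda_j\le\lambda_{a_\nu}$ for $j\ge a_\nu$), and the second is obtained, exactly as in the paper, by dominating $\Pi_\nu\zz$ by $\lambda_{a_\nu}^K$ times a standard Gaussian in the $q_\nu$-dimensional space $T_\nu-\Ttil$ and then invoking \Cref{lem:NormG}. The difference is in how the domination is justified. The paper couples $\zz$ with $\yy=\sum_i\alpha_i\uu_i$ and asserts that coordinatewise domination of the coefficients implies $\|\PP(\Pi_\nu^{\mathrm{full}}\zz)\|\le\|\PP(\lambda_{a_\nu}^K\yy)\|$ for \emph{every} projection $\PP$; taken literally this deterministic implication is false (e.g.\ $x=(0,1)$, $y=(1,1)$ and $\PP$ the projection onto $\mathrm{span}\{(1,-1)\}$ give $\|\PP x\|>\|\PP y\|=0$), so the paper's written step is really shorthand for the distributional statement sketched in its proof overview. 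You instead note that $\Pi_\nu\zz$ is a centered Gaussian whose covariance $\lambda_{a_\nu}^{2K}\,\Pi_\nu D^2\Pi_\nu$ is dominated by $\lambda_{a_\nu}^{2K}\Pi_\nu$, and realize it in law as $\lambda_{a_\nu}^K U\Sigma^{1/2}\gg_\nu$ with $\Sigma\preceq I_{q_\nu}$, so that $\|\Pi_\nu\zz\|\le\lambda_{a_\nu}^K\|\gg_\nu\|$ holds pointwise under that coupling; this covariance-ordering argument is the sound way to make the step rigorous (precisely because, as you flag, $T_\nu-\Ttil$ is not spanned by a sub-collection of the $\uu_j$'s), and it buys a cleaner proof than the paper's at no extra cost. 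One cosmetic point you share with the paper: with the stated constants, \Cref{lem:NormG} at $q_\nu\ge 10\log n$ does not quite give $\|\gg_\nu\|^2\le 2q_\nu$ with failure probability $1/n^2$ (any $\delta$ with $2\delta(1+\delta)\le 1$ yields failure probability only about $n^{-4/3}$), so either the threshold $10\log n$ or the factor $2$ must be relaxed slightly; this has no effect downstream and is covered by your remark about tuning $\delta$ and the leading constant.
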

\begin{proof}
Let $\Pi_{\nu}^{\text{full}}$ be a projection matrix to the subspace $T_{\nu}$. Recall from \Cref{def:Basis} that $\uu_{a_{\nu}},\dots,\uu_{b_{\nu}}$ form an orthonormal basis of $T_\nu$ and so we have $\Pi_{\nu}^{\text{full}}=\sum_{j=a_{\nu}}^{b_{\nu}}\uu_{j}\uu_{j}^{\top}$ and so $$\Pi_{\nu}^{\text{full}}\zz=\sum_{j=a_{\nu}}^{b_{\nu}}(\lambda_{j}^{K}\alpha_{j})\uu_{i}.$$ Since $T_{\nu}-\Ttil\subseteq T_{\nu}$, we have 
\[
\|\Pi_{\nu}\zz\|^{2}\le\|\Pi_{\nu}^{\text{full}}\zz\|^{2} = \lambda_{a_{\nu}}^{2K}\sum_{j=a_{\nu}}^{b_{\nu}}\alpha_{j}^{2},
\]
which proves the first part of the lemma. 

Before proving the second part, we consider the vector $\yy=\sum_{i}\alpha_{i}\uu_{i}$. Since $\alpha_{i}\sim N(0,1)$ for all $i$, we also have $\Pi_{\nu}\yy\sim N(0,1)$ is a gaussian in a $q_{\nu}$-dimensional space. So by \Cref{lem:NormG}, we have
\[
\|\Pi_{\nu}\yy\|^{2}\le2q_{\nu}
\]
with probability at least $1-e^{-q_{\nu}/4}\ge1-e^{-2\log n}=1-1/n^{2}$. 

To prove the second part, observe that $(\lambda_{a_{\nu}}^{K}\cdot\yy)$ ``dominates'' $\Pi_{\nu}^{\text{full}}\zz$ in every coordinate, i.e,. the coefficient of each $\uu_{i}$ in $(\lambda_{a_{\nu}}^{K}\cdot\yy)$ is at least that of $\Pi_{\nu}^{\text{full}}\zz$ for every $i$. Therefore, $\|\PP(\Pi_{\nu}^{\text{full}}\zz)\|\le\|\PP(\lambda_{a_{\nu}}^{K}\cdot\yy)\|$ for any projection matrix $\PP$. Since $T_{\nu}-\Ttil\subseteq T_{\nu}$, we have $\Pi_{\nu}\zz=\Pi_{\nu}\Pi_{\nu}^{\text{full}}\zz$. Therefore, we can conclude that 
\[
\|\Pi_{\nu}\zz\|^{2}=\|\Pi_{\nu}\Pi_{\nu}^{\text{full}}\zz\|^{2}\le\|\Pi_{\nu}(\lambda_{a_{\nu}}^{K}\cdot\yy)\|^{2}
\]
which is at most $\lambda_{a_{\nu}}^{2K}\cdot2q_{\nu}$ with probability at least $1-1/n^{2}$. \qedhere
\end{proof}
The following lemma shows that the projection of $\ww$ on $T_{\nu}-\Ttil$ is small when $\dim{T_\nu - \Ttil}:= q_\nu$ is roughly at most an $\epsilon$-factor of $\dim{T_\nu}:= d_\nu$, and $q_\nu$ is still at least logarithmic.
We will use this lemma to characterize the projection of $\ww$ on $T_{\nu}$ for $\nu \in \mathcal{I}$. 
Recall that $\Pi_\nu$ is a projection matrix that projects any vector to the space $T_{\nu}-\Ttil$.
\begin{lemma}\label{lem:GaussianProjD}

If $ 10 \log n \leq q_{\nu} \leq \frac{\epsilon}{300\log\frac{n}{\epsilon}}d_{\nu}$, then
\[
\Pr\left[\norm{\Pi_{\nu}\ww}^2 \leq \frac{\epsilon}{\log\frac{n}{\epsilon}} \right] \geq 1-\frac{2}{n^{2}}.
\]
\end{lemma}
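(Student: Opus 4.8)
The plan is to write $\ww=\zz/\norm{\zz}$ with $\zz$ as in~\eqref{def:z}, so that $\norm{\Pi_\nu\ww}^2=\norm{\Pi_\nu\zz}^2/\norm{\zz}^2$, and then upper bound the numerator and lower bound the denominator separately. For the numerator I would invoke the second part of \Cref{lem:projZ}: since the hypothesis gives $q_\nu\ge 10\log n$, we get $\norm{\Pi_\nu\zz}^2\le 2q_\nu\lambda_{a_\nu}^{2K}$ with probability at least $1-1/n^2$.

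For the denominator, first note the hypothesis forces $d_\nu\ge\frac{300\log\frac n\epsilon}{\epsilon}q_\nu\ge\frac{3000\log n\log\frac n\epsilon}{\epsilon}$, which is comfortably large. Since $\lambda_i\ge\lambda_{b_\nu}$ for all $i\le b_\nu$ (eigenvalues in decreasing order), we have $\norm{\zz}^2\ge\sum_{i=a_\nu}^{b_\nu}\lambda_i^{2K}\alpha_i^2\ge\lambda_{b_\nu}^{2K}\sum_{i=a_\nu}^{b_\nu}\alpha_i^2$; and because $(\alpha_{a_\nu},\dots,\alpha_{b_\nu})$ is a $d_\nu$-dimensional standard Gaussian, \Cref{lem:NormG} applied with a small constant (say $\delta=1/5$, so that $2(1+\delta)\delta<1/2$) gives $\sum_{i=a_\nu}^{b_\nu}\alpha_i^2\ge d_\nu/2$ with probability at least $1-e^{-\Omega(d_\nu)}\ge 1-1/n^2$. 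Thus $\norm{\zz}^2\ge\tfrac12 d_\nu\lambda_{b_\nu}^{2K}$.

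Intersecting the two events (probability $\ge1-2/n^2$ by a union bound),
\[
\norm{\Pi_\nu\ww}^2\le\frac{2q_\nu\lambda_{a_\nu}^{2K}}{\tfrac12 d_\nu\lambda_{b_\nu}^{2K}}=\frac{4q_\nu}{d_\nu}\left(\frac{\lambda_{a_\nu}}{\lambda_{b_\nu}}\right)^{2K}.
\]
Here $q_\nu/d_\nu\le\frac{\epsilon}{300\log\frac n\epsilon}$ by hypothesis. For the eigenvalue ratio, recall $T_\nu$ collects eigenvectors with eigenvalue in $[(1-\tfrac{(\nu+1)\epsilon}{5\log(n/\epsilon)})\lambda_0,(1-\tfrac{\nu\epsilon}{5\log(n/\epsilon)})\lambda_0)$, so $\frac{\lambda_{a_\nu}}{\lambda_{b_\nu}}\le\frac{1-\nu\epsilon/(5\log(n/\epsilon))}{1-(\nu+1)\epsilon/(5\log(n/\epsilon))}\le 1+\frac{2\epsilon}{5\log(n/\epsilon)}$, using $(\nu+1)\tfrac{\epsilon}{5\log(n/\epsilon)}\le 3\epsilon\le\tfrac12$. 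Plugging in $K=\frac{4\log(n/\epsilon)}{\epsilon}$ and $1+x\le e^x$ gives $\left(\frac{\lambda_{a_\nu}}{\lambda_{b_\nu}}\right)^{2K}\le\exp\!\big(\tfrac{2\epsilon}{5\log(n/\epsilon)}\cdot 2K\big)=e^{16/5}$, a constant. Hence $\norm{\Pi_\nu\ww}^2\le\frac{4e^{16/5}}{300}\cdot\frac{\epsilon}{\log(n/\epsilon)}\le\frac{\epsilon}{\log(n/\epsilon)}$, which is the claim.

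This is essentially a bookkeeping argument layered on top of \Cref{lem:projZ} and \Cref{lem:NormG}; the place to be careful is the chain of constants, namely that the factor $2$ from \Cref{lem:projZ}, the factor $2$ from the Gaussian lower bound, and the $e^{16/5}$ from the eigenvalue ratio together stay below the $1/300$ slack supplied by the hypothesis (indeed $4e^{16/5}/300<\tfrac13$), and that $d_\nu$ is large enough for the $\chi^2$ concentration to beat $1/n^2$ (it is, since $d_\nu=\Omega(\log n\cdot\log(n/\epsilon)/\epsilon)$).
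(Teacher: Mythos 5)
Your proposal is correct and follows essentially the same route as the paper: reduce to bounding $\norm{\Pi_\nu\zz}^2/\norm{\zz}^2$, apply the second part of \Cref{lem:projZ} (valid since $q_\nu\ge 10\log n$) for the numerator, apply \Cref{lem:NormG} to the $d_\nu$-dimensional Gaussian $(\alpha_{a_\nu},\dots,\alpha_{b_\nu})$ for the denominator, use the hypothesis $q_\nu\le\frac{\epsilon}{300\log\frac{n}{\epsilon}}d_\nu$, and absorb the band-width ratio $(\lambda_{a_\nu}/\lambda_{b_\nu})^{2K}$ as an $O(1)$ factor. The only differences are cosmetic choices of constants (e.g.\ $\delta=1/5$ versus $1/10$ in the Gaussian concentration, $e^{16/5}$ versus $e^{4}$ for the ratio), and your constant bookkeeping checks out.
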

\begin{proof}
Since $\ww=\zz/\|\zz\|$, it is equivalent to show that, with probability $1-\frac{2}{n^{2}}$, $$\|\Pi_{\nu}\zz\|^{2}\leq\frac{\epsilon}{\log\frac{n}{\epsilon}}\|\zz\|^{2}.$$ 

We first bound $\|\Pi_\nu \zz\|$ in terms of $d_\nu$. As $q_{\nu} \ge 10\log n$, by \Cref{lem:projZ}, we have with probability at least $1-1/n^{2}$, 
\[
\|\Pi_{\nu}\zz\|^{2}\leq\lambda_{a_{\nu}}^{2K}\cdot2q_{\nu}\le\lambda_{a_{\nu}}^{2K}\cdot\frac{\epsilon}{150\log\frac{n}{\epsilon}}d_{\nu}.
\]
where the second inequality follows from the assumption $q_{\nu}\leq\frac{\epsilon}{300\log\frac{n}{\epsilon}}d_{\nu}$. 

Next, we bound $d_{\nu}$ in terms of $\|z\|$. Consider the $d_{\nu}$-dimensional gaussian vector with coordinate $\alpha_{i}$ for $i=a_{\nu},\dots,b_{\nu}$. Applying \Cref{lem:NormG} to this vector with $\delta=1/10$, we have $\Pr[\sum_{i=a_{\nu}}^{b_{\nu}}\alpha_{i}^{2}\geq (1-\frac{1}{2})d_{\nu}]\geq1-e^{-d_{\nu}/100}\geq1-\frac{1}{n^{2}}$ where the last inequality used that $d_{\nu}\ge 3000\log n$. With probability $1-1/n^{2}$, we now have
\[
d_{\nu}\le2\sum_{i=a_{\nu}}^{b_{\nu}}\alpha_{i}^{2}=2\sum_{i=a_{\nu}}^{b_{\nu}}\frac{z_{i}}{\lambda_{i}^{2K}}\le\frac{2}{\lambda_{b_{\nu}}^{2K}}\|\zz\|^{2}
\]
Combining the two inequalities, we can conclude that, with probability at least $1-2/n^{2}$, 
\[
\|\Pi_{\nu}\zz\|^{2}\le\left(\frac{\lambda_{a_{\nu}}}{\lambda_{b_{\nu}}}\right)^{2K}\frac{\epsilon}{75\log\frac{n}{\epsilon}}\|\zz\|^{2}\le\frac{\epsilon}{\log\frac{n}{\epsilon}}\|\zz\|^{2}
\]
as desired. To see the last inequality, recall from \Cref{def:Basis} that $\lambda_{a_{\nu}}\leq\left(1-\frac{\nu\epsilon}{5\log\frac{n}{\epsilon}}\right)\lambda_{0}$ and $\lambda_{b_{\nu}}\geq\left(1-\frac{(\nu+1)\epsilon}{5\log\frac{n}{\epsilon}}\right)\lambda_{0}$. So $\frac{\lambda_{a_{\nu}}}{\lambda_{b_{\nu}}}\le1+\frac{\epsilon}{2\log\frac{n}{\epsilon}}$ and, hence, 
\[
\left(\frac{\lambda_{a_{\nu}}}{\lambda_{b_{\nu}}}\right)^{2K}\leq\left(1+\frac{\epsilon}{2\log\frac{n}{\epsilon}}\right)^{2K}\le e^{4}\approx54.6. \qedhere
\]
\end{proof}

We next prove that for all $\nu\notin \mathcal{I}$, arbitrary projections of $\ww$ on $T_{\nu}-\Ttil$ are always small. This proof uses a similar idea as that of the previous lemma and the main difference is that we can use the fact that $d_{\nu}$ is small for $\nu\notin \mathcal{I}$ to additionally show that the projection of $\ww$ is small even for small dimensional arbitrary subspaces of $T_{\nu}$. Recall that $\Pi_\nu$ is a projection matrix to the space $T_{\nu}-\Ttil$.
\begin{restatable}{lemma}{NotImp}\label{lem:NotImp}
For any non-important level, $\nu \notin \cal{I}$,
\[
\Pr\left[\norm{\Pi_{\nu}\ww}^2 \leq \frac{\epsilon}{\log\frac{n}{\epsilon}} \right] \geq 1-\frac{1}{n^{2}}.
\]
\end{restatable}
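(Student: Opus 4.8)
The plan is to mirror the structure of the proof of \Cref{lem:GaussianProjD}, but exploit that $d_\nu$ itself is small (relative to $\sum_{\nu' < \nu} d_{\nu'}$) rather than that $q_\nu$ is small relative to $d_\nu$. As before, it suffices to show that with probability at least $1 - 1/n^2$ we have $\|\Pi_\nu \zz\|^2 \le \frac{\epsilon}{\log\frac{n}{\epsilon}} \|\zz\|^2$, since $\ww = \zz/\|\zz\|$. First I would upper bound the numerator. Using the first part of \Cref{lem:projZ} (which holds deterministically) together with a Chernoff/Gaussian-norm bound on $\sum_{i=a_\nu}^{b_\nu}\alpha_i^2$, we get $\|\Pi_\nu\zz\|^2 \le \lambda_{a_\nu}^{2K}\sum_{i=a_\nu}^{b_\nu}\alpha_i^2 \le \lambda_{a_\nu}^{2K}\cdot O(d_\nu + \log n)$ with high probability; and since $\nu \notin \mathcal{I}$ means $d_\nu < \frac{\epsilon}{600\log^3\frac{n}{\epsilon}}\sum_{\nu'<\nu}d_{\nu'}$, this is at most $\lambda_{a_\nu}^{2K}\cdot \frac{O(\epsilon)}{\log^3\frac{n}{\epsilon}}\sum_{\nu'<\nu}d_{\nu'}$ (the $\log n$ term is absorbed, or handled separately when $\sum_{\nu'<\nu}d_{\nu'}$ is tiny — see the obstacle below).

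Next I would lower bound the denominator $\|\zz\|^2$ in terms of $\sum_{\nu'<\nu}d_{\nu'}$. Exactly as in \Cref{lem:GaussianProjD}, apply \Cref{lem:NormG} to the $\left(\sum_{\nu'<\nu}d_{\nu'}\right)$-dimensional gaussian vector with coordinates $\alpha_i$ for $i = 1,\dots,b_{\nu-1}$ to get $\sum_{i\le b_{\nu-1}}\alpha_i^2 \ge \frac12\sum_{\nu'<\nu}d_{\nu'}$ with probability $\ge 1 - 1/n^2$. Since $z_i = \lambda_i^K\alpha_i$ and $\lambda_i \ge \lambda_{b_{\nu-1}}$ for $i \le b_{\nu-1}$, we have $\sum_{\nu'<\nu}d_{\nu'} \le 2\sum_{i\le b_{\nu-1}}\alpha_i^2 = 2\sum_{i\le b_{\nu-1}} z_i^2/\lambda_i^{2K} \le \frac{2}{\lambda_{b_{\nu-1}}^{2K}}\|\zz\|^2$. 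Combining, $\|\Pi_\nu\zz\|^2 \le \left(\frac{\lambda_{a_\nu}}{\lambda_{b_{\nu-1}}}\right)^{2K}\cdot \frac{O(\epsilon)}{\log^3\frac{n}{\epsilon}}\|\zz\|^2$. The ratio $\lambda_{a_\nu}/\lambda_{b_{\nu-1}}$ is at most $1$ since $a_\nu > b_{\nu-1}$ and eigenvalues are in decreasing order (this is even easier than in \Cref{lem:GaussianProjD}, where the ratio was only $O(1)$), so the whole thing is $\le \frac{\epsilon}{\log\frac{n}{\epsilon}}\|\zz\|^2$ for $n$ large enough, and a union bound over the two bad events gives failure probability $\le 2/n^2$, which can be tightened to $1/n^2$ by being slightly more careful with constants.

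The main obstacle is the degenerate case where $\sum_{\nu'<\nu}d_{\nu'}$ is very small — in particular if it is $0$ (e.g.\ $\nu = 0$, or all lower levels are empty), then the importance condition $d_\nu \ge \frac{\epsilon}{600\log^3\frac{n}{\epsilon}}\sum_{\nu'<\nu}d_{\nu'} = 0$ is vacuously satisfied, so $\nu$ is automatically \emph{important} and the lemma statement does not apply to it; one should check that $\nu \notin \mathcal{I}$ forces $\sum_{\nu'<\nu}d_{\nu'} \ge 1$, in fact $\ge \Omega(\log^3\frac{n}{\epsilon}/\epsilon)$ whenever $d_\nu \ge 1$, and separately that if $d_\nu = 0$ then $q_\nu = \dim(T_\nu - \tilde T) = 0$ and the claim is trivial. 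The other place requiring care is when $q_\nu < 10\log n$, so the high-probability concentration $\sum_{i=a_\nu}^{b_\nu}\alpha_i^2 = O(d_\nu)$ in \Cref{lem:projZ} is not available; here one should instead bound $\sum_{i=a_\nu}^{b_\nu}\alpha_i^2$ crudely by $O(\log n)$-many sub-exponential tails (a direct Chernoff bound on at most $d_\nu \le n$ squared gaussians, losing only a $\log n$ factor), and then use the non-importance of $\nu$ together with $\sum_{\nu'<\nu}d_{\nu'} = \Omega(\log^3\frac{n}{\epsilon}/\epsilon)$ to still dominate this $O(\log n)$ by $\frac{\epsilon}{\log\frac n\epsilon}\cdot\sum_{\nu'<\nu}d_{\nu'}$. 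Modulo these two edge cases, the argument is a routine adaptation of \Cref{lem:GaussianProjD}.
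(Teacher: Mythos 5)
Your proposal is correct and follows essentially the same route as the paper's proof: bound $\|\Pi_\nu\zz\|^2$ via the deterministic part of \Cref{lem:projZ} plus a $\chi^2$/Gaussian-norm tail, invoke non-importance to compare $d_\nu$ with $\sum_{\nu'<\nu}d_{\nu'}$, lower bound $\|\zz\|^2$ by the prefix Gaussian mass via \Cref{lem:NormG}, and use $\lambda_{a_\nu}\le\lambda_{b_{\nu-1}}$. The only difference is organizational — your unified $O(d_\nu+\log n)$ tail bound together with the observation that $\nu\notin\mathcal{I}$ and $d_\nu\ge 1$ force $\sum_{\nu'<\nu}d_{\nu'}\ge \frac{600\log^3\frac{n}{\epsilon}}{\epsilon}$ replaces the paper's explicit case split on $q_\nu\gtrless 20\log n$ and on the prefix size (whose smallest sub-case likewise reduces to $d_\nu=0$) — and the $2/n^2$-versus-$1/n^2$ slack you flag is the same harmless constant-factor looseness present in the paper's own proof.
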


\begin{proof}
Again, it is sufficient to prove for $\zz$ as defined in \Cref{def:z},
\[
\Pr\left[\norm{\Pi_{\nu}\zz}^2 \leq \frac{\epsilon}{\log\frac{n}{\epsilon}}\|\zz\|^2 \right] \geq 1-\frac{1}{n^{2}}.
\]
In this proof, we consider the case of $q_{\nu}\geq 20\log n$ and $q_{\nu}<20\log n$ separately. Let us first look at $q_{\nu}\geq 20\log n$.
\paragraph{Case $q_{\nu}\geq 20\log n$:}
Our strategy will be to first bound $\|\Pi_{\nu}\zz\|^2$ by $d_{\nu}$, which can be further bounded by $\sum_{\nu'<\nu}d_{\nu'}$.
From \Cref{lem:projZ}, with probability at least $1-1/n^2$,
\[
\|\Pi_{\nu}\zz\|^2 \leq 2q_{\nu}\lambda_{a_{\nu}}^{2K}.
\]
Since $T_{\nu}-\tilde{T}\subseteq T_{\nu}$, $q_{\nu}\leq d_{\nu}$. Furthermore, since $\nu \notin \mathcal{I}$, $d_{\nu}\leq \frac{\epsilon}{600\log^3 \frac{n}{\epsilon}}\sum_{\nu'<\nu}d_{\nu'}$. Using these bounds, we then have with probability $1-1/n^2$,  
\[
\|\Pi_{\nu}\zz\|^2 \leq \frac{\epsilon}{300\log^3 \frac{n}{\epsilon}}\lambda_{a_{\nu}}^{2K}\sum_{\nu'<\nu}d_{\nu'}.
\]
We next bound $\sum_{\nu'<\nu}d_{\nu'}$ in terms of $\|\zz\|$. Consider the $\sum_{\nu'<\nu}d_{\nu'}$ dimensional gaussian vector with coordinates $\alpha_i$, for $i = 1, \cdots, b_{\nu-1}$. Applying \Cref{lem:NormG} to this vector with $\delta=1/3$ gives,
\[
\Pr\left[\sum_{i=1}^{b_{\nu-1}} \alpha_i^2 \geq \left(1-\frac{8}{9}\right)\sum_{\nu'<\nu}d_{\nu'}\right] \geq 1- e^{-\frac{\sum_{\nu'<\nu}d_{\nu'}}{9}} \geq 1-\frac{1}{n^2}.
\]
In the last inequality we used that $\sum_{\nu'<\nu}d_{\nu'}\geq d_{\nu}$ and $d_{\nu}\geq q_{\nu}\geq 20\log n$. We now know that with probability at least $1-1/n^2$,
\[
\sum_{\nu'<\nu}d_{\nu'} \leq 9\sum_{i=1}^{b_{\nu-1}}\alpha_i^2 = 9\sum_{i=1}^{b_{\nu-1}}\frac{\zz_i^2}{\lambda_i^{2K} }\leq 9\frac{\|\zz\|^2}{\lambda_{b_{\nu-1}}^{2K}}.
\]
Therefore, with a probability of at least $1-2/n^2$,
\[
\|\Pi_{\nu}\zz\|^2 \leq \frac{\epsilon}{15\log^3 \frac{n}{\epsilon}}\left(\frac{\lambda_{a_{\nu}}}{\lambda_{b_{\nu-1}}}\right)^{2K}\|\zz\|^2 \leq  \frac{\epsilon}{15\log^3 \frac{n}{\epsilon}}\|\zz\|^2.
\]
The last inequality follows from the fact $\lambda_{a_{\nu}} \leq \lambda_{b_{\nu-1}}$.
\paragraph{Case $q_{\nu} <20 \log n$:} In this case, since $q_{\nu}<20\log n$, we apply the first part of \Cref{lem:projZ} to get
\[
\|\Pi_{\nu}\zz\|^2\leq  \lambda_{a_{\nu}}^{2K} \sum_{j=a_{\nu}}^{b_{\nu}}\alpha_j^2.
\]
Observe that in this case $d_{\nu}$ can be less than $20\log n$. We also know that since $\nu \notin \mathcal{I}$, we can bound $d_{\nu}$ by $\sum_{\nu'<\nu}d_{\nu'}$. In our analysis, we consider the value of $\sum_{\nu'<\nu}d_{\nu}$ and further split it into two parts based on whether $\sum_{\nu'<\nu}d_{\nu}$ is large or small.
\begin{itemize}
    \item $\sum_{\nu'<\nu} d_{\nu'} \geq 20 \log n$: Our strategy would be to first bound $\|\Pi_{\nu}\zz\|^2$ by $\sum_{\nu'<\nu}d_{\nu}$, and then bound $\sum_{\nu'<\nu}d_{\nu}$ by $\|\zz\|^2$. Note that since $\alpha_i$'s are gaussian random variables, $\sum_{j=a_{\nu}}^{b_{\nu}}\alpha_j^2$ follows a $\chi^2_k$ distribution with $k = d_{\nu}$. Therefore, $\sum_{j=a_{\nu}}^{b_{\nu}}\alpha_j^2 \leq d_{\nu} \log n$ with probability at least $1-1/n^2$. Further since $d_{\nu}\leq \frac{\epsilon}{600 \log^3 \frac{n}{\epsilon}}\sum_{\nu'<\nu} d_{\nu'}$, we get with probability at least $1-2/n^2$,
    \[
    \|\Pi_{\nu}\zz\|^2\leq  \lambda_{a_{\nu}}^{2K} \cdot \log n\cdot d_{\nu} \leq \lambda_{a_{\nu}}^{2K}\frac{\epsilon}{600\log \frac{n}{\epsilon}}\sum_{\nu'<\nu} d_{\nu'}.
    \]
    Now, since $\sum_{\nu'<\nu} d_{\nu'} \geq 20 \log n$ we use Lemma~\ref{lem:NormG} again with $\delta = 1/3$, on a vector with coordinates $\alpha_i$'s for $i = 1,\cdots,b_{\nu-1}$ to get with probability at least $1-1/n^2$, 
    \[
    \sum_{\nu'<\nu} d_{\nu'} \leq 9\sum_{i=1}^{b_{\nu-1}}\alpha_i^2=9\sum_{i=1}^{b_{\nu-1}}\frac{z_i^2}{\lambda_i^{2K}} \leq 9\frac{\|\zz\|^2}{\lambda_{b_{\nu-1}}^{2K}}.
    \]
    Plugging this back, we get with probability at least $1-3/n^2$,
    \[
    \|\Pi_{\nu}\zz\|^2\leq \frac{\epsilon}{60\log \frac{n}{\epsilon}} \left(\frac{\lambda_{a_{\nu}}}{\lambda_{b_{\nu-1}}}\right)^{2K}\|\zz\|^2 \leq\frac{\epsilon}{60\log \frac{n}{\epsilon}}\|\zz\|^2.
\]
Last inequality follows from $\lambda_{a_{\nu}} \leq \lambda_{b_{\nu-1}}$.
\item $\sum_{\nu'<\nu} d_{\nu'} < 20 \log n$: Since $\nu\notin \mathcal{I}$, we know that 
\[
d_{\nu}\leq \frac{\epsilon}{600 \log^3\frac{n}{\epsilon}}\sum_{\nu'<\nu}d_{\nu'}.
\]
Since $\sum_{\nu'<\nu} d_{\nu'} < 20 \log n$, we must then have that,
\[
d_{\nu}\leq \frac{\epsilon}{30 \log^2\frac{n}{\epsilon}} <1.
\]
Since the dimension $d_{\nu}$ must be an integer, it must be the case that $d_{\nu} = 0$, and therefore, $\|\Pi_{\nu}\zz\| = 0.$
\end{itemize}
\end{proof}

\paragraph{Proof of \Cref{cl:progress}.}
 We are now ready to finally prove \Cref{cl:progress}.

\begin{proof}

We want to show that if $\lambda_{\max}(\AA)\geq 1- \epsilon$ and $\dim{T_{\nu}-\tilde{T}}$ is small for all $\nu\in \mathcal{I}$ as stated in \Cref{cl:progress}, then $\ww^{\top}\VV\ww\leq 35 \epsilon$ for all $\ww\in \WW$ with high probability. Recall that
\[
\ww^{\top}\VV\ww = \ww^{\top}\VV_{\tilde{T}}\ww+\sum_{\nu = 0}^{15\log\frac{n}{\epsilon}-1}\ww^{\top}\VV_{T_{\nu} -\tilde{T}}\ww+ \ww^{\top}\VV_{\overline{T}}\ww.
\]
Let us upper bound each term in the sum below.
\begin{enumerate}
    
    \item \textbf{$\ww^\top\VV_{\tilde{T}}\ww$:} 
    We have 
    \[
        \ww^\top \VV_{\Ttil} \ww = (\Pi_{\tilde{T}}\ww)^{\top}\VV\Pi_{\tilde{T}}\ww =
        (\Pi_{\tilde{T}}\ww)^{\top}\AA\Pi_{\tilde{T}}\ww - (\Pi_{\tilde{T}}\ww)^{\top}\AAtil\Pi_{\tilde{T}}\ww. 
    \]
    From the definition of $\tilde{T}$ (see \Cref{def:SpaceA}), we know that $(\Pi_{\tilde{T}}\ww)^{\top}\AAtil\Pi_{\tilde{T}}\ww \geq \left(1-10\epsilon\right)\lambda_0\|\Pi_{\tilde{T}}\ww\|^2 $ because $\Pi_{\tilde{T}}\ww \in \Ttil = \Span(3\eps,\AAtil).$
    We also know that $(\Pi_{\tilde{T}}\ww)^{\top}\AA\Pi_{\tilde{T}}\ww \leq \lambda_0 \|\Pi_{\tilde{T}}\ww\|^2$. So 
    \begin{equation}\label{eq:V2}
    \ww^{\top}\VV_{\tilde{T}}\ww \leq 10\epsilon\lambda_0 \|\Pi_{\tilde{T}}\ww\|^2\leq 10\epsilon(1+\epsilon)\|\ww\|^2 =10\epsilon (1+\epsilon).
  \end{equation}
\item \textbf{$\ww^\top\VV_{\overline{T}}\ww$:} Since $\ww\in \WW$, from Lemma~\ref{lem:boundLowEV}, with probability at least $1-\frac{3}{n^2}$, $\|\Pi_{\overline{T}}\ww\|\leq \epsilon/2$. Now, 
    \begin{equation}\label{eq:V1}
        \ww^{\top}\VV_{\overline{T}}\ww = (\Pi_{\overline{T}}\ww)^{\top}\VV(\Pi_{\overline{T}}\ww)\leq \|\Pi_{\overline{T}}\ww\|^2\|\VV\| \leq \frac{\epsilon^2}{4}\lambda_0 \leq \frac{\epsilon^2}{2},
    \end{equation}
    where we used that $\|\VV\|\leq \lambda_0$ since $\AAtil = \AA-\VV \succeq 0$ and $\lambda_0\leq 1+\epsilon/\log n$.
    \item \textbf{$\ww^\top\VV_{T_{\nu} -\tilde{T}}\ww$ when $\nu \in \mathcal{I}$:} Note that the dimension of the space $T_{\nu} -\tilde{T}$ is small. We now have,
    \[
    \ww^{\top}\VV_{T_{\nu} -\tilde{T}}\ww = \ww^{\top}\Pi_{\nu}\VV\Pi_{\nu}\ww.
    \]
    We will now consider the large $d_{\nu}$ and small $d_{\nu}$ cases separately.
    \paragraph{Large dimension: $d_{\nu}\geq \frac{3000\log n\log\frac{n}{\epsilon}}{\epsilon}$.}
    In this case, $\dim{T_{\nu} -\tilde{T}} \leq \frac{\epsilon}{300\log\frac{n}{\epsilon}} d_{\nu}$. We can now apply Lemma~\ref{lem:GaussianProjD}, which gives with probability at least $1-\frac{2}{n^2}$,
    \[
\|\Pi_{\nu}\ww\|^2 \leq \frac{\epsilon}{\log\frac{n}{\epsilon}}.
    \]

    Now, using this value,
    \begin{equation}\label{eq:V3}
    \ww^{\top}\VV_{T_{\nu} -\tilde{T}}\ww \leq \|\Pi_{\nu}\ww\|^2 \|\VV\| \leq \frac{\epsilon}{\log\frac{n}{\epsilon}} \|\VV\|\leq  \frac{\epsilon}{\log\frac{n}{\epsilon}} \lambda_0 \leq  \frac{\epsilon(1+\epsilon)}{\log\frac{n}{\epsilon}}.
    \end{equation}
    As in case 2, we again used the fact that $\|\VV\| \le \lambda_0 \le (1+\eps)$.
    \paragraph{Small dimension: $d_{\nu}<\frac{3000\log n\log\frac{n}{\epsilon}}{\epsilon}$.}
    In this case, $\dim{T_{\nu} -\tilde{T}}<1.$ Therefore, the space $T_{\nu} -\tilde{T}$ is empty and as a result, 
    \begin{equation}\label{eq:V4}
    \ww^{\top}\VV_{T_{\nu} -\tilde{T}}\ww  = 0.
    \end{equation}
\item \textbf{$\ww^\top\VV_{T_{\nu} -\tilde{T}}\ww$ when $\nu \notin \mathcal{I}$:}
From Lemma~\ref{lem:NotImp}, $\|\Pi_{\nu}\ww\|^2 \leq \frac{\epsilon}{\log\frac{n}{\epsilon}}$ with probability at least $1-1/n^2$. Since $\|\VV\|\leq \lambda_0\leq 1+\epsilon$, we get,
\begin{equation}\label{eq:V5}
\ww^{\top}\VV_{T_{\nu} -\tilde{T}}\ww \leq \|\Pi_{\nu}\ww\|^2 \|\VV\| \le \frac{\epsilon(1+\epsilon)}{\log\frac{n}{\epsilon}}.
\end{equation}
\end{enumerate}
We now combine all the cases. We have for both large and small $d_{\nu}$ from Equations~\eqref{eq:V2},\eqref{eq:V1},\eqref{eq:V3},\eqref{eq:V4} and \eqref{eq:V5}, with probability at least $1-\frac{40\log\frac{n}{\epsilon}}{n^2}$ for any $\ww\in \WW$,
\[
\ww^{\top}\VV\ww \leq \ww^{\top}\VV_{\overline{T}}\ww + \ww^{\top}\VV_{\tilde{T}}\ww + \sum_{\nu = 0}^{15\log\frac{n}{\epsilon}-1}\ww^{\top}\VV_{T_{\nu} -\tilde{T}}\ww \leq \frac{\epsilon^2}{2} + 10\epsilon(1+\epsilon) + 10\epsilon(1+\epsilon) \leq 35\epsilon.
\]

\end{proof}

\section{Conditional Lower Bounds for an Adaptive Adversary}\label{sec:Adap}

In this section, we will prove a conditional hardness result for algorithms against adaptive adversaries. In particular, we will prove \Cref{thm:lower}.
Consider \Cref{alg:red} for solving \Cref{prob:factor}. 
The only step in \Cref{alg:red} whose implementation is not specified is Line~\ref{line:approx eig}. We will implement this step using an algorithm for \Cref{prob:dyn}.

\begin{algorithm}
\caption{Algorithm for Checking PSDness}\label{alg:red}
 \begin{algorithmic}[1]
\Procedure{CheckPSD}{$\delta,\kappa,\AA$}
\State $\epsilon \leftarrow \min\{1- n^{-o(1)}, (1-\delta)/(1+\delta)\}$
\State $T \leftarrow \frac{2n}{\epsilon(1-\epsilon)^2}\log \frac{\kappa}{\delta}$
\State $\AA_0 \leftarrow \AA$
\State $\mu_0 = 0, \ww_0 = 0$
\For{$t  = 1,2,\cdots,T$}
\State $\AA_t = \AA_{t-1} - \frac{\mu_{t-1}}{10}\ww_{t-1}\ww_{t-1}^{\top}$\label{line:red update}
\State $(\mu_t,\ww_t)\leftarrow \epsilon$-approximate maximum eigenvalue and eigenvector of $\AA_t$ (Equations~\eqref{eq:epsEigvalue},\eqref{eq:epsEigvec})\label{line:approx eig}
\If{$\mu_t<0$}
\State \Return {\sc False}:$\AA$ is not PSD
\EndIf
\EndFor
\State {$\sigma^2 \gets $PowerMethod($\epsilon,\AA_T^{\top}\AA_T$)}\label{line:last check}
\If{$0\leq \sigma \leq \frac{(1+\epsilon)\mu_1\delta}{\kappa}$}\label{line:red lastcheck}
\State \Return $\XX=\frac{1}{\sqrt{10}}\begin{bmatrix}\sqrt{\mu_1}\ww_1 & \sqrt{\mu_2}\ww_2 & \cdots & \sqrt{\mu_T}\ww_T\end{bmatrix}$
\Else
\State \Return {\sc False}:$\AA$ is not PSD
\EndIf
\EndProcedure 
 \end{algorithmic}
\end{algorithm}

\paragraph{High-level idea.}
 Overall for our hardness result, we use the idea that an adaptive adversary can use the maximum eigenvectors returned to perform an update. This can happen $n$ times and in the process, we would recover the entire eigen-decomposition of the matrix, which is hard. Now consider Algorithm~\ref{alg:red}. We claim that \Cref{alg:red} solves \Cref{prob:factor}. 
At the first glance, this claim looks suspicious because the input matrix for \Cref{prob:factor} might not be PSD, but the dynamic algorithm for \Cref{prob:dyn} at Line~\ref{line:approx eig} has any guarantees only when the matrices remain PSD. 
However, the reduction does work by crucially exploiting \Cref{prop:assume}. The high-level idea is as follows. 
\begin{itemize}
    \item If the input matrix $\AA$ is initially PSD, then we can show that $\AA_t$ remains PSD for all $t$  by exploiting \Cref{prop:assume}, (see \Cref{lem:orthoUpdate}). So, the approximation guarantee of the algorithm at Line~\ref{line:approx eig} is valid at all steps.
    From this guarantee, $\|\AA_T\|$ must be tiny since we keep decreasing the approximately maximum eigenvalues (see \Cref{lem:RedAns}). At the end, the reduction will return $\XX$.
    \item If the input matrix $\AA$ is initially \emph{not} PSD, there must exist a direction $\vv$ such that $\vv^{\top}\AA\vv <0$. Since in the reduction, we update $\AA_T = \AA - \WW$ for some $\WW \succeq 0$, we must have that $\vv^{\top}\AA_T\vv <\vv^{\top}\AA\vv$. That is, this negative direction remains negative or gets even more negative. It does not matter at all what guarantees the algorithm at Line~\ref{line:approx eig} has. We still have that $\|\AA_T\|$ cannot be tiny. We can distinguish whether $\|\AA_T\|$ is tiny or not using the static power method at Line~\ref{line:last check}, and, hence, we will return {\sc False} in this case (see \Cref{lem:RedAns}).
\end{itemize}


\paragraph{Analysis.}
We prove the guarantees of the output of Algorithm~\ref{alg:red} when $\ww_t$'s satisfy \Cref{def:super} for all $t$.

\begin{lemma}\label{lem:orthoUpdate}
In Algorithm~\ref{alg:red}, let $\ww_t$'s, $t=1,\cdots, T$ be generated such that they additionally satisfy \Cref{def:super}. If $\AA_0\succeq 0$, then $\AA_t \succeq 0$ for all $t$.
\end{lemma}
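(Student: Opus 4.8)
The plan is to prove the statement by induction on $t$. The base case $t=0$ is the hypothesis $\AA_0 \succeq 0$. For the inductive step, assume $\AA_{t-1} \succeq 0$; since the guarantee of the $\epsilon$-approximate eigenvalue/eigenvector routine at Line~\ref{line:approx eig} is valid whenever its input is PSD, we get that $(\mu_t, \ww_t)$ satisfies \eqref{eq:epsEigvalue}, \eqref{eq:epsEigvec}, and \Cref{def:super} relative to $\AA_t = \AA_{t-1} - \frac{\mu_{t-1}}{10}\ww_{t-1}\ww_{t-1}^\top$. Wait — I must be careful about the order: $\AA_t$ is formed first at Line~\ref{line:red update} from $\AA_{t-1}$ and the \emph{previous} output $(\mu_{t-1},\ww_{t-1})$, and then $(\mu_t,\ww_t)$ is computed from $\AA_t$. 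So the right statement to push through the induction is: ``$\AA_{t-1}\succeq 0$, and $(\mu_{t-1},\ww_{t-1})$ satisfies \eqref{eq:epsEigvalue},\eqref{eq:epsEigvec},\Cref{def:super} with respect to $\AA_{t-1}$.'' Given this, I want to show $\AA_t = \AA_{t-1} - \frac{\mu_{t-1}}{10}\ww_{t-1}\ww_{t-1}^\top \succeq 0$, and then invoke the approximation guarantee once more on $\AA_t$ to re-establish the hypothesis at level $t$.

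The heart of the argument is thus: if $\BB \succeq 0$ and $(\mu,\ww)$ is an $\epsilon$-approximate maximum eigenpair of $\BB$ additionally satisfying \Cref{def:super} (with $\BB$ in place of $\AA_t$), then $\BB - \frac{\mu}{10}\ww\ww^\top \succeq 0$. To see this, expand $\ww$ in the eigenbasis $\uu_i(\BB)$ as $\ww = \sum_i c_i \uu_i$ with $\sum_i c_i^2 = 1$, where $c_i = \ww^\top \uu_i(\BB)$. For an arbitrary unit vector $\xx = \sum_i x_i \uu_i$, we have
\[
\xx^\top\left(\BB - \tfrac{\mu}{10}\ww\ww^\top\right)\xx = \sum_i \lambda_i(\BB) x_i^2 - \tfrac{\mu}{10}\left(\sum_i x_i c_i\right)^2.
\]
Split the index set into $H = \{i : \lambda_i(\BB) > \lambda_1(\BB)/2\}$ and $L = \{i : \lambda_i(\BB) \le \lambda_1(\BB)/2\}$. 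On $H$, since $\mu \le \max_i \lambda_i(\BB) = \lambda_1(\BB)$ and every $\lambda_i(\BB) > \lambda_1(\BB)/2 \ge \mu/2 \ge \mu/10$, the ``diagonal'' contribution $\sum_{i\in H}\lambda_i(\BB)x_i^2$ already dominates $\tfrac{\mu}{10}\sum_{i\in H} x_i^2 \ge \tfrac{\mu}{10}(\sum_{i\in H}x_i c_i)^2$ after a Cauchy–Schwarz step using $\sum_{i\in H} c_i^2 \le 1$. On $L$, \Cref{def:super} gives $c_i^2 \le \tfrac{1}{n^2}\cdot\tfrac{\lambda_i(\BB)}{\lambda_1(\BB)}$, so by Cauchy–Schwarz $\left(\sum_{i\in L}x_i c_i\right)^2 \le \left(\sum_{i\in L}\tfrac{x_i^2 \lambda_i(\BB)}{c_i^{-1}\cdots}\right)$ — more cleanly, $\left(\sum_{i\in L}x_i c_i\right)^2 \le \left(\sum_{i\in L} x_i^2 \lambda_i(\BB)\right)\left(\sum_{i\in L} \tfrac{c_i^2}{\lambda_i(\BB)}\right) \le \left(\sum_{i\in L} x_i^2 \lambda_i(\BB)\right)\cdot \tfrac{n}{n^2 \lambda_1(\BB)}$, which is at most $\tfrac{10}{\mu}\sum_{i\in L}x_i^2\lambda_i(\BB)$ once $n$ is large enough (using $\mu \le \lambda_1(\BB)$ and $10/n \le 1$). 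Combining the two regimes — handling the cross terms between $H$ and $L$ by expanding the square $(\sum_i x_i c_i)^2 \le 2(\sum_{i\in H}x_ic_i)^2 + 2(\sum_{i\in L}x_ic_i)^2$ and adjusting the constant $10$ accordingly — yields $\xx^\top(\BB - \tfrac{\mu}{10}\ww\ww^\top)\xx \ge 0$ for all unit $\xx$, i.e.\ $\BB - \tfrac{\mu}{10}\ww\ww^\top \succeq 0$.

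The main obstacle is getting the constants to line up cleanly: the factor $1/10$ in the update and the $1/n^2$ in \Cref{def:super} need to be slack enough to absorb both the $H$–$L$ cross terms and the approximation error in $\mu$ (we only know $\mu \ge (1-\epsilon)\lambda_1(\BB)$, but for PSD-ness we really only use $\mu \le \lambda_1(\BB)$, which follows from \eqref{eq:epsEigvalue}). I expect that once $\epsilon < 1$ and $n$ is at least a small absolute constant, the bound $\tfrac{\mu}{10}\le \tfrac{\lambda_1(\BB)}{10} < \min_{i\in H}\lambda_i(\BB)$ together with the polynomially-small projections on $L$ makes every term comfortably nonnegative, so the inequality holds with room to spare. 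I would then close the induction: having shown $\AA_t \succeq 0$, apply the approximation guarantee at Line~\ref{line:approx eig} to $\AA_t$ to obtain that $(\mu_t,\ww_t)$ satisfies \eqref{eq:epsEigvalue},\eqref{eq:epsEigvec},\Cref{def:super} w.r.t.\ $\AA_t$, completing the inductive hypothesis at level $t$ and hence the lemma for all $t \le T$.
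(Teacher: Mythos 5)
Your proposal is correct and follows essentially the same route as the paper's proof: induction on $t$, splitting the eigenvalues of the current matrix at $\lambda_1/2$, expanding $\bigl(\sum_i x_i c_i\bigr)^2 \le 2\bigl(\sum_{H}x_ic_i\bigr)^2 + 2\bigl(\sum_{L}x_ic_i\bigr)^2$, bounding the high-eigenvalue part by Cauchy--Schwarz with $\|\ww\|=1$, the low-eigenvalue part via \Cref{def:super}, and using only $\mu\le\lambda_1$ from \eqref{eq:epsEigvalue}. The only cosmetic difference is your weighted Cauchy--Schwarz with weights $\lambda_i$ on $L$ (versus the paper's unweighted version followed by the Property), which requires excluding indices with $\lambda_i=0$ --- harmless, since \Cref{def:super} forces $c_i=0$ there.
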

 We would like to point out that our parameter $\epsilon$ is quite large. This just implies that our reduction can work even if we find crude approximations to the maximum eigenvector as long as this is along the directions with large eigenvalue, since $\ww$ also has to satisfy \Cref{def:super}.
\begin{proof}
We will prove this by induction. This is true in the beginning because $\AA_0 \succeq 0$ by assumption. Let us assume $\AA_{t}\succeq 0$. We now look at $\AA_{t+1} = \AA_{t}- \frac{\mu_{t}}{10} \ww_{t}\ww_{t}^{\top}$.
 For simplicity, we use $\lambda_i,\uu_i$ to denote the  $i^{th}$ eigenvalue and eigenvector of $\AA_{t}$, and $\mu = \mu_t$, $\ww = \ww_t$. By \Cref{eq:epsEigvalue}, $\lambda_1 \geq \mu \geq (1-\epsilon)\lambda_1$. Now, for any $\yy$
\begin{equation}\label{eq:psdT}
\yy^{\top}\AA_{t+1}\yy = \yy^{\top}\AA_{t}\yy - \frac{\mu}{10} (\ww^{\top}\yy)^2 = \sum_{i=1}^n \lambda_i (\uu_i^{\top}\yy)^2 -  \frac{\mu}{10}\left(\sum_{i=1}^n (\ww^{\top}\uu_i)(\yy^{\top}\uu_i)\right)^2 .
\end{equation}
We will upper bound the second term. We know that $\ww$ satisfies \Cref{def:super}. Let $d$ be such that $\lambda_i \leq \lambda_1/2$ for all $i \geq d$.
\begin{align*}
\left(\sum_{i=1}^n (\ww^{\top}\uu_i)(\yy^{\top}\uu_i)\right)^2 & \leq 2 \left(\sum_{i=1}^d(\ww^{\top}\uu_i)(\yy^{\top}\uu_i)\right)^2 + 2 \left(\sum_{i=d}^n(\ww^{\top}\uu_i)(\yy^{\top}\uu_i)\right)^2\\
& \substack{(a)\\ \leq} 2 \sum_{i=1}^d (\ww^{\top}\uu_i)^2\sum_{i=1}^d (\yy^{\top}\uu_i)^2 + 2(n-d) \sum_{i=d}^n  (\ww^{\top}\uu_i)^2(\yy^{\top}\uu_i)^2\\
& \substack{(b)\\ \leq} 2 \sum_{i=1}^d (\yy^{\top}\uu_i)^2 + 2(n-d)\sum_{i=d}^n \frac{\lambda_i(\yy^{\top}\uu_i)^2}{\lambda_1 n^2}\\
& \substack{(c)\\ \leq}  \frac{4}{\lambda_1}\sum_{i=1}^d \lambda_i(\yy^{\top}\uu_i)^2 + \sum_{i=d}^n \frac{\lambda_i(\yy^{\top}\uu_i)^2}{\lambda_1 n}.
\end{align*}
In the above, $(a)$ follows by applying Cauchy Schwarz to both terms, $(b)$ follows by using $\|\ww\| = 1$ in the first term and applying Property~\ref{def:super} on the second term, and $(c)$ follows by using the fact that for all $i \leq d$, $\lambda_i \geq \lambda_1/2$ in the first term. 

Using this in \eqref{eq:psdT},
\begin{align*}
\yy^{\top}\AA_{t+1}\yy & \geq \sum_{i=1}^n \lambda_i (\uu_i^{\top}\yy)^2 - \left(\frac{4\mu}{10\lambda_1}\sum_{i=1}^d \lambda_i (\uu_i^{\top}\yy)^2 + \frac{\mu}{10}\sum_{i=d}^n \frac{\lambda_i(\yy^{\top}\uu_i)^2}{\lambda_1 n}\right)\\
& = \sum_{i=1}^d \left(1 - \frac{2\mu}{5\lambda_1}\right)\lambda_i (\uu_i^{\top}\yy)^2  + \sum_{i=d}^n \left(1 - \frac{\mu}{10n\lambda_1}\right)\lambda_i (\uu_i^{\top}\yy)^2  \\
& \geq 0.
\end{align*}
We have thus shown that $\AA_{t+1}$ is psd, as required.
\end{proof}

\begin{lemma}\label{lem:RedAns}
In Algorithm~\ref{alg:red}, let $\ww_t$'s, $t=1,\cdots, T$ be generated such that they additionally satisfy \Cref{def:super}. 
\begin{itemize}
    \item If $\AA\succeq 0$, then \Cref{alg:red} returns $\XX$ such that $\|\AA-\XX\XX^{\top}\|\leq \delta \min_{\|\xx\|=1}\|\AA\xx\|$. 
    \item If $\AA$ is not psd, then \Cref{alg:red} returns {\sc False}.
\end{itemize}
\end{lemma}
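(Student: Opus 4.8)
The plan is as follows. Observe first that the matrix $\XX\XX^{\top}$ returned on the last line equals $\tfrac{1}{10}\sum_{t=1}^{T}\mu_{t}\ww_{t}\ww_{t}^{\top}$, so that $\AA-\XX\XX^{\top}$ is precisely the matrix $\AA_{T+1}$ that one further application of the update on Line~\ref{line:red update} would produce; in particular $\AA_{t}=\AA-\tfrac{1}{10}\sum_{s=1}^{t-1}\mu_{s}\ww_{s}\ww_{s}^{\top}$ throughout the run. I would prove the two bullets separately.

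For the first bullet ($\AA\succeq0$), the starting point is \Cref{lem:orthoUpdate}, whose induction in fact shows $\AA_{t}\succeq0$ for every $t$, up to and including $\AA_{T+1}$. Hence each $\mu_{t}$ is an $\epsilon$-approximate maximum eigenvalue of a PSD matrix, so $\mu_{t}\ge(1-\epsilon)\lambda_{\max}(\AA_{t})\ge0$ by \eqref{eq:epsEigvalue}, and the algorithm never returns \textsc{False} inside the loop. The core of this case is a trace-potential argument: since $\|\ww_{t}\|=1$ we have $\tr{\AA_{t+1}}=\tr{\AA_{t}}-\mu_{t}/10$, and $\mu_{t}\ge(1-\epsilon)\lambda_{\max}(\AA_{t})\ge\tfrac{1-\epsilon}{n}\tr{\AA_{t}}$ because $\AA_{t}\succeq0$; hence $\tr{\AA_{t+1}}\le(1-\tfrac{1-\epsilon}{10n})\tr{\AA_{t}}$. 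Iterating from $\AA_{1}=\AA$, using $\lambda_{\max}(\AA_{T})\le\tr{\AA_{T}}$ and $\tr{\AA}\le n\lambda_{\max}(\AA)$, and plugging in $T=\tfrac{2n}{\epsilon(1-\epsilon)^{2}}\log\tfrac{\kappa}{\delta}$ together with $\lambda_{\min}(\AA)\ge\lambda_{\max}(\AA)/\kappa$ (the condition number is at most $\kappa$), I would conclude $\lambda_{\max}(\AA_{T})\le(1-\epsilon^{2})\,\delta\,\lambda_{\min}(\AA)$. Then I check the final branch: $\sigma^{2}=\textsc{PowerMethod}(\epsilon,\AA_{T}^{\top}\AA_{T})$ is a Rayleigh quotient of $\AA_{T}^{2}\succeq0$, so $0\le\sigma\le\|\AA_{T}\|=\lambda_{\max}(\AA_{T})$, and since $\mu_{1}\ge(1-\epsilon)\lambda_{\max}(\AA)$ this is at most $\tfrac{(1-\epsilon^{2})\delta}{\kappa}\lambda_{\max}(\AA)\le\tfrac{(1+\epsilon)\mu_{1}\delta}{\kappa}$, so the test on Line~\ref{line:red lastcheck} passes and $\XX$ is returned. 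Finally $0\preceq\AA_{T+1}\preceq\AA_{T}$ gives $\|\AA-\XX\XX^{\top}\|=\|\AA_{T+1}\|\le\lambda_{\max}(\AA_{T})\le\delta\,\lambda_{\min}(\AA)=\delta\min_{\|\xx\|=1}\|\AA\xx\|$, as required.

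For the second bullet ($\AA$ not PSD), if any $\mu_{t}<0$ the algorithm returns \textsc{False} and we are done, so I may assume all $\mu_{t}\ge0$; then $\MM:=\tfrac{1}{10}\sum_{s=1}^{T-1}\mu_{s}\ww_{s}\ww_{s}^{\top}\succeq0$ and $\AA_{T}=\AA-\MM$. Taking a unit eigenvector $\vv$ of $\AA$ with eigenvalue $\lambda_{\min}(\AA)<0$, we get $\vv^{\top}\AA_{T}\vv=\lambda_{\min}(\AA)-\vv^{\top}\MM\vv\le\lambda_{\min}(\AA)$, so $\|\AA_{T}\|\ge|\lambda_{\min}(\AA_{T})|\ge-\lambda_{\min}(\AA)$; and since the condition number of $\AA$ is at most $\kappa$, $-\lambda_{\min}(\AA)=|\lambda_{\min}(\AA)|\ge\min_{i}|\lambda_{i}(\AA)|\ge\lambda_{\max}(\AA)/\kappa$ (here $\lambda_{\max}(\AA)\ge0$, for otherwise $\mu_{1}<0$). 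The static power method on Line~\ref{line:last check} certifies, with high probability, $\sigma\ge\sqrt{1-\epsilon/2}\,\|\AA_{T}\|$ by \Cref{thm:StaticPowerMain}. Using $\mu_{1}\le\lambda_{\max}(\AA)$ and $\sqrt{1-\epsilon/2}\ge\sqrt{1-\epsilon}>1-\epsilon\ge(1+\epsilon)\delta$ — where the last inequality is exactly the hypothesis $\epsilon\le\tfrac{1-\delta}{1+\delta}$ — I would get $\sigma>\tfrac{(1+\epsilon)\delta\lambda_{\max}(\AA)}{\kappa}\ge\tfrac{(1+\epsilon)\mu_{1}\delta}{\kappa}$, so the test on Line~\ref{line:red lastcheck} fails and \textsc{False} is returned.

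I expect the main obstacle to be the quantitative part of the first bullet: turning the one-step trace contraction into a clean bound on $\lambda_{\max}(\AA_{T})$ and verifying that the prescribed $T$ is large enough to push it below $(1-\epsilon^{2})\delta\lambda_{\min}(\AA)$ — note we need to beat the threshold $(1+\epsilon)\mu_{1}\delta/\kappa$, not merely $\delta\lambda_{\min}(\AA)$, so a factor of slack is required, which is what forces the extra $(1-\epsilon)^{-1}$ and $\log(\kappa/\delta)$ dependence in $T$. The non-PSD bullet is comparatively robust: it uses only that the updates subtract a PSD matrix, so a negative direction of $\AA$ stays (at least as) negative, a fact that is completely independent of whatever guarantee the dynamic subroutine on Line~\ref{line:approx eig} does or does not provide on non-PSD inputs.
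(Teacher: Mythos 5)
Your second bullet and your use of \Cref{lem:orthoUpdate} follow the paper's argument, and you are in fact more careful than the paper in two places (identifying $\AA-\XX\XX^{\top}$ with $\AA_{T+1}$ rather than $\AA_T$, and verifying that the test at Line~\ref{line:red lastcheck} actually passes in the PSD case, which needs the $(1-\epsilon^2)$ slack you build in). The gap is in the quantitative core of the first bullet. The global trace contraction $\tr{\AA_{t+1}}\le\bigl(1-\tfrac{1-\epsilon}{10n}\bigr)\tr{\AA_t}$ only yields, after $T=\tfrac{2n}{\epsilon(1-\epsilon)^{2}}\log\tfrac{\kappa}{\delta}$ steps, $\lambda_{\max}(\AA_T)\le\tr{\AA_T}\le n\,\lambda_{\max}(\AA)\,(\delta/\kappa)^{c}$ with $c=\Theta\bigl(\tfrac{1}{\epsilon(1-\epsilon)}\bigr)$. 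To reach your target $(1-\epsilon^{2})\,\delta\,\lambda_{\max}(\AA)/\kappa$ you would need $n\lesssim(\kappa/\delta)^{c-1}$, and this fails outright when $\kappa/\delta$ and $\epsilon$ are constants: take $\kappa=1$, $\delta=1/2$ (so $\epsilon=1/3$), where the right-hand side is a fixed constant while $n$ is arbitrary. The failure is structural, not a matter of constants: the per-step trace decrement is only tied to $\tr{\AA_t}/n$, so a trace-only potential can never recover the factor $n$ lost in $\tr{\AA}\le n\lambda_{\max}(\AA)$ within a budget of $T=O\bigl(n\log\tfrac{\kappa}{\delta}\bigr)$ steps --- note $T$ contains no $\log n$ term. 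Hence your claimed intermediate bound $\lambda_{\max}(\AA_T)\le(1-\epsilon^{2})\delta\lambda_{\min}(\AA)$ does not follow from the stated argument.

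The paper avoids this with a threshold/phase analysis that tracks $\lambda_{\max}(\AA_t)$ itself rather than only the trace: while $\lambda_{\max}(\AA_t)$ is still above $(1-\epsilon)\theta$ for the current threshold $\theta$, every step has $\mu_t\ge(1-\epsilon)^{2}\theta$ and so removes at least a fixed amount $\Omega\bigl((1-\epsilon)^{2}\theta\bigr)$ of trace, whereas the trace at the start of the phase is at most $n\theta$; therefore within $O\bigl(\tfrac{n}{(1-\epsilon)^{2}}\bigr)$ steps the maximum eigenvalue drops below $(1-\epsilon)\theta$ and the threshold resets. Iterating over $\Theta\bigl(\tfrac{1}{\epsilon}\log\tfrac{\kappa}{\delta}\bigr)$ phases gives $\lambda_{\max}(\AA_T)\le\lambda_{\max}(\AA)\,(1-\epsilon)^{\frac{1}{\epsilon}\log\frac{\kappa}{\delta}}\le\delta\lambda_{\max}(\AA)/\kappa$ with no stray factor of $n$. (Constant-factor bookkeeping aside --- the paper drops the $1/10$ in the trace decrement, which only rescales the phase length --- this phase argument is what you should substitute for your single-shot trace potential; the rest of your proof, including the final-check slack and the non-PSD case, then goes through.)
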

\begin{proof}
We prove the first part first. Let $\AA_0\succeq 0$ with maximum eigenvalue $\lambda_1$, and $t$ be the iterate such that the maximum eigenvalue of $\AA_{t-1}$ is at least $(1-\epsilon)\lambda_1$ and the maximum eigenvalue of $\AA_t$ is at most $(1-\epsilon)\lambda_1$. We also know that $\AA_t\succeq 0$ for all $t$ from Lemma~\ref{lem:orthoUpdate}.

In this case, $\mu_i \geq (1-\epsilon)^2\lambda_1$ for all $i\leq t-1$ (since $\mu_i$ is an $\epsilon$-max eigenvalue and the max eigenvalue is at least $(1-\epsilon)\lambda_1$) and we must have that $\Tr[\AA_t]\leq \Tr[\AA] - (1-\epsilon)^2\lambda_1(t-1)$. Also,
\[
\lambda_{\max}(\AA_t) \leq \Tr[\AA_t]\leq \Tr[\AA] - (1-\epsilon)^2\lambda_1(t-1) \leq  n\lambda_1 - (1-\epsilon)^2\lambda_1(t-1).
\]
Thus we require at most $t = 2n/(1-\epsilon)^2$ iterates after which $\AA_t$ will have a maximum eigenvalue at most $(1-\epsilon)\lambda_1$. 

Our algorithm runs for $T = \frac{2n}{\epsilon(1-\epsilon)^2}\log \frac{\kappa}{\delta}$ and every $2n/(1-\epsilon)^2$ iterations we decrease the maximum eigenvalue by a factor of at least $1-\epsilon$. Finally, the maximum eigenvalue of $\AA_T$ is therefore at most,
\[
\lambda_1 \cdot (1-\epsilon)^{\frac{T(1-\epsilon)^2}{2n}}=\lambda_1 \cdot (1-\epsilon)^{\frac{1}{\epsilon}\log \frac{\kappa}{\delta}} \leq \lambda_1 e^{-\log \frac{\kappa}{\delta}}\leq \lambda_1 \frac{\delta}{\kappa} = \delta \cdot \min_{\|\xx\|=1}\|\AA\xx\|, 
\] 
as required. We have proved that, if $\AA\succeq 0$, then the maximum eigenvalue of $\AA-\XX\XX^{\top}$ is at most $\lambda_1\delta/\kappa \leq \delta \min_{\|\xx\|=1}\|\AA\xx\|$.%

Next, if $\AA$ is not psd, then we will prove that $\sigma\ge\frac{(1-\epsilon)\mu_{1}}{\kappa}.$ This implies that $\sigma>\frac{(1+\epsilon)\mu_{1}\delta}{\kappa}$ from the value of $\epsilon$.  
and so the algorithm will return $\textsc{False}$ by the condition at Line~\ref{line:red lastcheck}. Since $\AA$ is not psd, there exists a unit vector $\vv$ such that $\vv^{\top}\AA\vv<0$. We can lower bound $\sigma$ as follows.
\[
\sigma\ge(1-\epsilon)\|\AA_{T}\|\ge(1-\epsilon)|\vv^{\top}\AA_{T}\vv|.
\]
On the other hand, we have
\[
|\vv^{\top}\AA_{T}\vv|\ge|\vv^{\top}\AA\vv|\ge\min_{\|\xx\|=1}\xx^{\top}\AA\xx\ge\frac{\mu_{1}}{\kappa}
\]
where the first inequality is because $\vv^{\top}\AA_{T}\vv=\vv^{\top}\AA\vv-\|\XX^{\top}\vv\|^{2}<\vv^{\top}\AA\vv<0$. Combining the two inequalities, we get $\sigma\ge\frac{(1-\epsilon)\mu_{1}}{\kappa}$, which concludes the proof.
 \end{proof}

\paragraph{Proof of Theorem~\ref{thm:lower}.} 
We are now ready to prove our conditional lower bound.
\begin{proof} 
Let $\calM(\epsilon,\AA_0,\vv_1,\cdots,\vv_T)$ denote an algorithm for \Cref{prob:dyn} that maintains an $\epsilon$-max eigenvalue~\eqref{eq:epsEigvalue}, $\mu_t$, and eigenvector~\eqref{eq:epsEigvec}, $\ww_t$, for matrices $\AA_t = \AA_{t-1}-\vv_t\vv_t^{\top}$ such that $\ww_t$'s satisfy \Cref{def:super}. We will show that if the total update time of $\calM$ is $n^{o(1)}\cdot \left(nnz(\AA_0) + \sum_{t=1}^T nnz(\vv_i)\right)$, then there is an $n^{2+o(1)}$-time algorithm for \Cref{prob:factor} which contradicts \Cref{conj:decomp is hard}.

Given an instance $(\delta,\kappa,\AA)$ of \Cref{prob:factor}, 
we will run \Cref{alg:red} where Line~\ref{line:approx eig} is implemented using $\calM$. We will generate the input and the update sequence for $\calM$ as follows.
Set $\AA_0 \gets \AA$. Set $\eps$ and $T$ according to \Cref{alg:red}. For $1\le t\le T$, we set $\vv_t = \frac{1}{\sqrt{10}}\cdot \sqrt{\mu_{t-1}} \ww_{t-1}$ according to Line~\ref{line:red update} of \Cref{alg:red}. We note that this is a valid update sequence for \Cref{prob:dyn} when $\AA\succeq 0$ since from \Cref{lem:orthoUpdate}, 
 if $\AA\succeq 0$ then, $\AA_t = \AA - \sum_{i=0}^{t-1}\frac{\mu_i}{10}\ww_i\ww_i^{\top}\succeq 0$. 

Now, we describe what we return as an answer for \Cref{prob:factor}.  From \Cref{lem:RedAns} if $\AA$ is not PSD, then Algorithm~\ref{alg:red} returns \textsc{False} and reports the matrix is not PSD. Additionally if $\AA\succeq 0$, the algorithm returns $\XX=\frac{1}{\sqrt{10}}\begin{bmatrix}\sqrt{\mu_1}\ww_1 & \sqrt{\mu_2}\ww_2 & \cdots & \sqrt{\mu_T}\ww_T\end{bmatrix}$ as a certificate that $\AA$ is PSD. This completes the reduction from \Cref{prob:factor} to \Cref{prob:dyn}.

The total time required by the reduction is 
\[
O\left(n^{o(1)}\cdot (nnz(\AA) + \sum_{t=1}^T nnz(\ww_t))\right) \leq O\left(n^{o(1)}\cdot (n^2 + T\cdot n)\right) \leq O(n^{2+o(1)}\log\frac{\kappa}{\delta}),
\]
which is at most $n^{2+o(1)}$ when $\frac{\kappa}{\delta} \leq \poly(n)$. 

To conclude, we indeed obtain an $n^{2+o(1)}$-time algorithm for \Cref{prob:factor}. Assuming \Cref{conj:decomp is hard}, the algorithm $\calM$ cannot have $n^{o(1)}\cdot \left(nnz(\AA_0) + \sum_{t=1}^T nnz(\vv_i)\right)$ total update time.
\end{proof}

\section{Conclusion and Open Problems}\label{sec:open}

\paragraph{Upper Bounds.}
We have presented a novel extension of the power method to the dynamic setting. Our algorithm from \Cref{thm:upper} maintains a multiplicative $(1+\epsilon)$-approximate maximum eigenvalue and eigenvector of a positive semi-definite matrix that undergoes decremental updates from an oblivious adversary. The algorithm has polylogarithmic amortized update time per non-zeros in the updates. 

Our algorithm is simple, but our analysis is quite involved. While we believe a tighter analysis that improves our logarithmic factors is possible, it is an interesting open problem to give a simpler analysis for our algorithm. 
Other natural questions are whether we can get similar algorithms in incremental or fully dynamic settings and whether one can get a worst-case update time.

\paragraph{Lower Bounds.}
We have shown a conditional lower bound for a class of algorithms against an adaptive adversary in \Cref{thm:lower}. It would also be very exciting to generalize our lower bound to hold for any algorithm against an adaptive adversary, as that would imply an oblivious-vs-adaptive separation for a natural dynamic problem. 

\paragraph{Incremental Updates.} We believe that the corresponding incremental updates problem, i.e., we update the matrix as $\AA_t \gets \AA_{t-1}+\vv_t\vv_t^{\top}$ cannot be solved in polylogarithmic amortized update time, even when the update sequence $\vv_t$'s are from an oblivious adversary. 
At a high level, the incremental version of our problem seems significantly harder for the following reasons. When we perform decremental updates to a matrix, the new maximum eigenvector must be a part of the eigenspace spanned by the original maximum eigenvectors. Furthermore, it is easy to detect whether the maximum eigenvalue has gone down as we have shown in our paper. For the incremental setting, it is possible that after an update the maximum eigenvalue has gone up and the new maximum eigenvector is a direction that was not the previous one or the update direction and in such cases we cannot really detect this quickly with known information on previous eigenvectors and update directions. This can also happen $n$ times and in every such case, we have to compute the eigenvalue and eigenvector from scratch. Therefore, we leave lower bounds and algorithms for incremental setting as an open problem. 

\paragraph{Dynamic SDPs.}
As discussed in \Cref{sec:dyn psd}, \Cref{thm:upper} can be viewed as a starting point towards a dynamic algorithm for general positive semi-definite programs. Can we make further progress? The dynamic semi-definite program problem, even with just two matrix constraints, already seems to be challenging. 

One promising approach to attack this problem is to dynamize the matrix multiplicative weight update (MMWU) method for solving a packing/covering SDP \cite{peng2012faster} since the corresponding approach was successful for linear programs -- the near-optimal algorithms of \cite{bhattacharya2023dynamic} are essentially dynamized multiplicative weight update methods for positive linear programs. However, in our preliminary study exploring this approach, we could only obtain an algorithm that solves Problem~\ref{prob:dynsdp}, which has a single matrix constraint, and  solves~\Cref{prob:dyn} partially, i.e., maintains an approximate eigenvalue only. The main barrier in this approach is that the algorithm requires maintaining the exponential of the sum of the constraint matrices, and to do this fast, we require that for any two constraint matrices $\AA$ and $\BB$, $e^{\AA+\BB}= e^{\AA} e^{\BB}$ which only holds when $\AA$ and $\BB$ commute i.e., $\AA\BB = \BB\AA$. Note that when $\AA$ and $\BB$ are diagonal, this is true; therefore, we can obtain the required algorithms for positive LPs. Even when we have just two constraint matrices where one of them is a diagonal matrix, this remains an issue as the matrices still do not commute. 

	 \newpage
	\printbibliography

@string{focs = {Proceedings of the {IEEE} Symposium on Foundations of Computer Science (FOCS)}}

@string{icalp = {Proceedings of the International Colloquium on Automata, Languages and Programming (ICALP)}}

@string{soda = {Proceedings of the  {ACM-SIAM} Symposium on Discrete Algorithms (SODA)}}

@inproceedings{beimel2022dynamic,
	title={Dynamic algorithms against an adaptive adversary: Generic constructions and lower bounds},
	author={Beimel, Amos and Kaplan, Haim and Mansour, Yishay and Nissim, Kobbi and Saranurak, Thatchaphol and Stemmer, Uri},
	booktitle={Proceedings of the 54th Annual ACM SIGACT Symposium on Theory of Computing},
	pages={1671--1684},
	year={2022}
}

@inproceedings{luby1993parallel,
  title={A parallel approximation algorithm for positive linear programming},
  author={Luby, Michael and Nisan, Noam},
  booktitle={Proceedings of the twenty-fifth annual ACM symposium on Theory of computing},
  pages={448--457},
  year={1993}
}

@article{trevisan1998parallel,
  title={Parallel approximation algorithms by positive linear programming},
  author={Trevisan, Luca},
  journal={Algorithmica},
  volume={21},
  number={1},
  pages={72--88},
  year={1998},
  publisher={Springer}
}

@article{plotkin1995fast,
  title={Fast approximation algorithms for fractional packing and covering problems},
  author={Plotkin, Serge A and Shmoys, David B and Tardos, {\'E}va},
  journal={Mathematics of Operations Research},
  volume={20},
  number={2},
  pages={257--301},
  year={1995},
  publisher={INFORMS}
}

@inproceedings{quanrud2020nearly,
  title={Nearly linear time approximations for mixed packing and covering problems without data structures or randomization},
  author={Quanrud, Kent},
  booktitle={Symposium on Simplicity in Algorithms},
  pages={69--80},
  year={2020},
  organization={SIAM}
}

@inproceedings{wang2016unified,
  title={Unified Acceleration Method for Packing and Covering Problems via Diameter Reduction},
  author={Wang, Di and Rao, Satish and Mahoney, Michael W},
  booktitle={43rd International Colloquium on Automata, Languages, and Programming (ICALP 2016)},
  year={2016},
  organization={Schloss Dagstuhl-Leibniz-Zentrum fuer Informatik}
}

@article{allen2019nearly,
  title={Nearly linear-time packing and covering lp solvers: Achieving width-independence and-convergence},
  author={Allen-Zhu, Zeyuan and Orecchia, Lorenzo},
  journal={Mathematical Programming},
  volume={175},
  pages={307--353},
  year={2019},
  publisher={Springer}
}

@inproceedings{allen2015nearly,
  title={Nearly-linear time positive LP solver with faster convergence rate},
  author={Allen-Zhu, Zeyuan and Orecchia, Lorenzo},
  booktitle={Proceedings of the forty-seventh annual ACM symposium on Theory of Computing},
  pages={229--236},
  year={2015}
}

@inproceedings{orecchia2012approximating,
  title={Approximating the exponential, the Lanczos method and an O (m)-time spectral algorithm for balanced separator},
  author={Orecchia, Lorenzo and Sachdeva, Sushant and Vishnoi, Nisheeth K},
  booktitle={Proceedings of the forty-fourth annual ACM symposium on Theory of computing},
  pages={1141--1160},
  year={2012}
}

@article{iyengar2011approximating,
  title={Approximating semidefinite packing programs},
  author={Iyengar, Garud and Phillips, David J and Stein, Clifford},
  journal={SIAM Journal on Optimization},
  volume={21},
  number={1},
  pages={231--268},
  year={2011},
  publisher={SIAM}
}

@inproceedings{iyengar2010feasible,
  title={Feasible and accurate algorithms for covering semidefinite programs},
  author={Iyengar, Garud and Phillips, David J and Stein, Cliff},
  booktitle={Scandinavian Workshop on Algorithm Theory},
  pages={150--162},
  year={2010},
  organization={Springer}
}

@inproceedings{klein1996efficient,
  title={Efficient approximation algorithms for semidefinite programs arising from MAX CUT and COLORING},
  author={Klein, Philip and Lu, Hsueh-I},
  booktitle={Proceedings of the twenty-eighth annual ACM symposium on Theory of computing},
  pages={338--347},
  year={1996}
}

@inproceedings{trevisan2009max,
  title={Max cut and the smallest eigenvalue},
  author={Trevisan, Luca},
  booktitle={Proceedings of the forty-first annual ACM symposium on Theory of computing},
  pages={263--272},
  year={2009}
}

@inproceedings{lee2017sdp,
  title={An sdp-based algorithm for linear-sized spectral sparsification},
  author={Lee, Yin Tat and Sun, He},
  booktitle={Proceedings of the 49th annual acm sigact symposium on theory of computing},
  pages={678--687},
  year={2017}
}

@inproceedings{allen2016using,
  title={Using optimization to obtain a width-independent, parallel, simpler, and faster positive SDP solver},
  author={Allen-Zhu, Zeyuan and Lee, Yin Tat and Orecchia, Lorenzo},
  booktitle={Proceedings of the twenty-seventh annual ACM-SIAM symposium on Discrete algorithms},
  pages={1824--1831},
  year={2016},
  organization={SIAM}
}

@misc{jambulapati2021positive,
      title={Positive Semidefinite Programming: Mixed, Parallel, and Width-Independent}, 
      author={Arun Jambulapati and Yin Tat Lee and Jerry Li and Swati Padmanabhan and Kevin Tian},
      year={2021},
      eprint={2002.04830},
      archivePrefix={arXiv},
      primaryClass={cs.DS}
}

@inproceedings{sankowski2004dynamic,
  title={Dynamic transitive closure via dynamic matrix inverse},
  author={Sankowski, Piotr},
  booktitle={45th Annual IEEE Symposium on Foundations of Computer Science},
  pages={509--517},
  year={2004},
  organization={IEEE}
}

@article{golub1973some,
  title={Some modified matrix eigenvalue problems},
  author={Golub, Gene H},
  journal={SIAM review},
  volume={15},
  number={2},
  pages={318--334},
  year={1973},
  publisher={SIAM}
}

@book{trefethen2022numerical,
  title={Numerical linear algebra},
  author={Trefethen, Lloyd N and Bau, David},
  volume={181},
  year={2022},
  publisher={Siam}
}

@inproceedings{swartworth2023optimal,
  title={Optimal Eigenvalue Approximation via Sketching},
  author={Swartworth, William and Woodruff, David P},
  booktitle={Proceedings of the 55th Annual ACM Symposium on Theory of Computing},
  pages={145--155},
  year={2023}
}

@article{bhattacharjee2023universal,
  title={Universal Matrix Sparsifiers and Fast Deterministic Algorithms for Linear Algebra},
  author={Bhattacharjee, Rajarshi and Dexter, Gregory and Musco, Cameron and Ray, Archan and Woodruff, David P},
  journal={arXiv preprint arXiv:2305.05826},
  year={2023}
}

@inproceedings{bateni2023optimal,
	title={Optimal Fully Dynamic k-Center Clustering for Adaptive and Oblivious Adversaries},
	author={Bateni, MohammadHossein and Esfandiari, Hossein and Fichtenberger, Hendrik and Henzinger, Monika and Jayaram, Rajesh and Mirrokni, Vahab and Wiese, Andreas},
	booktitle={Proceedings of the 2023 Annual ACM-SIAM Symposium on Discrete Algorithms (SODA)},
	pages={2677--2727},
	year={2023},
	organization={SIAM}
}

@article{laurent2000adaptive,
  title={Adaptive estimation of a quadratic functional by model selection},
  author={Laurent, Beatrice and Massart, Pascal},
  journal={Annals of statistics},
  pages={1302--1338},
  year={2000},
  publisher={JSTOR}
}

@inproceedings{bhattacharya2023dynamic,
  title={Dynamic algorithms for packing-covering LPs via multiplicative weight updates},
  author={Bhattacharya, Sayan and Kiss, Peter and Saranurak, Thatchaphol},
  booktitle={Proceedings of the 2023 Annual ACM-SIAM Symposium on Discrete Algorithms (SODA)},
  pages={1--47},
  year={2023},
  organization={SIAM}
}

@inproceedings{peng2012faster,
  title={Faster and simpler width-independent parallel algorithms for positive semidefinite programming},
  author={Peng, Richard and Tangwongsan, Kanat and Zhang, Peng},
  booktitle={Proceedings of the twenty-fourth annual ACM symposium on Parallelism in algorithms and architectures},
  pages={101--108},
  year={2012}
}

@inproceedings{arora2007combinatorial,
  title={A combinatorial, primal-dual approach to semidefinite programs},
  author={Arora, Sanjeev and Kale, Satyen},
  booktitle={Proceedings of the thirty-ninth annual ACM symposium on Theory of computing},
  pages={227--236},
  year={2007}
}

@article{mitz2019symmetric,
  title={Symmetric rank-one updates from partial spectrum with an application to out-of-sample extension},
  author={Mitz, Roy and Sharon, Nir and Shkolnisky, Yoel},
  journal={SIAM Journal on Matrix Analysis and Applications},
  volume={40},
  number={3},
  pages={973--997},
  year={2019},
  publisher={SIAM}
}

@article{arbenz1988restricted,
  title={Restricted rank modification of the symmetric eigenvalue problem: Theoretical considerations},
  author={Arbenz, Peter and Gander, Walter and Golub, Gene H},
  journal={Linear Algebra and its Applications},
  volume={104},
  pages={75--95},
  year={1988},
  publisher={Elsevier}
}

@article{bunch1978rank,
  title={Rank-one modification of the symmetric eigenproblem},
  author={Bunch, James R and Nielsen, Christopher P and Sorensen, Danny C},
  journal={Numerische Mathematik},
  volume={31},
  number={1},
  pages={31--48},
  year={1978},
  publisher={Springer}
}

@inproceedings{stange2008efficient,
  title={On the efficient update of the singular value decomposition},
  author={Stange, Peter},
  booktitle={PAMM: Proceedings in Applied Mathematics and Mechanics},
  volume={8},
  number={1},
  pages={10827--10828},
  year={2008},
  organization={Wiley Online Library}
}

@inproceedings{allen2017first,
  title={First efficient convergence for streaming $k$-pca: a global, gap-free, and near-optimal rate},
  author={Allen-Zhu, Zeyuan and Li, Yuanzhi},
  booktitle={2017 IEEE 58th Annual Symposium on Foundations of Computer Science (FOCS)},
  pages={487--492},
  year={2017},
  organization={IEEE}
}

@inproceedings{van2019dynamic,
	title={Dynamic matrix inverse: Improved algorithms and matching conditional lower bounds},
	author={van den Brand, Jan and Nanongkai, Danupon and Saranurak, Thatchaphol},
	booktitle={2019 IEEE 60th Annual Symposium on Foundations of Computer Science (FOCS)},
	pages={456--480},
	year={2019},
	organization={IEEE}
}

@article{banks2022pseudospectral,
	title={Pseudospectral shattering, the sign function, and diagonalization in nearly matrix multiplication time},
	author={Banks, Jess and Garza-Vargas, Jorge and Kulkarni, Archit and Srivastava, Nikhil},
	journal={Foundations of Computational Mathematics},
	pages={1--89},
	year={2022},
	publisher={Springer}
}

	 \newpage
	\appendix
	\section{Connections to Dynamic Positive Semi-definite Programs}
\label{sec:dyn psd}
This section discusses connections between our \Cref{prob:dyn} and the dynamic versions of positive semi-definite programs. Using this connection, we conclude that \Cref{thm:upper} implies a dynamic algorithm for a special case of the dynamic covering SDP problem. 

We first define packing and covering semi-definite programs (SDPs).\footnote{Some papers \cite{jambulapati2021positive} flip our definition of packing and covering SDPs by considering their dual form.}

\begin{definition}[Packing/Covering SDP] Let $\CC, \AA_i$'s for $i = 1,2,...,m$ be $n\times n$ symmetric PSD matrices and $b_i$'s denote positive real numbers. The packing SDP problem asks to find
\[
\max_{\YY\succeq 0} \Tr[\CC\YY]  \quad \text{s.t.  } \Tr[\AA_i\YY]\leq b_i, \quad\forall i = 1,\cdots, m.
\]
The covering SDP problem asks to find
\[
\min_{\YY\succeq 0} \Tr[\CC\YY]  \quad \text{s.t.  } \Tr[\AA_i\YY]\geq b_i, \quad\forall i = 1,\cdots, m.
\]
\end{definition}
Note that when the matrices $\CC$ and $\AA_i$'s are all diagonal matrices, the packing and covering SDP problems above are precisely the well-studied packing and covering LP problems, respectively. Near-linear time algorithms for $(1+\eps)$-approximately solving packing and covering LPs are very well-studied in a long line of work, some of which are~\cite{allen2015nearly,allen2019nearly,wang2016unified,quanrud2020nearly}, and these problems have many applications, such as in graph embedding~\cite{plotkin1995fast}, approximation algorithms~\cite{luby1993parallel,trevisan1998parallel}, scheduling~\cite{plotkin1995fast}, to name a few.

\paragraph{Dynamic LPs.}
Near-optimal dynamic algorithms for packing and covering LPs were shown in~\cite{bhattacharya2023dynamic}. The paper studies two kinds of updates -- {\it restricting} and {\it relaxing} updates.
Restricting updates can only shrink the feasible region. In contrast, relaxing updates can only grow the feasible region.
In \cite{bhattacharya2023dynamic}, the authors gave a deterministic algorithm that can maintain a $(1+\epsilon)$-approximate solution to either packing and covering LPs that undergo only restricting updates or only relaxing updates in total time $\Otil(N/\epsilon^3 + t/\epsilon)$, where $N$ is the total number of nonzeros in the initial input and the updates, and $t$ is the number of updates. Hence, this is optimal up to logarithmic factors. 

A natural question is whether one can generalize the near-optimal dynamic LP algorithms with polylogarithmic overhead by \cite{bhattacharya2023dynamic} to work with SDPs since SDPs capture many further applications such as maximum cuts~\cite{iyengar2011approximating,klein1996efficient}, Sparse PCA~\cite{iyengar2011approximating}, sparsest cuts~\cite{iyengar2010feasible}, and  balanced separators~\cite{orecchia2012approximating}, among many others.

\paragraph{Static SDPs.}
Unfortunately, the algorithms for solving packing and covering SDPs are much more limited, even in the static setting. Near-linear time algorithms are known only for \emph{covering} SDPs when the cost matrix $\CC = \II$ is the identity \cite{peng2012faster,allen2016using}. 

The fundamental barrier to working with general psd matrix $\CC$ in covering SDPs is that it is as hard as approximating the minimum eigenvalue of $\CC$ (consider the program $\max_{\YY\succeq 0} \Tr[\CC\YY]$ such that $\Tr[\YY]\leq 1$). To the best of our knowledge, near-linear time algorithms for approximating the minimum eigenvalue assume a near-linear-time solver for $\CC$, i.e., to compute $\CC^{-1}\xx$ in the near-linear time given $\xx$. This can be done, for example, by applying the power method to $\CC^{-1}$. When $\CC^{-1}$ admits a fast solver, sometimes one can approximately solve a covering SDP fast, such as for spectral sparsification of graphs~\cite{lee2017sdp}, and the max-cut problem~\cite{arora2007combinatorial,trevisan2009max}.

For packing SDPs, there is simply no near-linear time algorithm known. An algorithm for approximately solving packing SDPs and even the generalization to mixed packing-covering SDPs was claimed in \cite{jambulapati2021positive}, but there is an issue in the convergence analysis even for pure packing SDPs. Fast algorithms for this problem, hence, remain open.

\paragraph{Dynamic SDPs.}
Since near-linear time static algorithms are prerequisites for dynamic algorithms with polylogarithmic overhead, we can only hope for a dynamic algorithm for covering SDPs when $\CC$ is an identity. 
Below, we will show that our algorithm for \Cref{thm:upper} implies a dynamic algorithm for maintaining the covering SDP solution when there is a single constraint and the updates are restricting. This follows because this problem is equivalence to \Cref{prob:dyn}.

We first define the dynamic covering problem with a single constraint under restricting updates.

\begin{problem}[Covering SDP with a Single Matrix Constraint under Restricting Updates]\label{prob:dynsdp} Given $\AA_0\succeq 0$, an accuracy parameter $\epsilon>0$, and an online sequence of vectors $\vv_1,\vv_2,\cdots,\vv_T$ that update $\AA_t \gets \AA_t -\vv_t\vv_t^{\top}$ such that $\AA_t\succeq 0$. The problem asks to explicitly maintain, for all $t$, an $(1+\eps)$-approximate optimal value $\nu_t$ of the SDP, i.e.,  $$\nu_t \leq (1+\epsilon)OPT_t \defeq {\min}_{\YY\succeq 0, \Tr[\AA_t\YY]\geq 1}  \Tr[\YY].$$
Furthermore, given a query request, return a matrix $\QQ^{(t)}$ where $\YY^{(t)} = \QQ^{(t)}{\QQ^{(t)}}^{\top} \in \mathbb{R}^{n\times n}$ is a $(1+\eps)$-approximate optimal solution, i.e., 
    $$\Tr[\AA_t\YY^{(t)}]\geq 1\text{ and }\Tr[\YY^{(t)}]\leq \left(1+\epsilon\right) OPT_t.$$
\end{problem}
\Cref{prob:dynsdp} is equivalent to \Cref{prob:dyn} in the following sense: given an algorithm for \Cref{prob:dyn}, we can obtain an algorithm for \Cref{prob:dynsdp} with the same total update time and optimal query time. Conversely, given an algorithm for \Cref{prob:dynsdp}, we can obtain an algorithm for \Cref{prob:dyn} in the \emph{eigenvalue-only} version with the same total update time.

\begin{proposition}\label{prop:equiv sdp}
The following holds,
\begin{enumerate}
    \item Given an algorithm for \Cref{prob:dyn} with update time $\mathcal{T}$, there is an algorithm for Problem~\ref{prob:dynsdp} with update time 
 $\mathcal{T}$ and query time $O(n)$, i.e., the time required to query $\QQ^{(t)}$ at any time $t$. 
    \item Given an algorithm for Problem~\ref{prob:dynsdp} with update time $\widetilde{\mathcal{T}}$, there is an algorithm for \Cref{prob:dyn} that only maintains the approximate eigenvalues with update time at most $\widetilde{\mathcal{T}}$.
\end{enumerate}
\end{proposition}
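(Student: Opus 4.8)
The plan is to reduce both directions of the proposition to a single elementary identity: for any non-zero PSD matrix $\AA$, the one-constraint covering SDP value is
\[
OPT(\AA) \defeq \min_{\YY\succeq 0,\ \Tr[\AA\YY]\ge 1}\Tr[\YY] \;=\; \frac{1}{\lambda_{\max}(\AA)},
\]
attained by the rank-one matrix $\frac{1}{\lambda_{\max}(\AA)}\uu_1\uu_1^{\top}$, where $\uu_1$ is a top eigenvector of $\AA$. First I would prove this identity. The lower bound $OPT(\AA)\ge 1/\lambda_{\max}(\AA)$ is immediate from $1\le \Tr[\AA\YY]\le \lambda_{\max}(\AA)\,\Tr[\YY]$ for any feasible $\YY$. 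For the upper bound I would record the more useful statement that for \emph{every} unit vector $\ww$ with $\ww^{\top}\AA\ww>0$, the matrix $\YY_{\ww}\defeq \frac{1}{\ww^{\top}\AA\ww}\ww\ww^{\top}$ is feasible with value $\Tr[\YY_{\ww}] = \frac{1}{\ww^{\top}\AA\ww}$; plugging in $\ww=\uu_1$ gives the bound. The degenerate case $\AA=0$, where the constraint is infeasible, is detected by $\lambda_{\max}(\AA)=0$ and reported explicitly in both directions, so I set it aside.

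For the first direction, I would run the given \Cref{prob:dyn} algorithm $\mathcal{B}$ with accuracy $\epsilon/4$, so at time $t$ it maintains $\lambda_t$ with $(1-\epsilon/4)\lambda_{\max}(\AA_t)\le \lambda_t\le \lambda_{\max}(\AA_t)$ and a unit vector $\ww_t$ with $\ww_t^{\top}\AA_t\ww_t\ge (1-\epsilon/4)\lambda_{\max}(\AA_t)\ge (1-\epsilon/4)\lambda_t$. I would explicitly maintain $\nu_t\defeq \frac{1}{(1-\epsilon/4)\lambda_t}$, which is $O(1)$ work per update on top of $\mathcal{B}$, so the update time stays $\mathcal{T}$; by the identity, $OPT_t\le \nu_t\le \frac{1}{(1-\epsilon/4)^2}OPT_t\le (1+\epsilon)OPT_t$. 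On a query I would return the single column $\QQ^{(t)}\defeq \frac{1}{\sqrt{(1-\epsilon/4)\lambda_t}}\,\ww_t\in\mathbb{R}^{n\times 1}$, so that $\YY^{(t)}=\QQ^{(t)}{\QQ^{(t)}}^{\top}$ obeys $\Tr[\AA_t\YY^{(t)}] = \frac{\ww_t^{\top}\AA_t\ww_t}{(1-\epsilon/4)\lambda_t}\ge 1$ and $\Tr[\YY^{(t)}]=\nu_t\le (1+\epsilon)OPT_t$; writing this vector out from the maintained $\ww_t$ costs $O(n)$.

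For the second direction, I would run the given \Cref{prob:dynsdp} algorithm $\mathcal{C}$ with accuracy $\epsilon$, reading its explicitly maintained value as an approximate optimum in the standard two-sided sense $OPT_t\le \nu_t\le (1+\epsilon)OPT_t$, and set $\lambda_t\defeq 1/\nu_t$. Since $OPT_t = 1/\lambda_{\max}(\AA_t)$, this gives $\frac{\lambda_{\max}(\AA_t)}{1+\epsilon}\le \lambda_t\le \lambda_{\max}(\AA_t)$, and because $\frac{1}{1+\epsilon}\ge 1-\epsilon$ this is exactly the guarantee~\eqref{eq:epsEigvalue}; the conversion is $O(1)$ per update, so the update time is $\widetilde{\mathcal{T}}$ up to an additive constant. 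No eigenvector is produced, consistent with the statement, because $\mathcal{C}$ maintains only $\nu_t$ between queries (querying $\QQ^{(t)}$ would recover directions but at query cost).

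I do not expect a real obstacle here: the content is just the identity $OPT(\AA)=1/\lambda_{\max}(\AA)$ together with the observation that a near-optimal eigenvector and a near-optimal rank-one SDP solution are the same object up to scaling. The only things needing care are propagating the $(1+\epsilon)$ factors — handled by running the source algorithm at a constant fraction of $\epsilon$ and absorbing constants — and the degenerate case $\AA_t=0$; both are routine bookkeeping rather than a genuine difficulty.
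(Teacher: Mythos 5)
Your proposal is correct and follows essentially the same route as the paper: establish $OPT(\AA_t)=1/\lambda_{\max}(\AA_t)$ with a scaled rank-one eigenvector matrix as the (near-)optimizer, convert an approximate eigenpair $(\lambda_t,\ww_t)$ into $\QQ^{(t)}=\frac{1}{\sqrt{(1-\Theta(\epsilon))\lambda_t}}\ww_t$ for one direction, and set $\lambda_t=1/\nu_t$ for the other. Your only deviations are cosmetic (running at accuracy $\epsilon/4$ rather than $\epsilon/2$ and explicitly treating the degenerate case $\AA_t=0$), so no further changes are needed.
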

\begin{proof}
  Let us first characterize the solution to Problem~\ref{prob:dynsdp} at any $t$ for $\epsilon=0$. Since any feasible solution $\YY$ is PSD it can be characterized as, $\YY = \sum_{i=1}^n p_i \yy_i\yy_i^{\top}$ for some unit vectors $\yy_i$'s. Now $\Tr[\YY] = \sum_i p_i$ and
  \[
 1\leq  \Tr[\AA_t\YY] = \sum_{i=1}^n p_i \Tr[\AA_t \yy_i\yy_i^{\top}] = \sum_{i=1}^n p_i \yy_i^{\top}\AA_t\yy_i \leq \lambda_{\max}(\AA_t) \sum_{i=1}^n p_i.
  \]
  The above implies that for any feasible solution $\YY^{(t)}$, it must hold that $\sum_{i=1}^n p_i\geq 1/\lambda_{\max}(\AA_t)$, and equality holds iff $\yy_i$'s are the maximum eigenvector of $\AA_t$ for all $t$. Since the problem is asking to minimize $\sum_{i=1}^n p_i$, the solution must be $\YY^{(t)} = \frac{1}{\lambda_{\max}(\AA_t)}\uu\uu^{\top}$ where $\uu$ is the maximum eigenvector of $\AA_t$. Note that here $\QQ^{(t)} = \frac{1}{\sqrt{\lambda_{\max}(\AA_t)}}\uu$. 

  We now show the first part, by proving that at any $t$, given $\epsilon$, the solution to \Cref{prob:dyn} gives a solution to Problem~\ref{prob:dynsdp}. Let $\lambda_t$ and $\ww_t$ denote an $\epsilon/2$-approximate solution to \Cref{prob:dyn} for some $t$. Consider the solution $\QQ^{(t)} = \frac{1}{\sqrt{(1-\epsilon/2)\lambda_t}}\ww_t$ which gives $\YY^{(t)} = \frac{1}{(1-\epsilon/2)\lambda_t}\ww_t\ww_t^{\top}$. Then,
  \[\Tr[\AA_t\YY^{(t)}] = \frac{1}{(1-\epsilon/2)\lambda_t} \ww_t^{\top}\AA_t\ww_t \geq (1-\epsilon/2)\frac{\lambda_{\max}(\AA_t)}{(1-\epsilon/2)\lambda_{\max}(\AA)}\geq 1.\]
Therefore, $\YY^{(t)}$ satisfies the constraints of Problem~\ref{prob:dynsdp}. Next we look at the objective value, $\nu_t = \Tr[\YY^{(t)}] = \frac{1}{(1-\epsilon/2)\lambda_t} \leq \frac{1}{(1-\epsilon/2)^2\lambda_{\max}(\AA_t)} \leq (1+\epsilon) OPT_t.$ We can maintain $\QQ^{(t)}$ by just maintaining $\ww_t,\lambda_t$ which requires no extra time. The time required to obtain $\QQ^{(t)}$, which is the query time, from $\ww_t$ and $\lambda_t$ is at most $O(nnz(\ww_t))=O(n)$ and the value of $\nu_t$ can be obtained in $O(1)$ time. 

To see the other direction, consider the solution $\QQ^{(t)}$, $\nu_t$ of Problem~\ref{prob:dynsdp}. We can set $\lambda_t = \frac{1}{\nu_t}$. This implies that,
\[
\lambda_t = \frac{1}{\nu_t}\geq \frac{\lambda_{\max}(\AA_t)}{(1+\epsilon)} \geq (1-\epsilon)\lambda_{\max}(\AA_t),
\]
as required. In this case, we do not require any extra time.
\end{proof}

By plugging \Cref{thm:upper} into \Cref{prop:equiv sdp}, we conclude the following.
\begin{corollary} There is a randomized algorithm for \Cref{prob:dynsdp} under a sequence of $T$ restricting updates, that given $n, \AA_0$ and $\epsilon>1/n$ as input, with probability at least $1-1/n$ works against an oblivious adversary in total update time
\[
 O\left(\frac{\log^{3}n \log^6 \frac{n}{\epsilon}\log \frac{\lambda_{\max}(\AA_0)}{\lambda_{\max}(\AA_T)}}{\epsilon^{4}}\left(nnz(\AA_0) + \sum_{i=1}^T nnz(\vv_i)\right) \right),
 \]
and query time $O(n)$.
\end{corollary}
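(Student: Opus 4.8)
The plan is to obtain the corollary by a direct, essentially bookkeeping, combination of \Cref{thm:upper} with the first item of \Cref{prop:equiv sdp}. The first thing I would check is that the update model in \Cref{prob:dynsdp} is literally the same as in \Cref{prob:dyn}: both receive an online sequence $\vv_1,\dots,\vv_T$ and form $\AA_t \gets \AA_{t-1} - \vv_t\vv_t^\top$ under the promise $\AA_t \succeq 0$. Hence an oblivious adversary for the SDP problem is an oblivious adversary for \Cref{prob:dyn} and vice versa, and there is no mismatch to reconcile.

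Next, I would instantiate \Cref{thm:upper}, with accuracy parameter $\eps/2$ since that is what the reduction of \Cref{prop:equiv sdp} feeds it: as $\eps>1/n$, replacing $\eps$ by $\eps/2$ stays within the admissible range up to a harmless relabelling and only multiplies the $\poly(\log(n/\eps)/\eps)$ prefactor by a constant, so the stated asymptotic bound is unchanged. \Cref{thm:upper} then gives a randomized algorithm for \Cref{prob:dyn} that, against an oblivious adversary, succeeds with probability at least $1-1/n$ and runs in total time (equivalently, amortized update time)
\[
\mathcal{T} \;=\; O\!\left(\frac{\log^{3}n\,\log^6\frac{n}{\epsilon}\,\log\frac{\lambda_{\max}(\AA_0)}{\lambda_{\max}(\AA_T)}}{\epsilon^{4}}\Big(\nnz(\AA_0)+\textstyle\sum_{i=1}^{T}\nnz(\vv_i)\Big)\right).
\]

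Finally, I would apply item~1 of \Cref{prop:equiv sdp}, which is a deterministic reduction converting any algorithm for \Cref{prob:dyn} with update time $\mathcal{T}$ into one for \Cref{prob:dynsdp} with update time $\mathcal{T}$ and query time $O(n)$: it simply stores the maintained pair $(\lambda_t,\ww_t)$ and, on a query, outputs $\QQ^{(t)} = \frac{1}{\sqrt{(1-\eps/2)\lambda_t}}\,\ww_t$ in $O(\nnz(\ww_t)) = O(n)$ time together with $\nu_t = \frac{1}{(1-\eps/2)\lambda_t}$ in $O(1)$ time. Because this reduction uses no randomness and imposes no further condition on the update sequence, both the success probability $1-1/n$ and the oblivious-adversary guarantee pass through unchanged, yielding exactly the claimed statement. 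I do not expect any genuine obstacle here; the only things worth verifying carefully are the $\eps \leftrightarrow \eps/2$ rescaling (absorbed into constants) and that ``total update time'' in the corollary matches the ``total time'' accounting of \Cref{thm:upper}, which it does.
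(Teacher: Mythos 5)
Your proposal is correct and matches the paper's proof, which is exactly the one-line combination of \Cref{thm:upper} with item~1 of \Cref{prop:equiv sdp}; your extra care about the $\eps\mapsto\eps/2$ rescaling and the deterministic nature of the reduction is sound but the paper leaves it implicit.
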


\section{Omitted Proofs}
\Decision*
\begin{proof}
We first estimate the maximum eigenvalue of $\AA_0$. This can be done via the standard power method, or running {\sc PowerMethod}$(\epsilon/4\log n, \AA_0)$ until Line~\ref{line:before case} and returning $\ww\in \WW$ such that $\ww= \arg\max_{\ww\in \WW}\ww^{\top}\AA_0\ww$. From Lemma~\ref{thm:StaticPowerMain}, with probability at least $1-1/n^{10}$, $\ww$ satisfies $\nu = \ww^{\top}\AA_0\ww\geq (1-\epsilon/\log n)\lambda_{\max}(\AA_0)$. Now, $\AA_0/\nu$ has maximum eigenvalue at most $1+\frac{\epsilon}{\log n}$. This procedure takes time at most $O\left(\frac{\log^2 n\log\frac{n}{\epsilon}}{\epsilon}\cdot nnz(\AA_0)\right)$.

   For the sequence of updates, let $t_1,t_2,\cdots,t_k$ denote the time steps where the maximum eigenvalue decreases by a factor of at least $1-\epsilon/\log n$. We begin with solving {\sc DecMaxEV}$(\epsilon,\frac{\AA_0}{\nu},\frac{\vv_1}{\sqrt{\nu}},\cdots, \frac{\vv_T}{\sqrt{\nu}})$ using $\mathcal{A}$.  The algorithm returns $\ww_1,\ww_2,\cdots,\ww_{t_1-1},${\sc False},$\cdots$. Now $\nu$ is an $\epsilon$-approximate maximum eigenvalue for $\AA_1,\cdots \AA_{t_1-1}$, and the vectors $\ww_1,\cdots,\ww_{t_1-1}$ are the required eigenvectors for $\AA_1,\cdots \AA_{t_1-1}$ respectively. We also know that, $\lambda_{\max}(\AA_{t_1})\leq \nu\left(1-\frac{\epsilon}{\log n}\right)$. The total time taken so far is, $O(\frac{\log^2 n\log\frac{n}{\epsilon}}{\epsilon}\cdot nnz(\AA_0) + \mathcal{T})$. 

   Since, $\lambda_{\max}\left(\frac{\AA_{t_1}}{\nu(1-\frac{\epsilon}{\log n})}\right)\leq 1$, again solve {\sc DecMaxEV}$(\epsilon,\frac{\AA_{t_1}}{\nu(1-\epsilon/\log n)},\frac{\vv_{t_1}}{\sqrt{\nu(1-\frac{\epsilon}{\log n})}},\cdots, \frac{\vv_T}{\sqrt{\nu(1-\frac{\epsilon}{\log n})}})$ using $\mathcal{A}$. This time, the algorithm returns $\ww_{t_1+1},\ww_{t_1+2},\cdots, \ww_{t_2-1},${\sc False},$\cdots$. We can now repeat this process starting at $t_2$ until $t_k$.

   The total number of calls to $\mathcal{A}$ can be bounded by the number of times the maximum eigenvalue decreases by a factor of $1-\frac{\epsilon}{\log n}$. This can happen at most $\frac{\log n}{\epsilon}\log \frac{\lambda_{\max}(\AA_0)}{\lambda_{\max}(\AA_T)}$ times. Therefore, the total time required is at most $O\left(\frac{\log^2 n\log\frac{n}{\epsilon}}{\epsilon}\cdot nnz(\AA_0) + \frac{\log n}{\epsilon}\log \frac{\lambda_{\max}(\AA_0)}{\lambda_{\max}(\AA_T)}\mathcal{T}\right)$.
\end{proof}

\subsection*{Power Method}

The following lemma proves that at least one initial random vector has a component along the eigenspace of the maximum eigenvalue.
\begin{lemma}\label{lem:PMHighComp}
Let $\uu_1,\cdots \uu_l$ denote the maximum eigenvectors of $\AA$ and let $\vv^{(0)} \in \mathbb{R}^n$ denote the vector with entries sampled indenpendently from $N(0,1)$, i.e., $\vv^{(0)} = \sum_{i=1}^n\alpha_i\uu_i$, where $\alpha_i \sim N(0,1)$, $\uu_i$'s are eigenvectors of $\AA$. Then, with probability at least $3/4$, $\sum_{i=1}^l\alpha_i^2 \geq \frac{1}{25}$.
\end{lemma}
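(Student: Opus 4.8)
The plan is to notice that this is really a one-dimensional statement: since every term $\alpha_i^2$ is nonnegative and there is at least one maximum eigenvector (so $l\ge 1$), we have $\sum_{i=1}^l \alpha_i^2 \ge \alpha_1^2$ pointwise, hence
\[
\Pr\left[\sum_{i=1}^l \alpha_i^2 \ge \tfrac{1}{25}\right] \ge \Pr\left[\alpha_1^2 \ge \tfrac{1}{25}\right] = \Pr\left[|\alpha_1| \ge \tfrac15\right].
\]
So it suffices to show $\Pr[|\alpha_1| < 1/5] \le 1/4$ for a standard normal $\alpha_1$.

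First I would bound this probability by the crude estimate that the standard normal density never exceeds $\tfrac{1}{\sqrt{2\pi}}$, exactly as in the proof of \Cref{lem:chi}. Concretely, writing $y=\alpha_1^2$ which has density $\frac{1}{\sqrt{2}\,\Gamma(1/2)} y^{-1/2} e^{-y/2}$ with $\Gamma(1/2)=\sqrt\pi$,
\[
\Pr\left[\alpha_1^2 \le \tfrac{1}{25}\right] = \int_0^{1/25} \frac{1}{\sqrt{2\pi}}\, y^{-1/2} e^{-y/2}\, dy \le \frac{1}{\sqrt{2\pi}} \int_0^{1/25} y^{-1/2}\, dy = \frac{1}{\sqrt{2\pi}}\cdot \frac{2}{5} < \frac14,
\]
since $\frac{2}{5\sqrt{2\pi}} \approx 0.16$. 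Equivalently, one can integrate the Gaussian density directly: $\Pr[|\alpha_1|<1/5] \le \frac{2}{5}\cdot\frac{1}{\sqrt{2\pi}} < \frac14$. Combining with the displayed inequality above gives $\Pr[\sum_{i=1}^l \alpha_i^2 \ge 1/25] \ge 3/4$, as claimed.

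I do not anticipate any real obstacle here; the statement is elementary and the only thing to be slightly careful about is the trivial domination step $\sum_{i\le l}\alpha_i^2 \ge \alpha_1^2$, which requires $l\ge 1$ — this holds because the maximum eigenvalue of $\AA$ has at least one eigenvector. (If one preferred, one could instead invoke that $\sum_{i=1}^l \alpha_i^2 \sim \chi^2_l$ stochastically dominates $\chi^2_1$, but the pointwise bound is cleaner and avoids quoting any distributional fact.)
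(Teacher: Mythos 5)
Your proposal is correct and is essentially the paper's argument: both reduce the claim to the one-dimensional fact that a standard normal exceeds $1/5$ in absolute value with probability at least $3/4$ --- the paper by projecting $\vv^{(0)}$ onto a unit vector $\uu$ in the span of $\uu_1,\dots,\uu_l$ and applying Cauchy--Schwarz, you by the pointwise domination $\sum_{i\le l}\alpha_i^2\ge\alpha_1^2$, which is just the special case $\uu=\uu_1$. The only real difference is cosmetic: the paper quotes standard Gaussian tables for $\Pr\left[|N(0,1)|\ge \tfrac15\right]\ge \tfrac34$, whereas you verify it with the explicit density bound $\Pr\left[|\alpha_1|<\tfrac15\right]\le \tfrac{2}{5\sqrt{2\pi}}<\tfrac14$, making the step self-contained.
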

\begin{proof}
Let $\lambda_1,\lambda_2,\cdots, \lambda_n$ denote the eigenvalues of $\AA$ in decreasing order corresponding to the eigenvectors $\uu_1,\cdots, \uu_n$.  Let $\uu$ be a vector in the span of $\uu_1,\cdots, \uu_l$. We can write, $\uu=\sum_{i=1}^l \beta_i\uu_i$ with $\|\beta\| = 1$. We note that the inner product $\langle \vv^{(0)},\uu\rangle = \sum_{i=1}^l \alpha_i \beta_i$ has mean $0$ and variance $\sum_i \Var(\alpha_i)\beta_i^2 = 1$ (since $\Var(\alpha_i) =1$ and $\sum_i \beta_i^2 = 1$), and is also a gaussian variable, and thus follows the distribution $N(0,1)$. Therefore from the standard gaussian tables, 
\[
\Pr\left[|\langle \vv^{(0)},\uu\rangle|\geq \frac{1}{5}\right] \geq \frac{3}{4}.
\]
Using Cauchy-Schwarz, $|\langle \vv^{(0)},\uu\rangle| \leq \left(\sum_{i=1}^l\alpha_i^2\right)^{1/2}$. As a result, we also have,
\[
\Pr\left[\left(\sum_{i=1}^l\alpha_i^2\right)^{1/2}\geq \frac{1}{5}\right] \geq \frac{3}{4}.
\] 
\end{proof}

\PowerMethod*
\begin{proof}

Algorithm~\ref{alg:PowerMethod} starts with a random vector $\vv^{(0,r)}$ (Line~\ref{algline:randomInit}). Let $l$ denote the dimension of the eigenspace corresponding to the maximum eigenvalue. We denote $\vv^{(0,r)} = \sum_i \alpha^{(r)}_i \uu_i$ and from Lemma~\ref{lem:PMHighComp}, we have that for all $r$, initial vectors,
\[
\Pr\left[\sum_{i=1}^l(\alpha^{(r)})_i^2 \geq \frac{1}{25}\right]\geq 3/4.
\]
We now analyze Algorithm~\ref{alg:PowerMethod}. After $K$ iterations, the algorithm computes,
\begin{align*}
\AA^K\vv^{(0,r)} & =\sum_{i =1 }^{n}\alpha^{(r)}_i \lambda_i^K \uu_i .
\end{align*}
  Note that, since $K = \frac{4\log \frac{n}{\epsilon}}{\epsilon}$, for every $\lambda_i \leq (1-\epsilon/4)\lambda_1,$ we have $\lambda_i^{2K} \leq \frac{\epsilon^{2}}{n^{2}}\lambda_1^{2K}$. Let $\tilde{l}$ denote the dimension of the space spanned by all eigenvectors of $\AA$ with corresponding eigenvalues at least $(1-\epsilon/4)\lambda_1$, we split the sum as,
\[
{\vv^{(0,r)}}^{\top}\AA^{2K+1}\vv^{(0,r)}  =\sum_{i =1 }^{\tilde{l}}(\alpha^{(r)})_i^2 \lambda_i^{2K+1} + \sum_{i >\tilde{l}}(\alpha^{(r)})_i^2 \lambda_i^{2K+1} .
\]
 Now,
 \begin{equation}\label{eq:pm1}
 (\ww^{(r)})^{\top}\AA\ww^{(r)} = \frac{(\AA^K\vv^{(0,r)})^{\top}\AA(\AA^K\vv^{(0,r)})}{\|\AA^K\vv^{(0,r)}\|^2}  = \frac{{\vv^{(0,r)}}^{\top}\AA^{2K+1}\vv^{(0,r)}}{\|\AA^K\vv^{(0,r)}\|^2}  \geq \frac{\sum_{i =1 }^{\tilde{l}}(\alpha^{(r)})^2_i \lambda_i^{2K+1} }{\|\AA^K\vv^{(0,r)}\|^2}.
 \end{equation}
We now want to bound $\|\AA^K\vv^{(0,r)}\|^2$. We again expand this quantity and use the bounds on $\lambda_i$'s to get, 
\[
\|\AA^K\vv^{(0,r)}\|^2 = \sum_{i =1 }^{n}(\alpha^{(r)})_i^2 \lambda_i^{2K} \leq \sum_{i =1 }^{\tilde{l}}(\alpha^{(r)})_i^2 \lambda_i^{2K} + \left(\frac{\epsilon}{n}\right)^{2}\lambda_1^{2K}\sum_{i>\tilde{l}}(\alpha^{(r)})_i^2 \leq \sum_{i =1 }^{\tilde{l}}(\alpha^{(r)})_i^2 \lambda_i^{2K}+\lambda_1^{2K}\left(\frac{\epsilon}{n}\right)^2\|\vv^{(0,r)}\|^2.
\]
Since $\vv^{(0,r)}$ has entries in $N(0,1)$, from \Cref{lem:NormG}, with probability at least $1-1/n^2$, $\|\vv^{(0,r)}\|^2 \leq 2n$ and as a result, with probability at least $1-1/n^2$, $\|\AA^K\vv^{(0,r)}\|^2 \leq \sum_{i =1 }^{\tilde{l}}(\alpha^{(r)})_i^2 \lambda_i^{2K}+\left(\frac{\epsilon}{n}\right)\lambda_1^{2K}$ when $n\geq 5$. Using this bound in \eqref{eq:pm1} and $\epsilon<1/20$,
\[
(\ww^{(r)})^{\top}\AA\ww^{(r)} \substack{(a)\\ \geq}  \frac{\lambda_{\tilde{l}}\sum_{i =1 }^{\tilde{l}}(\alpha^{(r)})^2_i \lambda_i^{2K}}{\sum_{i =1 }^{\tilde{l}}(\alpha^{(r)})^2_i \lambda_i^{2K} + \epsilon n^{-1}\lambda_1^{2K}} \substack{(b)\\ \geq} \frac{\lambda_{\tilde{l}}}{1+\frac{\epsilon n^{-1}}{\sum_{i=1}^l(\alpha^{(r)})_i^2} }\substack{(c)\\ \geq} \frac{(1-\epsilon/4)\lambda_1}{1+\epsilon/4}\substack{(d)\\\geq} (1-\epsilon/2)\lambda_1,
\]
with probability at least $3/5$. To see this, in $(a)$ we used that $\|\AA^K\vv^{(0,r)}\|^2 \leq \sum_{i =1 }^{\tilde{l}}(\alpha^{(r)})_i^2 \lambda_i^{2K}+\left(\frac{\epsilon}{n}\right)\lambda_1^{2K}$ with probability at least $1-1/n^2 $, in $(b)$ we used that $\sum_{i =1 }^{\tilde{l}}(\alpha^{(r)})^2_i \lambda_i^{2K} \geq \lambda_1^{2K}\sum_{i=1}^l (\alpha^{(r)})_i^2$, and in $(c)$ we used that $\sum_{i=1}^l(\alpha^{(r)})_i^2 \geq 1/25$ with probability at least $3/4$.

We next prove our concentration bound, i.e., for at least one $r \in \{1,\cdots, 10\log n\}$, $(\ww^{(r)})^{\top}\AA\ww^{(r)}\geq (1-\epsilon/2)\lambda_1$ with probability at least $1-1/n$. Observe that the probability that $(\ww^{(r)})^{\top}\AA\ww^{(r)}< (1-\epsilon/2)\lambda_1$ for one $r$ is at most $2/5$. Since we have $R = 10\log n$ values of $r$, the probability that all of them are such that $(\ww^{(r)})^{\top}\AA\ww^{(r)}< (1-\epsilon/2)\lambda_1$ is $\left(\frac{2}{5}\right)^{R} \leq \frac{1}{n^{10}}.$
Therefore, we must have $(\ww^{(r)})^{\top}\AA\ww^{(r)}\geq (1-\epsilon/2)\lambda_1$ for at least one $r$ with probability at least $1-1/n^{10}$.

It remains to prove the runtime. In every iteration, we are multiplying $\AA$ with a vector. This takes time $\nnz(\AA)$. The total number of iterations is $O(\log \frac{n}{\epsilon}/\epsilon)$ for every initial random vector $\vv^{(0,r)}$. We run the entire method $10\log n$ times, giving us a total runtime of $O(\nnz(\AA) \frac{\log^2 \frac{n}{\epsilon}}{\epsilon})$. In the end we compute $(\ww^{(r)})^{\top}\AA\ww^{(r)}$ for $r = 1,\cdots, 10\log n$. This takes an additional time of $O(\log n\cdot  nnz(\AA))$.

We now prove that for $i$ such that $\lambda_i \leq \lambda_1/2$, for some $r$, with probability at least $1-1/n^{10}$, ${\ww^{(r)}}^{\top}\uu_i \leq \frac{1}{n^8}\cdot \frac{\lambda_i}{\lambda_1}$. We use the same argument as above to get that for at least one $r$, $\sum_{i=1}^d(\alpha^{(r)})_i^2 \geq 1/25$ with probability at least $1-1/n^{10}$. For this $r$, and $\epsilon<1/20$,
\begin{align*}
(\ww^{(r)})^{\top}\uu_i & = \frac{\alpha_i\lambda_i^K}{\sqrt{\sum_{i=1}^n\alpha_i^2\lambda_i^{2K}}} \leq  \frac{\alpha_i\lambda_i^{K-1/2}\lambda_i^{1/2}}{\lambda_1^K\sqrt{\sum_{i=1}^d\alpha_i^2}}\leq 5\alpha_i \left(\frac{\lambda_i}{\lambda_1}\right)^{1/2}\left(\frac{1}{2}\right)^{K-1/2} \leq 5\alpha_i \left(\frac{\lambda_i}{\lambda_1}\right)^{1/2}(1-10\epsilon)^{K-1/2}.
\end{align*}
Now, $\alpha_i\leq 10\log n$ with probability at least $1-1/n^{10}$. Therefore, we have with probabiltiy at least $1-2/n^{10}$,
\[
\left[(\ww^{(r)})^{\top}\uu_i\right]^2 \leq 2500\frac{\lambda_i}{\lambda_1} \log^2 n (1-10\epsilon)^{2K-1} \leq \frac{2500 \epsilon^{10}\log n}{n^{10}}\frac{\lambda_i}{\lambda_1} \leq \frac{1}{n^8} \cdot \frac{\lambda_i}{\lambda_1}.
\]
\end{proof}

\end{document}